\def\wt{\hbox{\rm{wt}}}
\newtheorem{Model}{Model}
\begin{document}
\title{Correlation Cube Attack Revisited\index{Wang,  Jianhua}\index{Qin, Lu}\index{Wu, Baofeng}
}
\subtitle{Improved Cube Search and Superpoly Recovery Techniques}
%
%

\author{Jianhua Wang\inst{1}\textsuperscript{(\Letter)}\orcidID{0009-0003-8895-676X} \and
Lu Qin\inst{2,3}\textsuperscript{(\Letter)}\orcidID{0009-0009-0806-4838} \and
Baofeng Wu\inst{4,5}\textsuperscript{(\Letter)}\orcidID{0000-0002-6567-9216}}
\authorrunning{Wang, J., et al.}
%
\institute{Key Laboratory of Mathematics Mechanization,  Academy of Mathematics and Systems Science, Chinese Academy of Sciences, Beijing, China\\
\email{wangjianhua@amss.ac.cn} \and
China UnionPay Co., Ltd., Shanghai, China\\
\email{qinlu@unionpay.com}\\
\and
School of electronic information and electrical engineering, Shanghai Jiao Tong University, Shanghai, China \and
 Institute of Information Engineering, Chinese Academy of Sciences, Beijing, China\\
\email{wubaofeng@iie.ac.cn} \and 
School of Cyber Security, University of Chinese Academy of Sciences, Beijing, China
} 
\maketitle              
%
%



\begin{abstract}


In this paper, we improve the cube attack by exploiting low-degree factors of the superpoly w.r.t. certain \textit{"special" } index set of cube (\textit{ISoC}). This can be viewed as a special case of the correlation cube attack proposed at Eurocrypt 2018, but under our framework  more beneficial equations on the key variables can be obtained in the  key-recovery phase. To mount our attack, one has two challenging  problems: (1) effectively recover algebraic normal form of the superpoly and extract out its low-degree factors; and (2) efficiently search a large quantity  of good \textit{ISoC}s. We bring in new techniques to solve both of them.

First, we propose the \textit{variable substitution technique} for middle rounds of a cipher, in which  polynomials on the key variables in the algebraic expressions of internal states are substituted by new variables. This will improve computational complexity of the superpoly recovery and  promise more compact superpolys that can be easily decomposed with respect to the new variables. Second, we propose the \textit{vector numeric mapping technique}, which seeks out a tradeoff between efficiency of the numeric mapping technique (Crypto 2019) and accuracy of the monomial prediction technique (Asiacrypt 2020) in degree evaluation of  superpolys.
Combining with this technique,  a fast pruning method is given and modeled by MILP to filter good \textit{ISoC}s of which the algebraic degree satisfies some fixed threshold. Thanks to  automated MILP solvers, it becomes practical to comprehensively search for good cubes across the entire search space.

To illustrate the power of our techniques, we apply all of them to Trivium stream cipher. As a result, we have recovered the superpolys of three cubes given by Kesarwani et al.  in 2020, only to find they do not have \texttt{zero-sum} property up to 842 rounds as claimed in their paper. To our  knowledge, the previous best practical key recovery attack was on 820-round Trivium with complexity $2^{53.17}$. We put forward 820-, 825- and 830-round practical key-recovery attacks, in which there are $\mathbf{2^{80}\times 87.8\%}$, $\mathbf{2^{80}\times 83\%}$ and $\mathbf{2^{80}\times 65.7\%}$ keys that could be practically recovered, respectively, if we consider $\mathbf{2^{60}}$ as the upper bound for practical computational complexity. 
 Besides, even for computers with computational power not exceeding $\mathbf{2^{52}}$ (resp. $\mathbf{2^{55}}$), we can still recover $\mathbf{58\%}$ (resp. $\mathbf{46.6\%}$) of the keys in the key space for 820 rounds (resp. 830 rounds). 
 Our attacks have led 10 rounds more than the previous best practical attack.  

\keywords{Correlation cube attack \and Variable substitution \and Vector numeric mapping  \and MILP \and Trivium.}

\end{abstract}
  \section{Introduction}
  Cube attack was introduced by Dinur and Shamir \cite{EC:DinSha09} at {Eurocrypt} 2009, which is a chosen plaintext key-recovery attack. In performing such an attack, one would like to express the outputs of a cryptosystem as Boolean functions on the inputs, namely, key bits and plaintext bits (say, IV bits for stream ciphers). By examining the integral properties of the outputs over some cubes, i.e., some indices of plaintext variables, one can obtain  equations for the so-called superpolys over certain key bits of the cipher.  
After the introduction of cube attack, several variants of it were proposed, including cube testers \cite{FSE:ADMS09}, dynamic cube attack \cite{FSE:DinSha11b}, conditional cube attack \cite{EC:HWXWZ17}, division-property-based cube attack \cite{C:TIHM17} and correlation cube attack \cite{EC:LYWL18}. Among these, correlation cube attack was proposed at {Eurocrypt} 2018 by Liu et al.~\cite{EC:LYWL18}. It exploits correlations between  the superpoly $f_I$ of a cube and the so-called \textit{basis} $Q_I$, which is a set of low-degree Boolean functions over key bits such that $f_I$ can be expanded over them in terms of $f_I=\bigoplus_{h\in Q_I}h\cdot q_h$. Then the adversary could utilize the obtained equations regarding $h$ to extract information about the encryption key.
  
  Superpoly recovery has always been the most important step in a cube attack. At the beginning, one can only guess the superpolys by performing experiments, such as linearity tests  \cite{EC:DinSha09} and degree tests  \cite{FSE:FouVan13}. It only became possible to recover the exact expressions of superpolys for some cubes when the division property was introduced to cube attacks.
  
  Division property  was introduced by Todo \cite{EC:Todo15} in 2015, which turned out to be  a generalization of the integral property. The main idea is, according to the parity of $\boldsymbol{x}^{\boldsymbol{u}}$ for all $\boldsymbol{x}$ in a multiset $\mathbb{X}$ is even or unknown, one can divide the set of $\boldsymbol{u}$'s into two parts.
  By applying the division property, Todo \cite{EC:Todo15} improved the integral distinguishers for some specific cryptographic primitives, such as \textsc{Keccak}-$f$~\cite{EC:GJMG}, Serpent~\cite{FSE:ERL} and the Simon family~\cite{ACM:RDJSBL}. Then, the bit-based division property  was proposed in 2016 \cite{FSE:TodMor16}, which aimed at  cryptographic primitives  only performing  bit operations. It was also generalized to the three subsets setting to describe the parity of $\boldsymbol{x}^{\boldsymbol{u}}$ for all $\boldsymbol{x}$ in $\mathbb{X}$ as not only even or unknown but also odd. Since it is more refined than the conventional division property, integral cryptanalysis against the Simon family of block ciphers was further improved. Afterwards, Xiang et al. \cite{AC:XZBL16} firstly transformed the propagation of bit-based division property into a mixed integer linear programming (MILP) model, and since then, one could search integral distinguishers by using off-the-shelf MILP solvers.
  
  At Crypto 2017, cube attack based on the division property  was proposed by Todo et al. \cite{C:TIHM17}. One can evaluate values of the key bits that are not involved in the superpoly of a cube by using the division property. If we already know the superpoly is independent of most key bits, then we can recover the superpoly by trying out all possible combinations of other key variables which may be involved. At {Crypto} 2018, Wang et al.  \cite{C:WHTLIM18} improved the division-property-based cube attack in both complexity and accuracy. They reduced the complexity of recovering a superpoly by evaluating the upper bound of its degree. In addition, they improved the preciseness of the MILP model by using the ``flag" technique so that one could obtain a non-zero superpoly. However, with these techniques, it remains  impossible to recover superpolys with high degrees  or  superpolys for large-size cubes, as the time complexity grows exponentially in both cases.     
  
  Wang et al. \cite{AC:WHGZS19} transformed the problem of superpoly recovery into evaluating the trails of division property with three subsets, and one could recover superpolys practically thanks to a breadth-first search algorithm and the pruning technique. As a result, they successfully recovered the superpolys of large-size \textit{ISoC}s for 839- and 840-round Trivium practically, but only gave a theoretical attack against 841-round Trivium. In \cite{EC:HLMTW20}, Hao et al. pointed out that the pruning technique is not always so efficient. Therefore, instead of a breadth-first search algorithm, they simply utilized an MILP model for three-subset division property without unknown subset. As a result, they successfully recovered the superpoly of 840-, 841- and 842-round Trivium  with the aid of an off-the-shelf MILP solver. At {Asiacrypt} 2020, Hu et al. \cite{AC:HSWW20} introduced the monomial prediction technique to describe the division property and provided deeper insights to understand it. They also established the equivalence between three-subset division property without unknown subsets and  monomial predictions, showing both of them were perfectly accurate. However, the complexity of both techniques are very dependent on the efficiency of the MILP solvers. Once the number of division trails is very large, it is hard to recover superpolys by these two techniques, since the MILP solver may not find all solutions in  an acceptable time. Afterwards, Hu et al.~\cite{AC:HSTWW21} proposed an improved framework called nested monomial prediction to recover massive superpolys. Recently, based on this technique, He et al.~\cite{AC:HHPW22} proposed a new framework which contains two main steps: one is to obtain the so-called valuable terms which contributes to the superpoly in the middle rounds, and the other is to compute the coefficients of these valuable terms. To recover the valuable terms, non-zero bit-based division property (NBDP) and core monomial prediction (CMP) were introduced, which promoted great improvement to the computational complexity of superpoly recovery.
  
  In addition to superpoly recovery, degree evaluation of cryptosystems is also an important issue in  cube attacks, since the algebraic degree is usually used to judge whether the superpoly is zero and to search for good \textit{ISoC}s. In \cite{C:Liu17}, Liu introduced the numeric mapping technique and proposed an algorithm for degree evaluation of nonlinear feedback shift register (NFSR) based cryptosystems, which could give  upper bounds of the degree. This method has low complexity but the estimation is less accurate generally. For example, it performs badly for Trivium-like ciphers when there exist adjacent indices in an \textit{ISoC}. On the other hand, Hu et al.'s 
  monomial prediction technique \cite{AC:HSWW20} can promise accurate degree evaluation,  but the time consumption  is too considerable which limits its application in large-scale search.  An algorithm seeking a trade-off between accuracy and efficiency in degree evaluation has been missing in the literature.

The Trivium cipher \cite{ISC:DeCanniere06}, a notable member of the eSTREAM portfolio, has consistently been a primary target for cube attacks.   Notably, the advances in cube attacks in recent years were significantly propelled by analysis of this cipher \cite{AC:HSTWW21,DBLP:conf/cisc/CheT22,AC:HHPW22}. When it comes to theoretical attacks on 840 rounds of Trivium and beyond, the key challenge is to identify balanced superpolys. These superpolys often encompass millions to billions of terms, generally involving the majority of key bits. Due to the infeasibility of solving these high-degree equations, researchers have resorted to exhaustively enumerating most potential keys. This process simplifies the equations but often only results in the recovery of a handful of key bits. When it comes to practical attacks, we can look at the attacks mentioned in \cite{DBLP:conf/cisc/CheT22}. Here, a thorough search for \textit{ISoC}s with simpler superpolys, such as linear or quadratic polynomials, is necessary. However, as the number of rounds increases, smaller \textit{ISoC}s  increasingly produce complex superpolys, making higher-round attacks infeasible. These complexities in superpolys obstruct effective key recovery attacks, leading us to the question that how can we gain more key information from the equation system to enhance the attack. In this work, we propose methods to address this challenge.

\textbf{Our contributions.} 
  To handle complex superpolys, leveraging the correlation between superpolys and low-degree Boolean functions is a promising approach for key recovery. 
 In this paper, we revisit the correlation cube attack and propose an improvement by utilizing a significant number of so-called ``special" \textit{ISoC}s whose superpolys have low-degree Boolean factors,  improving both the quantity and quality of equations obtained in the online phase. However, this approach introduces two challenges: superpoly recovery and the search for good \textit{ISoC}s.

For superpoly recovery, we  propose a novel and effective variable substitution technique. By introducing new variables to replace complex expressions of key bits and eliminating trails in intermediate states, we achieve a more compact representation of the  superpoly on these new variables, making it easier to factorize. This technique also improves the computational complexity of superpoly recovery, enabling us to effectively identify special \textit{ISoC}s.

To search good \textit{ISoC}s, a common method is to filter \textit{ISoC}s based on a comparison between the estimated algebraic degree and a fixed threshold. We introduce the concept of \emph{vector degree} for a Boolean function, which contains more information than the conventional algebraic degree. We further employ a new technique called ``vector numeric mapping" to depict the propagation of vector degrees in compositions of Boolean functions. As a result, we can iteratively estimate an upper bound for the vector degree of the entire composite function. Our vector numeric mapping technique outperforms Liu's numeric mapping in accuracy.

Furthermore, by studying properties of the vector numeric mapping, we introduce a pruning technique to quickly filter out good \textit{ISoC}s  whose superpolys have estimated degrees  satisfying  a threshold. We also construct an MILP model to describe this process, promissing an efficient automated selection of good \textit{ISoC}s.

 Our techniques are applied to the Trivium stream cipher. Initially, we apply our algorithms to three \textit{ISoC}s proposed in \cite{DBLP:journals/dcc/KesarwaniRSM20}, which were claimed to have \texttt{zero-sum} distinguishers up to $842$ rounds. 
 However, it is verified that these three \textit{ISoC}s do not possess \texttt{zero-sum} properties for certain numbers of rounds. Nevertheless, two of them still exhibit the 841-round \texttt{zero-sum} property, which is the maximum number of rounds discovered so far for Trivium. Leveraging our good \textit{ISoC} search technique and superpoly recovery with variable substitution technique, we mount correlation cube attacks against Trivium with \textbf{820}, \textbf{825} and \textbf{830} rounds, respectively.  
 As a result, there are $\mathbf{2^{80}\times 87.8\%}$, $\mathbf{2^{80}\times 83\%}$ and $\mathbf{2^{80}\times 65.7\%}$ keys that can be practically recovered, respectively, if we consider $\mathbf{2^{60}}$ as the upper bound for practical computational complexity. 
 Besides, even for computers with computational power not exceeding $\mathbf{2^{52}}$, we can still recover $\mathbf{58\%}$ of the keys in the key space for 820 rounds. 
 For computers with computational power not exceeding $\mathbf{2^{55}}$, we can recover $\mathbf{46.6\%}$ of the keys in the key space for 830 rounds.
 Our attacks have achieved a significant improvement compared to  the previous best practical attack \cite{DBLP:conf/cisc/CheT22}, with up to 10 additional rounds recovered. Furthermore, for the first time, the complexity for recovering 830 rounds is less than $2^{75}$, even surpassing the threshold of $2^{60}$.
 Previous results on key recovery attacks against Trivium and our results are compared in Table \ref{Summary}.

  \begin{table}
      \caption{A summary of key-recovery attacks against Trivium}
      \label{Summary}
      \centering  
      \tabcolsep=.15cm
      \begin{tabular}{cccccccc}       
          \hline
           Attack &\# of & \multicolumn{2}{c}{Off-line phase}&On-line &Total &\# of & \multirow{2}{*}{Ref.}\\
           type & Round& size of \textit{ISoC} & \# of \textit{ISoC}s & phase&time & keys &  \\\hline
           \multirow{10}{*}{Practical} & 672 & 12 & 63 & $2^{17}$ & $2^{18.56}$ & $2^{80}$ & \cite{EC:DinSha09}\\
           & 767 & 28-31 & 35 & $2^{45}$ & $2^{45.00}$ & $2^{80}$ & \cite{EC:DinSha09}\\
           & 784 & 30-33 & 42 & $2^{38}$ & $2^{39.00}$& $2^{80}$ & \cite{FSE:FouVan13}\\
           & 805 & 32-38 & 42 & $2^{38}$ & $2^{41.40}$ & $2^{80}$& \cite{AC:YeTia21}\\ 
           & 806 & 34-37 & 29 & $2^{35}$ & $2^{39.88}$& $2^{80}$ & \cite{DBLP:journals/tosc/000421}\\
           & 808 & 39-41 & 37 & $2^{43}$ & $2^{44.58}$& $2^{80}$ & \cite{DBLP:journals/tosc/000421}\\ 
           & 815 & 44-46 & 35 & $2^{45}$ & $2^{47.32}$ & $2^{80}$& \cite{DBLP:conf/cisc/CheT22}\\
           & 820 & 48-51 & 30 & $2^{50}$ & $2^{53.17}$& $2^{80}$ & \cite{DBLP:conf/cisc/CheT22}\\
           & 820 & 38 & $2^{13}$ & $2^{51}$ & $2^{52}$ & $2^{79.2} $& Sect.\ref{subsec.keyrecovery}\\ 
           & 820 & 38 & $2^{13}$ & $2^{51}$ & $2^{60}$ & $2^{79.8} $& Sect.\ref{subsec.keyrecovery}\\
           & 825 & 41 & $2^{12}$ & $2^{53}$ & $2^{54}$& $2^{79.3}$ & Sect.\ref{subsec.keyrecovery}\\
           & 825 & 41 & $2^{12}$ & $2^{53}$ & $2^{60}$& $2^{79.7}$ & Sect.\ref{subsec.keyrecovery}\\
           & 830 & 41 & $2^{13}$ & $2^{54}$ & $2^{55}$& $2^{78.9}$ & Sect.\ref{subsec.keyrecovery}\\
           & 830 & 41 & $2^{13}$ & $2^{54}$ & $2^{60}$& $2^{79.4}$ & Sect.\ref{subsec.keyrecovery}\\\hline
           \multirow{16}{*}{Theoretical} & 799 & 32-37 & 18 & $2^{62}$ & $2^{62.00}$& $2^{80}$ & \cite{FSE:FouVan13} \\
           & 802 & 34-37 & 8 & $2^{72}$ & $2^{72.00}$& $2^{80}$ & \cite{ACISP:YeTia18}\\
           & 805 & 28 & 28 & $2^{73}$ & $2^{73.00}$ & $2^{80}$& \cite{EC:LYWL18}\\
           & 832 & 72 & 1 & $2^{79}$ & $2^{79.01}$& $2^{80}$ & \cite{C:TIHM17}\\
           & 832 & 72 & 1 & $2^{79}$ & $2^{79.01}$& $2^{80}$ & \cite{DBLP:journals/tc/TodoIHM18}\\
           & 832 & 72 & 1 & $2^{79}$ & $2^{79.01}$& $2^{80}$ & \cite{AC:WHGZS19}\\
           & 835 & 72 & 4 & $2^{79}$ & $<2^{79.01}$& $2^{80}$ & \cite{DBLP:journals/iet-ifs/YeT20}\\
           & 835 & 35 & 41 & $2^{75}$ & $2^{75.00}$& $2^{80}$ & \cite{EC:LYWL18}\\
           & 840 & 75 & 3 & $2^{77}$ & $2^{77.32}$ & $2^{80}$& \cite{AC:HSWW20}\\
           & 840 & 78 & 2 & $2^{79}$ & $2^{79.58}$ & $2^{80}$& \cite{EC:HLMTW20}\\
           & 841 & 78 & 2 & $2^{79}$ & $2^{79.58}$ & $2^{80}$& \cite{EC:HLMTW20}\\
           & 841 & 76 & 2 & $2^{78}$ & $2^{78.58}$& $2^{80}$ & \cite{AC:HSWW20}\\
           & 842 & 76 & 2 & $2^{79}$ & $2^{78.58}$& $2^{80}$ & \cite{AC:HSWW20}\\
           & 842 & 78 & 2 & $2^{79}$ & $2^{79.58}$& $2^{80}$ & \cite{JC:HLMTW21}\\
           & 843 & 54-57,76 & 5 & $2^{75}$ & $2^{76.58}$& $2^{80}$ & \cite{AC:HSTWW21}\\
           & 843 & 78 & 2 & $2^{79}$ & $2^{79.58}$ & $2^{80}$& \cite{DBLP:journals/tosc/000421}\\
           & 844 & 54-55 & 2 & $2^{78}$ & $2^{78.00}$& $2^{80}$ & \cite{AC:HSTWW21}\\
           & 845 & 54-55 & 2 & $2^{78}$ & $2^{78.00}$& $2^{80}$ & \cite{AC:HSTWW21}\\
           & 846 & 51-54 & 6 & $2^{51}$ & $2^{79.00}$& $2^{80}$ & \cite{AC:HHPW22}\\
           & 847 & 52-53 & 2 & $2^{52}$ & $2^{79.00}$& $2^{80}$ & \cite{AC:HHPW22}\\
           & 848 & 52 & 1 & $2^{52}$ & $2^{79.00}$& $2^{80}$ & \cite{AC:HHPW22}\\\hline
           
  %
           
      \end{tabular}
  \end{table}
  
 \textbf{Organization.}
  The rest of this paper is organized as follows. In Section \ref{sec.Preliminaries}, we give some preliminaries including some notations and concepts. In Section \ref{sec.Improvements of Correlation Cube Attacks}, we review correlation cube attack and propose  strategies to improve it. In Section \ref{sec.recover superpoly for novel perspective}, we propose the variable substitution technique to improve the superpoly recovery. In Section \ref{sec:Find_cube}, we introduce the definition of vector degree for any Boolean function and present an improved technique for degree evaluation. Then we introduce an \textit{ISoC} search method. In Section \ref{apptrivium}, we apply our techniques to Trivium. Conclusions are given in Section \ref{sec.conclusion}.

\section{Preliminaries}
\label{sec.Preliminaries}

\subsection{Notations}
Let $\boldsymbol{v} = (v_0, \cdots, v_{n-1})$ be an $n$-dimensional vector. For any $\boldsymbol{v},\boldsymbol{u}\in \mathbb{F}_2^n$, denote $\prod_{i=0}^{n-1} v_i^{u_i}$ by $\boldsymbol{v}^{\boldsymbol{u}}$ or $\pi_{\boldsymbol{u}}(\boldsymbol{v})$, and define an order $\boldsymbol{v} \preccurlyeq \boldsymbol{u}$  ($\boldsymbol{v} \succcurlyeq \boldsymbol{u}$, resp.), which means $v_i \le u_i$ ($v_i \ge u_i$, resp.) for all $0\leq i\leq n-1$. For any $\boldsymbol{u}_0, \cdots, \boldsymbol{u}_{m-1} \in \mathbb{F}_2^n$, we use $\bm{u}=\bigvee_{i=0}^{m-1} \boldsymbol{u}_i \in \mathbb{F}_2^n$ to represent the bitwise logical OR operation, that is, for $0\leq j\leq n-1$, $u_j = 1$ if and only if there exists an $\boldsymbol{u}_i$ whose $j$-th bit equal to 1.  Use $\boldsymbol{1}$ and $\boldsymbol{0}$ to represent the all-one and all-zero vector, respectively.

For a set $I$, denote its cardinality by $|I|$.
For $I \subset [n]= \{0, 1, \cdots, n - 1\}$, let $I^c$ be its complement.
For an $n$-dimensional vector $\boldsymbol{x}$, let $\boldsymbol{x}_I$ represent the $|I|$-dimensional  vector $(x_{i_0}, \cdots, x_{i_{|I|-1}})$ for $I = \{i_0,\cdots, i_{|I|-1}\}$. Note that we always list the elements of $I$ in an increasing order to eliminate ambiguity. 

In this paper, we always distinguish $j \in \mathbb{Z}_{2^d}$ with a $d$-bit vector $\boldsymbol{u}$ in the sense that $\sum_{k=0}^{d-1}u_k 2^{k} = j$.


\subsection{Algebraic Normal Form and Algebraic Degree of Boolean Functions}
An $n$-variable Boolean function $f$ can be uniquely written in the form
$f(\boldsymbol{x})=\bigoplus _{\boldsymbol{u} \in \mathbb{F}_2^{n}} a_{\boldsymbol{u}}\boldsymbol{x}^{\boldsymbol{u}},$ 
which is called the algebraic normal form (ANF) of $f$. 
If the term $\boldsymbol{x}^{\boldsymbol{u}}$ appears in $f$, i.e., $a_{\boldsymbol{u}} = 1$, we denote  $\boldsymbol{x}^{\boldsymbol{u}} \rightarrow f$. Otherwise, denote  $\boldsymbol{x}^{\boldsymbol{u}} \nrightarrow f$.\par

For an index set $I\subset [n]$ with size  $d$, if $\boldsymbol{x}_I$ are considered as variables and $\boldsymbol{x}_{I^c}$ are considered as parameters in $f$, we can write the ANF of $f$ w.r.t. $\boldsymbol{x}_I$ as 
\begin{equation*}
f(\boldsymbol{x})= \bigoplus _{\boldsymbol{v}\in \mathbb{F}_2^{d}}g_{\boldsymbol{v}}(\boldsymbol{x}_{I^c})\boldsymbol{x}_I^{\boldsymbol{v}},
\end{equation*}
where $g_{\boldsymbol{v}}(\boldsymbol{x}_{I^c})=\bigoplus _{\{ \boldsymbol{u}\in\mathbb{F}_2^{n}\mid  \boldsymbol{u}_I = \boldsymbol{v}\}} a_{\boldsymbol{u}}\boldsymbol{x}_{I^c}^{\boldsymbol{u}_{I^c}}$.

The algebraic degree of $f$ w.r.t. $\boldsymbol{x}_I$ is defined as 
$$\deg(f)_{\boldsymbol{x}_I} = \max_{\boldsymbol{v} \in \mathbb{F}_2^d} \{\wt(\boldsymbol{v}) \mid g_{\boldsymbol{v}}(\boldsymbol{x}_{I^c}) \neq 0\},$$
where $\wt(\boldsymbol{v})$ is the Hamming weight of $\boldsymbol{v}$.

\subsection{Cube Attack}
The cube attack was proposed by Dinur and Shamir in \cite{EC:DinSha09}, which is essentially an extension of the higher-order differential attack. Given a Boolean function $f$ whose inputs are $\boldsymbol{x}\in \mathbb{F}_2^n$ and $\boldsymbol{k}\in \mathbb{F}_2^m$, and given a subset $I=\{i_0,\cdots,i_{d-1}\}\subset [n]$, we can write $f$ as
 $$f(\boldsymbol{x}, \boldsymbol{k})=f_I(\boldsymbol{x}_{I^c}, \boldsymbol{k})\cdot \boldsymbol{x}_I^{\boldsymbol{1}}+q_I(\boldsymbol{x}_{I^c}, \boldsymbol{k}),$$
where each term in $q_I$ is not divisible by $\boldsymbol{x}_I^{\boldsymbol{1}}$. Let $C_I$, called a cube (defined by $I$), be the set of vectors $\boldsymbol{x}$ whose components w.r.t. the index set $I$ take all possible $2^d$ values and other components are undetermined. $I$ is called the index set of the cube (\textit{ISoC}). For each $\boldsymbol{y} \in C_I$, there will be a Boolean function with $n-d$ variables derived from $f$. Summing all these $2^d$ derived functions, we have $$\bigoplus _{C_{I}}f(\boldsymbol{x}, \boldsymbol{k})=f_I(\boldsymbol{x}_{I^c}, \boldsymbol{k}).$$
The polynomial $f_I$ is called the superpoly of the cube $C_{I}$ or of the \textit{ISoC} $I$. Actually, $f_I$ is the coefficient of $\boldsymbol{x}_I^{\boldsymbol{1}}$ in the ANF of $f$ w.r.t. $\boldsymbol{x}_I$. If we assign all the values of $\boldsymbol{x}_{I^c}$ to 0, $f_I$ becomes the coefficient of $\boldsymbol{x}^{\boldsymbol{u}}$ in $f$, which is a Boolean function  in $\boldsymbol{k}$, where $u_i=1$ if and only if $i \in I$. We denote it by $\mathtt{Coe}(f, \boldsymbol{x}^{\boldsymbol{u}})$.

\subsection{Correlation Cube Attack}
\label{sec.correlation_cube_attack}
The correlation cube attack was proposed at {Eurocrypt} 2018 by Liu et al.~\cite{EC:LYWL18}. The objective and high-level idea of this attack is to obtain key information by exploiting the correlations between superpolys and their low-degree basis, thereby deriving equations for the basis rather than the superpolys. 

In mathematical terms, for an \textit{ISoC} $I$, denote the basis of a superpoly $f_I$ as $Q_I=\{h_1, \cdots, h_r\}$, such that $h_i$ has low degree  w.r.t. $\boldsymbol{k}$ and  $$f_I(\boldsymbol{x}_{J},\boldsymbol{k})=\bigoplus_{i=1}^rh_iq_i,$$ where $J\subset I^c$.
This attack primarily works in two phases:  
\begin{enumerate}
    \item \textbf{Preprocessing phase} (see Algorithm \ref{alg.preprocessing phase of correlation cube attacks}): In this stage, the adversary tries to obtain a basis  $Q_I$ of the superpoly $f_I$  and add the tuples $(I,h_i,b)$ leading to  $\Pr(h_i = b \mid f_I)$ greater than a threshold  $p$ into $\Omega$, where $\Pr(h_i = b \mid  f_I)$ is the probability of $h_i = 0$ (or $h_i = 1$) given that $f_I$ is zero constant (or not) on $\boldsymbol{x}_J$ for a random fixed key, respectively. \par
    \item \textbf{Online phase} (see Algorithm \ref{alg.online phase of correlation cube attacks}): The adversary randomly chooses $\alpha$ values for  non-cube public bits, and computes corresponding values of the superpoly $f_I$ to check whether it is zero constant or not. If all the values of $f_I$ are zero, for each $(I,h_i,0)$ in  $\Omega$ the equation $h_i=0$ holds with probability greater than $p$. Otherwise, for each $(I,h_i,1)$ in $\Omega$ the equation $h_i=1$ holds with probability greater than $p$. If all the $h_i$'s are balanced and independent with each other, the adversary would recover $r$-bit key information with a probability greater than $p^r$ by solving these $r$ equations. 
\end{enumerate}

This method, though intricate, provides a solution for dealing with high-degree superpolys, and has demonstrated effectiveness   in extending  theoretical attacks  on Trivium to more rounds.

\subsection{Superpoly Recovery With Monomial Prediction/Three-subset Division Property Without Unknown Subset}\label{subsec:monomialpred}
In~\cite{AC:HSWW20}, Hu et al. established the equivalence between monomial prediction and three-subset division property without unknown subset~\cite{EC:HLMTW20}, showing  both  techniques could give  accurate criterion on the existence of a monomial in $f$. Here we take the monomial prediction technique as an example to explain  how to recover a superpoly.

For a vector Boolean function $\boldsymbol{f} = \boldsymbol{f}_{r-1}\circ \cdots \circ \boldsymbol{f}_0$, denote the input and output of $\boldsymbol{f}_i$ by $\boldsymbol{x}_i$ and $\boldsymbol{x}_{i+1}$ respectively. If any $\pi_{\boldsymbol{u}_i}(\boldsymbol{x}_i) \rightarrow \pi_{\boldsymbol{u}_{i+1}}(\boldsymbol{x}_{i+1})$, i.e., the coefficient of $\pi_{\boldsymbol{u}_i}(\boldsymbol{x}_i)$ in $\pi_{\boldsymbol{u}_{i+1}}(\boldsymbol{x}_{i+1})$ is nonzero, then we call
$$\pi_{\boldsymbol{u}_0}(\boldsymbol{x}_0) \rightarrow \pi_{\boldsymbol{u}_1}(\boldsymbol{x}_1) \rightarrow \cdots \rightarrow \pi_{\boldsymbol{u}_{r-1}}(\boldsymbol{x}_{r-1})$$  a monomial trail from $\pi_{\boldsymbol{u}_0}(\boldsymbol{x}_0)$ to $\pi_{\boldsymbol{u}_{r-1}}(\boldsymbol{x}_{r-1})$, denoted by $\pi_{\boldsymbol{u}_0}(\boldsymbol{x}_0)\rightsquigarrow \pi_{\boldsymbol{u}_{r-1}}(\boldsymbol{x}_{r-1})$. If there is no trail from $\pi_{\boldsymbol{u}_0}(\boldsymbol{x}_0)$ to $\pi_{\boldsymbol{u}_{r-1}}(\boldsymbol{x}_{r-1})$, we denote $\pi_{\boldsymbol{u}_0}(\boldsymbol{x}_0)$ $\not\rightsquigarrow \pi_{\boldsymbol{u}_{r-1}}(\boldsymbol{x}_{r-1})$. The set of all  trails from $\pi_{\boldsymbol{u}_0}(\boldsymbol{x}_0)$ to $\pi_{\boldsymbol{u}_{r-1}}(\boldsymbol{x}_{r-1})$ are denoted by $\pi_{\boldsymbol{u}_0}(\boldsymbol{x}_0)\Join\pi_{\boldsymbol{u}_{r-1}}(\boldsymbol{x}_{r-1})$. Obviously, for any $0< i < r-1$, it holds that 
 $$|\pi_{\boldsymbol{u}_0}(\boldsymbol{x}_0)\Join\pi_{\boldsymbol{u}_{r-1}}(\boldsymbol{x}_{r-1})|=  \sum_{\boldsymbol{u}_i}|\pi_{\boldsymbol{u}_0}(\boldsymbol{x}_0)\Join\pi_{\boldsymbol{u}_{i}}(\boldsymbol{x}_{i})|\cdot |\pi_{\boldsymbol{u}_i}(\boldsymbol{x}_i)\Join\pi_{\boldsymbol{u}_{r-1}}(\boldsymbol{x}_{r-1})|.$$

\begin{theorem}[Monomial prediction~\cite{AC:HSWW20,EC:HLMTW20}]
	\label{thm.monomial_prediction} 
  We have $\pi_{\boldsymbol{u}_0}(\boldsymbol{x}_0)\rightarrow\pi_{\boldsymbol{u}_{r-1}}(\boldsymbol{x}_{r-1})$ if and only if $$|\pi_{\boldsymbol{u}_0}(\boldsymbol{x}_0)\Join\pi_{\boldsymbol{u}_{r-1}}(\boldsymbol{x}_{r-1})|\equiv 1 \pmod 2.$$
  That is, $\pi_{\boldsymbol{u}_0}(\boldsymbol{x}_0)\rightarrow\pi_{\boldsymbol{u}_{r-1}}(\boldsymbol{x}_{r-1})$ if and only if, for any $0<i<r-1$,
  $$|\pi_{\boldsymbol{u}_0}(\boldsymbol{x}_0)\Join\pi_{\boldsymbol{u}_{r-1}}(\boldsymbol{x}_{r-1})|\equiv \sum_{\pi_{\boldsymbol{u}_{i}}(\boldsymbol{x}_{i})\rightarrow \pi_{\boldsymbol{u}_{r-1}}(\boldsymbol{x}_{r-1})}|\pi_{\boldsymbol{u}_0}(\boldsymbol{x}_0)\Join\pi_{\boldsymbol{u}_{i}}(\boldsymbol{x}_{i})| \pmod 2.$$
\end{theorem}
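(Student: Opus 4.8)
The plan is to induct on the number of composition layers, reducing the global statement to the one-step definition of the monomial arrow and to the counting recursion already displayed just above the theorem. Throughout I would write $\mathtt{Coe}(\pi_{\boldsymbol{u}_{r-1}}(\boldsymbol{x}_{r-1}),\pi_{\boldsymbol{u}_0}(\boldsymbol{x}_0))$ for the coefficient of $\pi_{\boldsymbol{u}_0}(\boldsymbol{x}_0)$ in the ANF of $\pi_{\boldsymbol{u}_{r-1}}(\boldsymbol{x}_{r-1})$ after the full substitution $\boldsymbol{x}_{r-1}=\boldsymbol{f}(\boldsymbol{x}_0)$, so that by definition $\pi_{\boldsymbol{u}_0}(\boldsymbol{x}_0)\rightarrow\pi_{\boldsymbol{u}_{r-1}}(\boldsymbol{x}_{r-1})$ holds exactly when this coefficient equals $1$ over $\mathbb{F}_2$. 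The target equivalence then says that this $0/1$ indicator agrees with $|\pi_{\boldsymbol{u}_0}(\boldsymbol{x}_0)\Join\pi_{\boldsymbol{u}_{r-1}}(\boldsymbol{x}_{r-1})|\bmod 2$.

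For the base case I would take a single transition, from $\boldsymbol{x}_0$ to $\boldsymbol{x}_1=\boldsymbol{x}_{r-1}$. Here the set of one-step trails has at most one element, so its cardinality is $1$ precisely when the coefficient of $\pi_{\boldsymbol{u}_0}(\boldsymbol{x}_0)$ in $\pi_{\boldsymbol{u}_1}(\boldsymbol{x}_1)$ is $1$ and is $0$ otherwise. Thus $\pi_{\boldsymbol{u}_0}(\boldsymbol{x}_0)\rightarrow\pi_{\boldsymbol{u}_1}(\boldsymbol{x}_1)$ iff $|\pi_{\boldsymbol{u}_0}(\boldsymbol{x}_0)\Join\pi_{\boldsymbol{u}_1}(\boldsymbol{x}_1)|\equiv 1\pmod 2$, and the base case is immediate from the definitions.

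The heart of the argument is the inductive step, for which I would split the composition at an intermediate layer $i$. First I would expand $\pi_{\boldsymbol{u}_{r-1}}(\boldsymbol{x}_{r-1})$ in the intermediate variables $\boldsymbol{x}_i$ over the upper part of the composition, and then substitute, coordinate by coordinate, the ANF of each $\boldsymbol{x}_i$-coordinate as a function of $\boldsymbol{x}_0$ over the lower part. Because the ANF is unique, the monomials $\{\pi_{\boldsymbol{u}_i}(\boldsymbol{x}_i)\}$ form a basis and every coefficient lives in $\mathbb{F}_2$, so collecting the coefficient of $\pi_{\boldsymbol{u}_0}(\boldsymbol{x}_0)$ yields the factorized identity
$$\mathtt{Coe}(\pi_{\boldsymbol{u}_{r-1}}(\boldsymbol{x}_{r-1}),\pi_{\boldsymbol{u}_0}(\boldsymbol{x}_0))=\bigoplus_{\boldsymbol{u}_i}\mathtt{Coe}(\pi_{\boldsymbol{u}_{r-1}}(\boldsymbol{x}_{r-1}),\pi_{\boldsymbol{u}_i}(\boldsymbol{x}_i))\cdot\mathtt{Coe}(\pi_{\boldsymbol{u}_i}(\boldsymbol{x}_i),\pi_{\boldsymbol{u}_0}(\boldsymbol{x}_0)).$$
Each factor on the right is the $0/1$ indicator of a one-sided arrow over a strictly shorter composition, so the inductive hypothesis lets me replace them by $|\pi_{\boldsymbol{u}_i}(\boldsymbol{x}_i)\Join\pi_{\boldsymbol{u}_{r-1}}(\boldsymbol{x}_{r-1})|$ and $|\pi_{\boldsymbol{u}_0}(\boldsymbol{x}_0)\Join\pi_{\boldsymbol{u}_i}(\boldsymbol{x}_i)|$ modulo $2$. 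Feeding these into the counting recursion displayed before the theorem, the right-hand side collapses to $|\pi_{\boldsymbol{u}_0}(\boldsymbol{x}_0)\Join\pi_{\boldsymbol{u}_{r-1}}(\boldsymbol{x}_{r-1})|\bmod 2$, giving the claimed equivalence. The second displayed form then follows by restricting the intermediate sum to those $\boldsymbol{u}_i$ with $\pi_{\boldsymbol{u}_i}(\boldsymbol{x}_i)\rightarrow\pi_{\boldsymbol{u}_{r-1}}(\boldsymbol{x}_{r-1})$, since by the first part the remaining indices contribute even trail counts and drop out modulo $2$.

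The hard part will be a clean justification of the factorized coefficient identity over $\mathbb{F}_2$: I must verify that substituting the coordinate ANFs into a monomial and expanding introduces no coefficient outside $\mathbb{F}_2$, and that every contribution to $\pi_{\boldsymbol{u}_0}(\boldsymbol{x}_0)$ is routed through exactly one intermediate monomial $\pi_{\boldsymbol{u}_i}(\boldsymbol{x}_i)$, so that the cross terms and $\mathbb{F}_2$-cancellations are faithfully tracked by the mod-$2$ sum. The uniqueness of the ANF is the pivot that makes this bookkeeping exact, and it is what converts the algebraic substitution into the combinatorial parity count on which the whole induction rests.
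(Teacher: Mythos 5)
The paper does not prove this theorem itself; it is imported verbatim from the cited works \cite{AC:HSWW20,EC:HLMTW20}, so there is no in-paper proof to compare against. Your inductive argument is correct and is essentially the standard proof from those references: the factorized coefficient identity $\mathtt{Coe}(\pi_{\boldsymbol{u}_{r-1}}(\boldsymbol{x}_{r-1}),\pi_{\boldsymbol{u}_0}(\boldsymbol{x}_0))=\bigoplus_{\boldsymbol{u}_i}\mathtt{Coe}(\pi_{\boldsymbol{u}_{r-1}}(\boldsymbol{x}_{r-1}),\pi_{\boldsymbol{u}_i}(\boldsymbol{x}_i))\cdot\mathtt{Coe}(\pi_{\boldsymbol{u}_i}(\boldsymbol{x}_i),\pi_{\boldsymbol{u}_0}(\boldsymbol{x}_0))$ follows from uniqueness of the ANF under composition, and combining it with the trail-counting recursion displayed before the theorem, reduced modulo $2$, closes the induction and yields both displayed forms.
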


\begin{theorem}[{Superpoly recovery}~\cite{AC:HSWW20,EC:HLMTW20}]
	\label{thm.Superpoly_Recovery} 
  Let $f$ be a Boolean function with input $\boldsymbol{x}$ and $\boldsymbol{k}$, and 
   $f= f_{r-1}\circ  \boldsymbol{f}_{r-2}\circ\cdots \circ \boldsymbol{f}_0(\boldsymbol{x}, \boldsymbol{k})$. When setting $\boldsymbol{x}_{I^c}=\boldsymbol{0}$, the superpoly of an \textit{ISoC} $I$ is 
   $$\mathtt{Coe}(f, \boldsymbol{x}^{\boldsymbol{u}}) = \bigoplus_{|\boldsymbol{k}^{\boldsymbol{w}}\boldsymbol{x}^{\boldsymbol{u}}\Join f|\equiv 1 \pmod 2}\boldsymbol{k}^{\boldsymbol{w}},$$
   where $\boldsymbol{u}_I=\boldsymbol{1}$ and $\boldsymbol{u}_{I^c}=\boldsymbol{0}$. 

\end{theorem}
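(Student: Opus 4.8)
The plan is to reduce the statement to a direct application of the monomial prediction criterion (Theorem~\ref{thm.monomial_prediction}), once the combinatorial meaning of $\mathtt{Coe}(f,\boldsymbol{x}^{\boldsymbol{u}})$ has been made precise. First I would regard the composition as starting from the initial state $\boldsymbol{x}_0=(\boldsymbol{x},\boldsymbol{k})$, so that every monomial in the input variables has the form $\pi_{\boldsymbol{u}_0}(\boldsymbol{x}_0)=\boldsymbol{x}^{\boldsymbol{u}'}\boldsymbol{k}^{\boldsymbol{w}}$ with $\boldsymbol{u}'$ supported on the public bits and $\boldsymbol{w}$ on the key bits. Writing the ANF of $f$ as a polynomial in $\boldsymbol{x}$ and $\boldsymbol{k}$, namely $f=\bigoplus_{\boldsymbol{u}',\boldsymbol{w}}a_{\boldsymbol{u}',\boldsymbol{w}}\boldsymbol{x}^{\boldsymbol{u}'}\boldsymbol{k}^{\boldsymbol{w}}$, the coefficient $a_{\boldsymbol{u}',\boldsymbol{w}}$ equals $1$ precisely when $\boldsymbol{x}^{\boldsymbol{u}'}\boldsymbol{k}^{\boldsymbol{w}}\rightarrow f$, by the very definition of the arrow notation.

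Next I would unwind the definition of $\mathtt{Coe}(f,\boldsymbol{x}^{\boldsymbol{u}})$ supplied in the cube-attack subsection. Extracting the coefficient of $\boldsymbol{x}_I^{\boldsymbol{1}}$ with respect to $\boldsymbol{x}_I$ keeps exactly those terms with $\boldsymbol{u}'_I=\boldsymbol{1}$; the subsequent substitution $\boldsymbol{x}_{I^c}=\boldsymbol{0}$ annihilates every surviving term that still carries a factor $x_i$ with $i\in I^c$, that is, every term with $\boldsymbol{u}'_{I^c}\neq\boldsymbol{0}$. Hence only the terms with $\boldsymbol{u}'=\boldsymbol{u}$, where $\boldsymbol{u}_I=\boldsymbol{1}$ and $\boldsymbol{u}_{I^c}=\boldsymbol{0}$, contribute, and one obtains the Boolean function $\mathtt{Coe}(f,\boldsymbol{x}^{\boldsymbol{u}})=\bigoplus_{\boldsymbol{w}}a_{\boldsymbol{u},\boldsymbol{w}}\boldsymbol{k}^{\boldsymbol{w}}=\bigoplus_{\boldsymbol{x}^{\boldsymbol{u}}\boldsymbol{k}^{\boldsymbol{w}}\rightarrow f}\boldsymbol{k}^{\boldsymbol{w}}$ in the key variables.

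Finally I would invoke Theorem~\ref{thm.monomial_prediction} with input monomial $\boldsymbol{x}^{\boldsymbol{u}}\boldsymbol{k}^{\boldsymbol{w}}$ and target output $f$, which translates the presence of a monomial into a parity count of monomial trails: $\boldsymbol{x}^{\boldsymbol{u}}\boldsymbol{k}^{\boldsymbol{w}}\rightarrow f$ holds if and only if $|\boldsymbol{k}^{\boldsymbol{w}}\boldsymbol{x}^{\boldsymbol{u}}\Join f|\equiv 1\pmod 2$. Substituting this equivalence into the index set of the direct sum from the previous step yields $\mathtt{Coe}(f,\boldsymbol{x}^{\boldsymbol{u}})=\bigoplus_{|\boldsymbol{k}^{\boldsymbol{w}}\boldsymbol{x}^{\boldsymbol{u}}\Join f|\equiv 1\pmod 2}\boldsymbol{k}^{\boldsymbol{w}}$, which is exactly the claim.

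I expect the only genuinely delicate step to be the bookkeeping in the second paragraph: one must verify that extracting the coefficient of $\boldsymbol{x}_I^{\boldsymbol{1}}$ and then setting $\boldsymbol{x}_{I^c}=\boldsymbol{0}$ retains \emph{precisely} the monomial $\boldsymbol{x}^{\boldsymbol{u}}$ as the public-bit part, neither more (the higher $\boldsymbol{x}_{I^c}$-powers are killed by the substitution) nor fewer (the pure key monomials multiplying $\boldsymbol{x}^{\boldsymbol{u}}$ are left untouched), so that the resulting coefficient, viewed as a Boolean function of $\boldsymbol{k}$, matches the target index set. Once this identification is secured, the monomial prediction theorem does all the remaining work and no substantial analytic obstacle remains.
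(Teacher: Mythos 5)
Your proposal is correct: the paper itself gives no proof of Theorem~\ref{thm.Superpoly_Recovery} (it is imported from \cite{AC:HSWW20,EC:HLMTW20}), and your argument is exactly the standard one used there --- identify $\mathtt{Coe}(f,\boldsymbol{x}^{\boldsymbol{u}})$ with $\bigoplus_{\boldsymbol{k}^{\boldsymbol{w}}\boldsymbol{x}^{\boldsymbol{u}}\rightarrow f}\boldsymbol{k}^{\boldsymbol{w}}$ via the ANF, then convert the membership condition $\boldsymbol{k}^{\boldsymbol{w}}\boldsymbol{x}^{\boldsymbol{u}}\rightarrow f$ into the trail-parity condition using Theorem~\ref{thm.monomial_prediction}. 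The bookkeeping step you flag (that extracting the coefficient of $\boldsymbol{x}_I^{\boldsymbol{1}}$ and then setting $\boldsymbol{x}_{I^c}=\boldsymbol{0}$ isolates precisely the terms with public part $\boldsymbol{x}^{\boldsymbol{u}}$, $\boldsymbol{u}_I=\boldsymbol{1}$, $\boldsymbol{u}_{I^c}=\boldsymbol{0}$) is handled correctly and is consistent with the paper's definition of $\mathtt{Coe}$ in the cube-attack subsection.
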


\subsubsection{MILP model for monomial trails.} It is a difficult task to search all the monomial trails manually. Since Xiang et al. \cite{AC:XZBL16} first transformed the propagation of bit-based division property into an MILP model, it only becomes possible to solve such searching problems by using off-the-shelf MILP solvers. To construct an MILP model for the monomial trail of a Boolean function, one needs only to model three basic operations, i.e.,  \texttt{COPY}, \texttt{AND} and \texttt{XOR}. Please refer to Appendix \ref{sec.model} for details.

\subsection{Nested Monomial Prediction With NBDP and CMP Techniques} 
\label{subsec.nestmonomialprediction}
At {Asiacrypt} 2021, Hu et al.\cite{AC:HSTWW21} proposed a framework, called nested monomial prediction, to exactly recover  superpolys. For a Boolean function $f(\boldsymbol{x}, \boldsymbol{k}) = f_{r-1}\circ \boldsymbol{f}_{r-2}\circ\cdots \circ \boldsymbol{f}_0(\boldsymbol{x},\boldsymbol{k})$, denote the input and output of $\boldsymbol{f}_i$ by $\boldsymbol{y}_i$ and $\boldsymbol{y}_{i+1}$ respectively. To compute $\mathtt{Coe}(f, \boldsymbol{x}^{\boldsymbol{u}})$, the process is as follows:
\begin{enumerate}
  \item Set $n = r - 1$, $Y_{n}=\{f\}$ and set a polynomial $p=0$. 
  \item \label{step.iterative} Choose $l$ such that $0 < l < n$ with certain criterion, and set $Y_l=\emptyset$ and $T_l=\emptyset$. 
  \item \label{step.expand}Express each term in $Y_n$ with $\boldsymbol{y}_{l}$ by constructing and solving MILP model of monomial prediction and save the terms $\pi_{\boldsymbol{u}_{l}}(\boldsymbol{y}_{l})$ 
 satisfying  that  the size of $\{\pi_{\boldsymbol{u}_{n}}(\boldsymbol{y}_{n})\in Y_n\mid \pi_{\boldsymbol{u}_{l}}(\boldsymbol{y}_{l})\rightarrow \pi_{\boldsymbol{u}_{n}}(\boldsymbol{y}_{n})\}$ is odd into $T_{l}$. 
  \item \label{step.solve}For each $\pi_{\boldsymbol{u}_{l}}(\boldsymbol{y}_{l})\in T_{l}$, compute $\mathtt{Coe}(\pi_{\boldsymbol{u}_{l}}(\boldsymbol{y}_{l}), \boldsymbol{x}^{\boldsymbol{u}})$ by constructing and solving MILP model of monomial prediction. If the model about $\pi_{\boldsymbol{u}_{l}}(\boldsymbol{y}_{l})$ is successfully solved with acceptable time, update $p$ by $p\oplus\mathtt{Coe}(\pi_{\boldsymbol{u}_{l}}(\boldsymbol{y}_{l}), \boldsymbol{x}^{\boldsymbol{u}})$ and save the unsolved $\pi_{\boldsymbol{u}_{l}}(\boldsymbol{y}_{l})$ into $Y_{l}$.  
  \item If $Y_{l}\neq\emptyset$, set $n=l$ and go to Step 2. Otherwise, return the polynomial $p$.
\end{enumerate}

The idea of Step \ref{step.expand} and Step \ref{step.solve} comes  from Theorem \ref{thm.monomial_prediction} and Theorem \ref{thm.Superpoly_Recovery}, i.e., 
\begin{align}
  \mathtt{Coe}(f, \boldsymbol{x}^{\boldsymbol{u}})&=\bigoplus_{\pi_{\boldsymbol{u}_n}(\boldsymbol{y}_n)\rightarrow f}\mathtt{Coe}(\pi_{\boldsymbol{u}_n}(\boldsymbol{y}_n), \boldsymbol{x}^{\boldsymbol{u}})\label{eq.nested_monomial_prediction}\\
  &=\bigoplus_{\pi_{\boldsymbol{u}_n}(\boldsymbol{y}_n)\rightarrow f}\bigoplus_{\pi_{\boldsymbol{u}_l}(\boldsymbol{y}_l)\rightarrow \boldsymbol{y}_n}\mathtt{Coe}(\pi_{\boldsymbol{u}_l}(\boldsymbol{y}_l), \boldsymbol{x}^{\boldsymbol{u}})\\
  &=\bigoplus_{\pi_{\boldsymbol{u}_l}(\boldsymbol{y}_l)\in T_l}\mathtt{Coe}(\pi_{\boldsymbol{u}_l}(\boldsymbol{y}_l), \boldsymbol{x}^{\boldsymbol{u}})\\
  &= p \oplus \left(\bigoplus_{\pi_{\boldsymbol{u}_l}(\boldsymbol{y}_l)\in Y_l}\mathtt{Coe}(\pi_{\boldsymbol{u}_l}(\boldsymbol{y}_l), \boldsymbol{x}^{\boldsymbol{u}})\right).
\end{align}

Since the number of monomial trails grows sharply as the number of rounds of a cipher increases, it becomes infeasible to compute a superpoly for a high number of rounds with nested monomial prediction.
At {Asiacrypt} 2022, He et al. \cite{AC:HHPW22} proposed new techniques to improve the nested monomial prediction. They no longer took  the way of trying to solve out the coefficient of $\boldsymbol{x}^{\boldsymbol{u}}$ in $\pi_{\boldsymbol{u}_{l}}(\boldsymbol{y}_{l})$ at multiple numbers of middle rounds. Instead, for a fixed number of middle round $r_m$, they focused  on recovering  a set of \textit{valuable terms} (see Definition  \ref{def.valuable_terms}), denoted by $\mathtt{VT}_{r_m}$,  and then computing   coefficient of $\boldsymbol{x}^{\boldsymbol{u}}$ in every \textit{valuable term}. They discard the terms $\pi_{\boldsymbol{u}_{r_m}}(\boldsymbol{y}_{r_m})$ satisfying  there exists no $\boldsymbol{k}^{\boldsymbol{w}}$ such that $\boldsymbol{k}^{\boldsymbol{w}}\boldsymbol{x}^{\boldsymbol{u}}\rightsquigarrow \pi_{\boldsymbol{u}_{r_m}}(\boldsymbol{y}_{r_m})$, i.e., $\mathtt{Coe}(\pi_{\boldsymbol{u}_{r_m}}(\boldsymbol{y}_{r_m}), \boldsymbol{x}^{\boldsymbol{u}})=0$ in Eq.  \eqref{eq.nested_monomial_prediction} for $n=r_m$. The framework of this technique is as follows:
\begin{enumerate}
  \item Try to recover $\mathtt{VT}_{r_m}$. If the model is solved within an acceptable time, go to Step 2.
  \item \label{step.solve2}For each term $\pi_{\boldsymbol{u}_{r_m}}(\boldsymbol{y}_{r_m})$ in $\mathtt{VT}_{r_m}$, compute $\mathtt{Coe}(\pi_{\boldsymbol{u}_{r_m}}(\boldsymbol{y}_{r_m}), \boldsymbol{x}^{\boldsymbol{u}})$ and then sum all of them.    
\end{enumerate}
To recover $\mathtt{VT}_{r_m}$, He et al. proposed two techniques: non-zero bit-based division property (NBDP) and core monomial prediction (CMP), which led to great improvement of the complexity of recovering the \textit{valuable terms} compared to nested monomial prediction. For  details, please refer ~\cite{AC:HHPW22}.  

\begin{definition}[{Valuable terms}~\cite{AC:HHPW22}] \label{def.valuable_terms}
  For a Boolean function $f(\boldsymbol{x}, \boldsymbol{k}) = f_{r-1}\circ \boldsymbol{f}_{r-2}\circ\cdots \circ \boldsymbol{f}_0(\boldsymbol{x},\boldsymbol{k})$, denote the input and output of $\boldsymbol{f}_i$ by $\boldsymbol{y}_i$ and $\boldsymbol{y}_{i+1}$, respectively. Given $0 \le r_m < r$, if a term $\pi_{\boldsymbol{u}_{r_m}}(\boldsymbol{y}_{r_m})$ satisfies (1) $\pi_{\boldsymbol{u}_{r_m}}(\boldsymbol{y}_{r_m}) \rightarrow f$ and (2) $\exists\boldsymbol{k}^{\boldsymbol{w}}$ such that $\boldsymbol{k}^{\boldsymbol{w}}\boldsymbol{x}^{\boldsymbol{u}}\rightsquigarrow \pi_{\boldsymbol{u}_{r_m}}(\boldsymbol{y}_{r_m})$, then it is called a \textit{valuable term} of $\mathtt{Coe}(f, \boldsymbol{x}^{\boldsymbol{u}})$ at round $r_m$.
  
\end{definition}

\section{Improvements to Correlation Cube Attack}
\label{sec.Improvements of Correlation Cube Attacks}
 As the number of rounds of a cipher increases, it becomes infeasible to search small-size \textit{ISoC}s with low-degree superpolys. Correlation cube attack~\cite{EC:LYWL18} provides a viable solution to recover keys by using the correlation property between keys and superpolys, allowing for the use of high-degree superpolys. However, the correlation cube attack has not shown significant improvements or practical applications since its introduction. We revisit this attack first and then propose strategies to improve it.
 
 For convenience, we will continue to use the notations from Section \ref{sec.correlation_cube_attack}, where $$f_I(\boldsymbol{x}_{J},\boldsymbol{k})=\bigoplus_{i=1}^rh_iq_i.$$ 
In the online phase of a correlation cube attack, the adversary  computes the values of $f_I$ for all possible values of $\boldsymbol{x}_{J}$. Using these values, the adversary can make guesses about the value of $h_i$ in $Q_I$. The guessing strategy is as follows: for the tuple $(I,h_i,1)$ satisfying $\Pr(h_i = 1 \mid f_I) > p$, if there exists a value of $f_I$ is 1,  guess $h_i = 1$; for the tuple $(I,h_i,0)$ satisfying $\Pr(h_i = 0 \mid f_I) > p$, if $f_I \equiv 0$,  guess $h_i = 0$. Therefore, the adversary can  obtain some low-degree equations over $\boldsymbol{k}$. 

Now we examine  the probability  of one  such equation being correct.  For certain $i$, in the first case, the success  probability is $\Pr(h_i=1\mid f_I\not\equiv 0)$. If   $r>1$, and $f_I=1$, $q_i = 1$ and $\bigoplus_{j\neq i}h_j q_j = 1$ for some value of $\boldsymbol{x}_{I^c}$, then we have $h_{i}=0$. That is, the guess about $h_i$ is incorrect.  In the second case, the success probability is $\Pr(h_i=0\mid f_I\equiv 0)$. If $r>1$ and $f_I\equiv 0$, there still exists the possibility that  $h_{i}=1$ and $q_i \equiv \bigoplus_{j\neq i}h_j q_j$, leading to incorrect guess of $h_i$.

Therefore, since in the case  $r>1$ only probabilistic equations can be obtained,
 we first improve the strategy by constraining $r=1$.  That is, we consider the case 
 $$f_I = hq,$$
 and call the \textit{ISoC} $I$ satisfying this condition a ``special" \textit{ISoC}.  Note that now the success probability becomes 1 for the first case, and  the fail probability for the second case is actually equal to $\Pr(h = 1 , q \equiv 0)$. Considering there are a set of special \textit{ISoC}s $\{I_1, \cdots, I_m\}$ such that $f_{I_i} = hq_i$, we can modify 
the strategy as follows: if $\exists i$ such that $f_{I_i}\not\equiv 0$, guess $h=1$; otherwise, guess $h = 0$. The success probability is still 1 for the first case. The fail probability for the second case is now reduced to $\Pr(h = 1 , q_1\equiv 0,\ldots, q_m\equiv 0)$. 
In summary, we can improve the success probability of the guessing by searching for a large number of special \textit{ISoC}s. 

Based on the above observations, we propose the improved correlation cube attack in Algorithm \ref{alg.preprocessing phase of improved correlation cube attacks} and Algorithm \ref{alg.online phase of improved correlation cube attacks}.
This attack is executed in two phases: 
\begin{enumerate}
    \item \textbf{Preprocessing phase}: 
    \begin{itemize}
        \item[a.] Identify special \textit{ISoC}s.
        \item[b.] For each $h$, gather all the special \textit{ISoC} $I$ for which $h$ is a factor of $f_I$ into a set $T_h$.
        \item[c.] To reduce the number of equations derived from wrong  guesses of $h$, for those $h$ whose success probability in the second case is at or below a threshold $p$, they will be exclusively guessed in the first case. Their associated $T_h$ are then added to a set $\mathcal{T}_1$.
        \item[d.] The remaining $h$ will be guessed in both cases with their associated $T_h$ forming  a set $\mathcal{T}$.
    \end{itemize}
    \item \textbf{Online phase}: 
    \begin{itemize}
        \item[a.] Computes the value of $f_I$  for each \textit{ISoC} $I$.
        \item[b.] For every $T_h$ in $\mathcal{T}$, make a guess on the value of $h$ based on $f_I$'s value for all $I$ in $T_h$.
        \item[c.] If for any $T_h$ in $\mathcal{T}_1$, the values of $f_I$  for all $I$ in $T_h$ satisfy the condition in the first case, then $h = 1$. Otherwise, no guess  is formulated concerning $h$.
        \item[d.] Store the equations  $h = 1$ in to a set $G_1$, while store the other equations into  a set $G_0$. Note that only equations in $G_0$ may be incorrect.
        \item[e.] Using these derived equations along with partial key guesses, we can try to obtain a candidate of the key. If verifications for all partial key guesses do not yield a valid key, it indicates that there exist incorrect equations. In this case, modify some equations from $G_0$   and solve again until a valid key is obtained. Repeat this iteration  until the correct key is ascertained.
    \end{itemize}
\end{enumerate}

A crucial factor for the success of this attack is to acquire a significant number of special \textit{ISoC}s. To achieve this goal, the first step is to search for a large number of good \textit{ISoC}s and recover their corresponding superpolys. Then, low-degree factors of these superpolys need to be computed.

Using degree estimation techniques is one of the common methods for searching cubes. In Section \ref{sec:Find_cube}, we will first introduce a vector numeric mapping technique to improve the accuracy of degree estimation. By combining this attack, we will propose an algorithm for fast search of lots of good \textit{ISoC}s on a large scale. 

To our knowledge, it is difficult to decompose a complicated Boolean polynomial. To solve this problem, we propose a novel and effective technique to recover superpolys in Section \ref{sec.recover superpoly for novel perspective}. Using this technique, not only  the computational complexity for recovering superpolys can be  reduced, making it feasible to recover a large number of superpolys, but also it  allows for obtaining compact superpolys that are easy to decompose.

\begin{algorithm}
  \label{alg.preprocessing phase of improved correlation cube attacks}
\DontPrintSemicolon
  \SetAlgoLined
  Generate a set $\mathcal{I}$ of \textit{ISoC}'s;\\
  $\mathcal{T} = \emptyset$, and $\mathcal{T}_1 =\emptyset$;\\
  \For {each \textit{ISoC} $I$ in $\mathcal{I}$}{
    Recover the superpoly $f_I$;\\ 
    \For {each low-degree factor $h$ of $f_I$} {
      If $T_h\in \mathcal{T}$, set $T_h = T_h \cup \{I\}$; Otherwise, insert $T_h=\{I\}$ into $\mathcal{T}$;  
    }
  }
  \For {$T_h$ in $\mathcal{T}$} {
  Estimate the conditional probability $\Pr(h = 0 \mid f_I = 0 \text{ for } \forall I \in T_h)$; If its value is $ <= p$, insert $T_h$ into $\mathcal{T}_1$ and remove $T_h$ from $\mathcal{T}$. 
  }
  \KwRet{$\mathcal{T}$ and $\mathcal{T}_1$.}
  \caption{Preprocessing Phase of Improved Correlation Cube Attacks}
\end{algorithm}

\begin{algorithm}
  \label{alg.online phase of improved correlation cube attacks}
\DontPrintSemicolon
\SetKwData{Req}{\textbf{Require:}}
  \SetAlgoLined
  \Req $\mathcal{T}$ and $\mathcal{T}_1$;\\
   $\mathcal{I} = \bigcup_{T_h\in \mathcal{T}\cup\mathcal{T}_1}T_h$\\
  $G_0 = \emptyset$ and $G_1 = \emptyset$;\\
  \For {each \textit{ISoC} $I$ in $\mathcal{I}$}{
    Compute the sum of the output function $f$ over all values in the cube $C_I$, i.e., the value of the superpoly $f_I$;\\
  } 
  \For {$T_h$ in $\mathcal{T}$} {
    \eIf {for any $I\in T_h$ the value of $f_I$ is equal to 0 } {
      Set $G_0 = G_0 \cup \{h = 0\}$;}{Set $G_1 = G_1 \cup \{h = 1\}$;}
  }
  \For {$T_h$ in $\mathcal{T}_1$} {
    \If {there exists $I\in T_h$ s.t. the value of $f_I$ is equal to 1 } {Set $G_1 = G_1 \cup \{h = 1\}$;}
  }
  Set $e = 0$; \\
  \label{{step.adjust_g0}}\For {all possible choices of $e$ equations from $G_0$}
  {Reset $h$ = 1 for these $e$ equations, and remain others in $G_0$;\\
  Solve these $|G_0| + |G_1|$ equations and check whether the solutions are correct;}
  If none of the solutions is correct, set $e = e + 1$ and go to  Step 20. 
  \caption{Online Phase of Improved Correlation Cube Attacks}
\end{algorithm}



\section{Recover Superpolys From A Novel Perspective}
\label{sec.recover superpoly for novel perspective}

\subsection{Motivation}
As discussed in Section \ref{sec.Improvements of Correlation Cube Attacks}, we need lots of {special} \textit{ISoC}s to improve the correlation cube attack. On the one hand, it is still difficult to compute the factor of a complicated polynomial effectively with current techniques to our best knowledge. On the other hand, the efficiency of recovering superpolys needs to be improved in order to recover a large number of superpolys within an accetable  time. Therefore, we propose new techniques to address the aforementioned issues. Let $f(\boldsymbol{x}, \boldsymbol{k}) = f_{r-1}\circ \boldsymbol{f}_{r-2}\circ\cdots \circ \boldsymbol{f}_0(\boldsymbol{x},\boldsymbol{k})$ and denote the input and output of $\boldsymbol{f}_i$ by $\boldsymbol{y}_i$ and $\boldsymbol{y}_{i+1}$, respectively. Here we adopt the notations used in the monomial prediction technique (see Section \ref{subsec:monomialpred}).
Since \begin{align*}
  \mathtt{Coe}(f, \boldsymbol{x}^{\boldsymbol{u}})&= \bigoplus_{\pi_{\boldsymbol{u}_{r_m}}(\boldsymbol{y}_{r_m})}\mathtt{Coe}(f, \pi_{\boldsymbol{u}_{r_m}}(\boldsymbol{y}_{r_m}))\mathtt{Coe}(\pi_{\boldsymbol{u}_{r_m}}(\boldsymbol{y}_{r_m}), \boldsymbol{x}^{\boldsymbol{u}})\\
  &=\bigoplus_{\pi_{\boldsymbol{u}_{r_m}}(\boldsymbol{y}_{r_m})\rightarrow f}\mathtt{Coe}(\pi_{\boldsymbol{u}_{r_m}}(\boldsymbol{y}_{r_m}), \boldsymbol{x}^{\boldsymbol{u}})\\
  &=\bigoplus_{\pi_{\boldsymbol{u}_{r_m}}(\boldsymbol{y}_{r_m})\rightarrow f \text{ and }\exists \boldsymbol{w}\  \text{s.t.} \ \boldsymbol{k}^{\boldsymbol{w}} \boldsymbol{x}^{\boldsymbol{u}}\rightsquigarrow \pi_{\boldsymbol{u}_{r_m}}(\boldsymbol{y}_{r_m}) }\mathtt{Coe}(\pi_{\boldsymbol{u}_{r_m}}(\boldsymbol{y}_{r_m}), \boldsymbol{x}^{\boldsymbol{u}}).
\end{align*} 
By  Definition \ref{def.valuable_terms}, the superpoly is equal to $$\mathtt{Coe}(f, \boldsymbol{x}^{\boldsymbol{u}})=\bigoplus_{\pi_{\boldsymbol{u}_{r_m}}(\boldsymbol{y}_{r_m})\in \mathtt{VT}_{r_m}}\mathtt{Coe}(\pi_{\boldsymbol{u}_{r_m}}(\boldsymbol{y}_{r_m}), \boldsymbol{x}^{\boldsymbol{u}}).$$ Therefore, recovering a superpoly requires two steps: obtaining the \textit{valuable terms} 
 $\mathtt{VT}_{r_m}$ and recovering the coefficients $\mathtt{Coe}(\pi_{\boldsymbol{u}_{r_m}}(\boldsymbol{y}_{r_m}), \boldsymbol{x}^{\boldsymbol{u}})$. 
The specific steps are as follows:
\begin{enumerate}
  \item Try to obtain $\mathtt{VT}_{r_m}$. If the model is solved within an acceptable time, go to Step 2.
  \item \label{step.solve3}For each term $\pi_{\boldsymbol{u}_{r_m}}(\boldsymbol{y}_{r_m})$ in $\mathtt{VT}_{r_m}$, compute $\mathtt{Coe}(\pi_{\boldsymbol{u}_{r_m}}(\boldsymbol{y}_{r_m}), \boldsymbol{x}^{\boldsymbol{u}})$ with our new techniques and sum them.    
\end{enumerate}
We will provide a detailed explanation of the procedures for each step.

\subsection{Obtain Valuable Terms}
One important item to note about the widly used MILP solver, the Gurobi optimizer, is that model modifications are done in a {lazy} fashion, meaning that effects of modifications of a model are not seen immediately. 
We can set up an MILP model with callback function indicating whether the optimizer finds a new solution.
Algorithm \ref{alg.obtain_vt} shows the process of how to obtain the $r_m$-round \textit{Valuable Terms}. The main steps are:
\begin{enumerate}
    \item Establish a model $\mathcal{M}$ to search for all trails $\boldsymbol{k}^{\boldsymbol{w}}\boldsymbol{x}^{\boldsymbol{u}} \rightsquigarrow \pi_{\boldsymbol{u}_{r_1}}(\boldsymbol{y}_{r_1})\rightsquigarrow \cdots \rightsquigarrow f$.
    \item Solve the model $\mathcal{M}$. Once a trail is found, go to Step 3. If there is no solution, go to Step 4.
    \item (\texttt{VTCallbackFun}) Determine whether $\pi_{\boldsymbol{u}_{r_m}}(\boldsymbol{y}_{r_m})\rightarrow f$ by the parity of the number of trails  $\pi_{\boldsymbol{u}_{r_m}}(\boldsymbol{y}_{r_m})\rightsquigarrow f$. If $\pi_{\boldsymbol{u}_{r_m}}(\boldsymbol{y}_{r_m})\rightarrow f$, add $\pi_{\boldsymbol{u}_{r_m}}(\boldsymbol{y}_{r_m})$ to the set $\mathtt{VT}_{r_m}$. Remove all trails from $\mathcal{M}$ that satisfy $\boldsymbol{k}^{\boldsymbol{w}}\boldsymbol{x}^{\boldsymbol{u}}\rightsquigarrow \pi_{\boldsymbol{u}_{r_m}}(\boldsymbol{y}_{r_m}) \rightsquigarrow f$. Go to the Step 2.
    \item Return the \textit{Valuable Terms} $\mathtt{VT}_{r_m}$.
\end{enumerate}
Note that for each $\pi_{\boldsymbol{u}_{r_m}}(\boldsymbol{y}_{r_m})$ satisfying $\pi_{\boldsymbol{u}_{r_m}}(\boldsymbol{y}_{r_m})\rightsquigarrow f$, the parity of the number of trails is calculated only once due to the removal of all trails satisfying $\boldsymbol{k}^{\boldsymbol{w}}\boldsymbol{x}^{\boldsymbol{u}}\rightsquigarrow \pi_{\boldsymbol{u}_{r_m}}(\boldsymbol{y}_{r_m}) \rightsquigarrow f$.

He et al.\cite{AC:HHPW22} also applied the same framework, but they used different techniques. By combining their NBDP and DBP techniques, we can further improve the efficiency  of recovering $\mathtt{VT}_{r_m}$. We will show the results of experiments in Section \ref{apptrivium}.

\subsection{Variable Substitution Technique for Coefficient Recovery}
For a Boolean function $f(\boldsymbol{x}, \boldsymbol{k}) = f_{r-1}\circ \boldsymbol{f}_{r-2}\circ\cdots \circ \boldsymbol{f}_0(\boldsymbol{x},\boldsymbol{k})$ whose inputs are $\boldsymbol{x}\in\mathbb{F}_2^n$ and $\boldsymbol{k}\in\mathbb{F}_2^m$,  denote the input and output of $\boldsymbol{f}_i$ by $\boldsymbol{y}_i$ and $\boldsymbol{y}_{i+1}$, respectively. We study about the problem of recovering $\mathtt{Coe}(\pi_{\boldsymbol{u}_{r_m}}(\boldsymbol{y}_{r_m}), \boldsymbol{x}^{\boldsymbol{u}})$ at middle rounds from an algebraic perspective. Let $\overleftarrow{\boldsymbol{f}_{r_m}}$ denote $\boldsymbol{f}_{r_m-1}\circ \cdots \circ \boldsymbol{f}_0$, i.e., $\boldsymbol{y}_{r_m} = \overleftarrow{\boldsymbol{f}_{r_m}}(\boldsymbol{x}, \boldsymbol{k})$. Assume the algebraic normal form of  
$\overleftarrow{\boldsymbol{f}_{r_m}}$ in $\boldsymbol{x}$ is $$\overleftarrow{\boldsymbol{f}_{r_m}}=\bigoplus_{\boldsymbol{v}\in\mathbb{F}_2^n}\boldsymbol{h}_{\boldsymbol{v}}(\boldsymbol{k})\boldsymbol{x}^{\boldsymbol{v}}.$$ 
Then one could get that
$\mathtt{Coe}(\pi_{\boldsymbol{u}_{r_m}}(\boldsymbol{y}_{r_m}), \boldsymbol{x}^{\boldsymbol{u}})$ is an XOR of some products over $\boldsymbol{h}_{\boldsymbol{v}}(\boldsymbol{k})$. Assume that the number of different non-constant ${\boldsymbol{h}}_{\boldsymbol{v}}[j]$'s is $t$ for all $\boldsymbol{v}$ and $j$, where ${\boldsymbol{h}}_{\boldsymbol{v}}[j]$ represents the $j$-th component of ${\boldsymbol{h}}_{\boldsymbol{v}}$. Now we introduce new intermediates denoted by $\boldsymbol{z}$ to substitute these $t$ $\boldsymbol{h}_{\boldsymbol{v}}[j]$'s. Without loss of generality, assume $\boldsymbol{z} = \boldsymbol{d}(\boldsymbol{k})$, where $\boldsymbol{d}[i]$ is equal to a certain non-constant $\boldsymbol{h}_{\boldsymbol{v}}[j]$. From the ANF of $\overleftarrow{\boldsymbol{f}_{r_m}}$, it is natural to derive the vectorial Boolean function $\boldsymbol{g}_{r_m}$ such that $\boldsymbol{y}_{r_m} = \boldsymbol{g}_{r_m}(\boldsymbol{x},\boldsymbol{z})$, whose ANF in $\boldsymbol{x}$ and $\boldsymbol{z}$ can be written as 
$$\boldsymbol{g}_{r_m}[j] = \bigoplus_{\boldsymbol{v}}a_{\boldsymbol{v},j}\boldsymbol{z}^{\boldsymbol{c}_{\boldsymbol{v},j}}\boldsymbol{x}^{\boldsymbol{v}},$$ 
where $\boldsymbol{g}_{r_m}[j]$ represents $j$-th component of $\boldsymbol{g}_{r_m}$, and $a_{\boldsymbol{v},j}\in \mathbb{F}_2$ and $\boldsymbol{c}_{{\boldsymbol{v}},j}\in\mathbb{F}_2^t$ are both determined by $\boldsymbol{v}$ and $j$.\par 

Example \ref{ex.subtech} serves as an illustration of the process of variable substitution. The transition from round 0 to round $r_m$ with $(k_0k_1\oplus k_2k_5\oplus k_9+k_{10})(k_2k_7\oplus k_8)x_0x_2x_3$ will have at least 4 * 2 = 8 monomial trails. But after variable substitution, there remains  only one trail $z_0z_2x_0x_2x_3$, which means we have consolidated 8 monomial trails into a single one.  As the coefficients become more intricate and the number of terms in the product increases, the magnitude of this reduction becomes more pronounced. Additionally, it is evident that this also makes the superpoly more concise. In general, the more compact the superpoly is, the easier it is to factorize.

\begin{example}
  \label{ex.subtech}
  Assume $\boldsymbol{y}_{r_m} = \boldsymbol{g}_{r_m}(\boldsymbol{x},\boldsymbol{k}) = [(k_0k_1\oplus k_2k_5\oplus k_9+k_{10})x_0x_2\oplus (k_3\oplus k_6)x_5, (k_2k_7\oplus k_8)x_3\oplus x_6x_7]$. Through variable substitution, all coefficients within $\boldsymbol{y}_{r_m}$, including $k_0k_1\oplus k_2k_5\oplus k_9+k_{10}$, $k_3\oplus k_6$, and $k_2k_7\oplus k_8$, will be replaced with new variables $z_0$, $z_1$, and $z_2$, respectively. Then $\boldsymbol{y}_{r_m}$ could be rewritten as $\boldsymbol{y}_{r_m} = \boldsymbol{g}_{r_m}(\boldsymbol{x},\boldsymbol{z}) = [z_0x_0x_2\oplus z_1x_5, z_2x_3\oplus x_6x_7]$.
\end{example}

Therefore, we take such a way of substituting variables at the middle round $r_m$ to recover $\mathtt{Coe}(\pi_{\boldsymbol{u}_{r_m}}(\boldsymbol{y}_{r_m}), \boldsymbol{x}^{\boldsymbol{u}})$, and the process is as follows:
\begin{enumerate}
  \item Compute the ANF of $\boldsymbol{y_{r_m}}$ in $\boldsymbol{x}$.
  \item Substitute all different non-constant $\boldsymbol{h}_{\boldsymbol{v}}[j]$ for all $\boldsymbol{v}$ and $j$ by new variables $\boldsymbol{z}$. 
  \item Recover $\mathtt{Coe}(\pi_{\boldsymbol{u}_{r_m}}(\boldsymbol{y}_{r_m}), \boldsymbol{x}^{\boldsymbol{u}})$ in $\boldsymbol{z}$ by monomial prediction.
\end{enumerate}  \par
In fact, to solve $\mathtt{Coe}(\pi_{\boldsymbol{u}_{r_m}}(\boldsymbol{y}_{r_m}), \boldsymbol{x}^{\boldsymbol{u}})$ in $\boldsymbol{z}$ by monomial prediction is equivalent to find all possible monomial trails $\boldsymbol{z}^{\boldsymbol{c}}\boldsymbol{x}^{\boldsymbol{u}} \rightsquigarrow  \pi_{\boldsymbol{u}_{r_m}}(\boldsymbol{y}_{r_m})$ about $\boldsymbol{c}$. We can construct an MILP model to describe all feasible trails.

\subsubsection{Model for recovering $\mathtt{Coe}(\pi_{\boldsymbol{u}_{r_m}}(\boldsymbol{y}_{r_m}), \boldsymbol{x}^{\boldsymbol{u}})$ in $\boldsymbol{z}$. } To describe monomial prediction into an MILP model, we actually  need only to construct an MILP model to describe all the trails for $\boldsymbol{g}_{r_m}$. Since the ANF of $\boldsymbol{g}_{r_m}$ is known,   three consecutive operations $\mathtt{Copy}\rightarrow\mathtt{And}\rightarrow\mathtt{XOR}$ are sufficient to describe $\boldsymbol{g}_{r_m}$. The process is as follows:
\begin{itemize}
  \item [\textendash][\texttt{Copy}] For each $x_i$ (resp. $z_i$), the number of copies is equal to the number of monomials divisible by $x_i$ (resp. $z_i$) contained in $\boldsymbol{g}_{r_m}[j]$ for all $j$. 
  \item [\textendash][\texttt{And}] \ Generate all monomials contained in $\boldsymbol{g}_{r_m}[j]$ for all $j$. 
  \item [\textendash][\texttt{XOR}] \ According to the ANF of each $\boldsymbol{g}_{r_m}[j]$, collect monomials using \texttt{XOR} to form $\boldsymbol{g}_{r_m}[j]$.
\end{itemize}
 We give an example  to show how to describe  $\boldsymbol{g}_{r_m}$ by $\mathtt{Copy}\rightarrow\mathtt{And}\rightarrow\mathtt{XOR}$. The algorithm for recovering  $\mathtt{Coe}(\pi_{\boldsymbol{u}_{r_m}}(\boldsymbol{y}_{r_m}), \boldsymbol{x}^{\boldsymbol{u}})$ can be found  in Algorithm \ref{alg.coefficient_recovery}.
 
\begin{example}
  \label{ex.copyandxor}
  If $\boldsymbol{y}_{r_m} = \boldsymbol{g}_{r_m}(\boldsymbol{x},\boldsymbol{z}) = (x_0x_1x_2\oplus x_0z_0\oplus z_1, x_2\oplus z_0z_1\oplus z_0)$, we can describe $\boldsymbol{g}_{r_m}$ by the following three steps.
  \begin{equation*}
      \begin{aligned}
  &(x_0, x_1, x_2, z_0, z_1) \overset{\textup{\texttt{Copy}}}{\longrightarrow} (x_0, x_0, x_1, x_2, x_2, z_0, z_0, z_0, z_1, z_1) \overset{\textup{\texttt{And}}}{\longrightarrow}\\
   &(x_0x_1x_2, x_0z_0, z_1, x_2, z_0z_1, z_0) \overset{\textup{\texttt{XOR}}}{\longrightarrow} (x_0x_1x_2\oplus x_0z_0\oplus z_1, x_2\oplus z_0z_1\oplus z_0)
      \end{aligned}
  \end{equation*}
\end{example}

\paragraph{Discussion.}\label{para.discussion} We have given a method of describing $\boldsymbol{g}_{r_m}$ into an MILP model, which is easy to understand and implement. In general, there may be other ways to construct the MILP model for a concrete $\boldsymbol{g}_{r_m}$. Of course, different  ways do not affect the correctness of the coefficients recovered. It is difficult to find theoretical methods to illustrate what kind of way of modeling $\boldsymbol{g}_{r_m}$ is easier to solve. In order to verify the improvement of our variable substitution technique over previous methods, we will compare the performance by some experiments. 

\begin{algorithm}
  \label{alg.coefficient_recovery}
  \SetKwData{Req}{\textbf{Require:}}
  \DontPrintSemicolon
  \SetAlgoLined
  \KwIn{$\boldsymbol{u}$, $\boldsymbol{u}_{r_m}$ and the ANF of $\boldsymbol{g}_{r_m}$}
  \KwOut{$q = \mathtt{Coe}(\pi_{\boldsymbol{u}_{r_m}}(\boldsymbol{y}_{r_m}), \boldsymbol{x}^{\boldsymbol{u}})$}
	 Declare an empty MILP model $\mathcal{M}$.
   Let $\mathrm{\mathbf{a}}$ be $n+t$ MILP variables of $\mathcal{M}$ corresponding to the $n+t$ components of $\boldsymbol{x}||\boldsymbol{z}$.\\
	 $\mathcal{M}.con \leftarrow \mathrm{a}_i = u_i$ for all $i\in[n]$.\\
	 Update  $\mathcal{M}$ according to the function $\boldsymbol{g}_{r_m}$ and denote $\mathrm{\mathbf{b}}$ as the output state of $\boldsymbol{g}_{r_m}$.\\
	 $\mathcal{M}.con \leftarrow \mathrm{b}_i = \boldsymbol{u}_{r_m}[i]$ for all $i$.\\
   $\mathcal{M}.optimize()$.\\

	 Prepare a hash table $H$ whose key is $t$-bit string and value is counter.\\
   \For{each feasible solution of $\mathcal{M}$}{
    Let $\boldsymbol{c}$ denote the solution  $(\mathrm{a}_{n}, \cdots, \mathrm{a}_{n+t-1})$.\\
		 $H[\boldsymbol{c}] \leftarrow H[\boldsymbol{c}] + 1$.}
		 Prepare a polynomial $q \leftarrow 0$.\\
		\For{each $\boldsymbol{c}$ satisfying $H[\boldsymbol{c}]$ is odd}
		{$q \leftarrow q \oplus \boldsymbol{z}^{\boldsymbol{c}}$.}
		\caption{Coefficient Recovery with Variable Substitution}
\end{algorithm}

\section{ Improved Method for Searching A Large Scale of Cubes}
\label{sec:Find_cube}
The search of \textit{ISoC}s in cube attacks often involves degree evaluations of cryptosystems. While the numeric mapping technique \cite{C:Liu17} offers lower complexity, it performs not well for Trivium-like ciphers when dealing with sets of adjacent indices. This limitation arises from the repeated accumulation of estimated degrees due to the multiplications of adjacent indices during updates. 
Although the monomial prediction technique \cite{AC:HSWW20} provides exact results, it is time-intensive. Thus, efficiently obtaining the exact degree of a cryptosystem remains a challenge. To efficiently search for promising cubes with adjacent indices on a large scale, we propose a compromise approach for degree evaluation called the ``vector numeric mapping" technique. This technique yields a tighter upper bound than the numeric mapping technique while maintaining lower time complexity than monomial prediction. Additionally, we have developed an efficient algorithm based on an MILP model for large-scale  search of \textit{ISoC}s.

\subsection{The Numeric Mapping}
Let  $\mathbb{B}_n$ be the set consisting of all $n$-variable Boolean functions. The numeric mapping \cite{C:Liu17}, denoted by $\texttt{DEG}$, is defined as 
\begin{align*}
	\texttt{DEG}:\quad\mathbb{B}_n\times\mathbb{Z}^n&\longrightarrow\mathbb{Z}\\
	(f,\boldsymbol{d})&\longmapsto \max_{a_{\boldsymbol{u}}\ne 0}\left\{\sum_{i=0}^{n-1}\boldsymbol{u}[i]\boldsymbol{d}[i]\right\},
\end{align*}
where $a_{\boldsymbol{u}}$ is the  coefficient of the term $x^{\boldsymbol{u}}$ in the ANF of $f$. 

Let $\boldsymbol{g}=(g_1,\ldots,g_n)$ be an $(m,n)$-vectorial Boolean function, i.e. $g_i\in\mathbb{B}_m$, $1\leq i\leq n$.
Then for $f\in\mathbb{B}_n$, the numeric degree of the composite function $h=f \circ \boldsymbol{g}=f(g_1,\ldots,g_n)$, denoted by $\texttt{DEG}(h)$, is defined as $\texttt{DEG}(f, \boldsymbol{d}_{\boldsymbol{g}})$, 
where $\boldsymbol{d}_{\boldsymbol{g}}[i] \ge \deg (g[i])$ for all $0 \le i \le n - 1$.  
The algebraic degree of $h$ is always no greater than $\texttt{DEG}(h)$, therefore, the algebraic degrees of  internal states of an NFSR-based cryptosystem can be estimated iteratively by using the numeric mapping.

\subsection{The Vector Numeric Mapping}


Firstly, we introduce the definition of  vector degree of a Boolean function, from which we will easily understand the motivation of the vector numeric mapping.
For the sake of simplicity, let $\deg(g_1,\ldots, g_n)$ represent $\left(\deg(g_1), \ldots, \deg(g_n) \right)$.

\begin{definition}[Vector Degree]\label{vardegree}
	Let $f$ be an $n$-variable Boolean function represented w.r.t. $\boldsymbol{x}_I$ as
	$$f(\boldsymbol{x}) = \bigoplus_{\boldsymbol{u}\in \mathbb{F}_2^d} g_{{\boldsymbol{u}}}(\boldsymbol{x}_{I^c}) \boldsymbol{x}_I^{\boldsymbol{u}},$$ where  $I   \subset [n]$,  $|I|=d$.
	The vector degree of $f$ w.r.t. $\boldsymbol{x}$ and the index set $I$, denoted by $\mathbf{vdeg}_{[I,\boldsymbol{x}]}$, is defined as 
	\begin{equation*}
		\mathbf{vdeg}_{[I,\boldsymbol{x}]}(f) = \deg(g_{\boldsymbol{u}_0}, g_{\boldsymbol{u}_1}, \ldots, g_{\boldsymbol{u}_{2^d - 1}})_{\boldsymbol{x}_{I^c}} =\left(\deg(g_{\boldsymbol{u}_0})_{\boldsymbol{x}_{I^c}},\ldots, \deg(g_{\boldsymbol{u}_{2^d - 1}})_{\boldsymbol{x}_{I^c}}\right), 
	\end{equation*}
	where $\boldsymbol{u}_j$ satisfies $\sum_{k=0}^{d - 1} {\boldsymbol{u}_j[k]}2^k = j$, $0\leq j\leq 2^d-1$.
	
\end{definition}

When we do not emphasize $I$ and $\boldsymbol{x}$, we abbreviate $\mathbf{vdeg}_{[I,\boldsymbol{x}]}$ as $\mathbf{vdeg}_{I}$ or $\mathbf{vdeg}$. Similarly, for a vectorial Boolean function $\boldsymbol{g}=(g_1,\ldots,g_n)$, we denote the vector degree of $\boldsymbol{g}$ by $\textbf{vdeg}(\boldsymbol{g})=\left(\textbf{vdeg}(g_1), \ldots, \textbf{vdeg}(g_n)\right)$. 

According to Definition \ref{vardegree}, it is straightforward to get an upper bound of the vector degree of $f$, which is shown in Proposition \ref{vecdegupbound}.
\begin{proposition}\label{vecdegupbound} 
	For any $0 \le j < 2^{\left\lvert I \right\rvert}$, $\mathbf{vdeg}_{[I,\boldsymbol{x}]}(f)[j] \le n - \left\lvert I \right\rvert $.
\end{proposition}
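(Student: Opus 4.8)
The plan is to unwind the definitions and then invoke the elementary fact that a Boolean function in $m$ variables has algebraic degree at most $m$. First I would recall that, by Definition \ref{vardegree}, the $j$-th coordinate of the vector degree is $\mathbf{vdeg}_{[I,\boldsymbol{x}]}(f)[j] = \deg(g_{\boldsymbol{u}_j})_{\boldsymbol{x}_{I^c}}$, where $\boldsymbol{u}_j \in \mathbb{F}_2^d$ is the binary encoding of $j$ and $g_{\boldsymbol{u}_j}$ is the coefficient polynomial of the monomial $\boldsymbol{x}_I^{\boldsymbol{u}_j}$ in the ANF of $f$ taken with respect to $\boldsymbol{x}_I$.

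Next I would observe, directly from the expression $g_{\boldsymbol{v}}(\boldsymbol{x}_{I^c}) = \bigoplus_{\{\boldsymbol{u}\in\mathbb{F}_2^{n}\mid \boldsymbol{u}_I = \boldsymbol{v}\}} a_{\boldsymbol{u}}\,\boldsymbol{x}_{I^c}^{\boldsymbol{u}_{I^c}}$ recorded in the ANF subsection, that each $g_{\boldsymbol{u}_j}$ is a Boolean function depending solely on the $\left\lvert I^c\right\rvert = n - \left\lvert I\right\rvert$ variables $\boldsymbol{x}_{I^c}$. Its ANF therefore involves only monomials of the form $\boldsymbol{x}_{I^c}^{\boldsymbol{w}}$ with $\boldsymbol{w}\in\mathbb{F}_2^{\,n-\left\lvert I\right\rvert}$, so the Hamming weight of any exponent that can occur is at most $n - \left\lvert I\right\rvert$. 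Consequently $\deg(g_{\boldsymbol{u}_j})_{\boldsymbol{x}_{I^c}} \le n - \left\lvert I\right\rvert$, since the algebraic degree is the largest weight of an exponent with nonzero coefficient and this weight cannot exceed the number of available variables; equality can hold only via the full monomial $\prod_{i\in I^c} x_i$. As this bound is independent of $j$, the uniform estimate claimed in the proposition follows for every $0 \le j < 2^{\left\lvert I\right\rvert}$.

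There is no substantive obstacle here: the statement is an immediate corollary of the definitions together with the standard fact that any Boolean function on $m$ variables has algebraic degree at most $m$. The only point that warrants a moment's care is to confirm that the coefficient functions $g_{\boldsymbol{u}_j}$ genuinely live on the reduced variable set $\boldsymbol{x}_{I^c}$ rather than on all of $\boldsymbol{x}$, which is precisely how they are introduced in the ANF subsection; once this is noted, the bound is forced by the cardinality of $I^c$.
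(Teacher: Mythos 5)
Your proof is correct and is exactly the argument the paper has in mind: the paper offers no explicit proof, stating only that the bound is straightforward from Definition~\ref{vardegree}, and your unwinding of the definition (each coefficient function $g_{\boldsymbol{u}_j}$ depends only on the $n-\left\lvert I\right\rvert$ variables $\boldsymbol{x}_{I^c}$, hence has degree at most $n-\left\lvert I\right\rvert$) is precisely that straightforward argument made explicit.
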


Moreover, it is obvious that the vector degree of $f$ contains more information about $f$ than the algebraic degree. We can also derive the algebraic degree of $f$ from its vector degree, that is,
$$\deg(f) = \max_{0 \le j < 2^{\left\lvert I \right\rvert}}\{\mathbf{vdeg}_{I}(f)[j] + \wt(j)\}.$$
Therefore, the upper bound of the algebraic degree can be estimated by the upper bound of the vector degree. 

\begin{corollary}\label{degcor}
	Let $\boldsymbol{v}$ be an upper bound of the vector degree of $f$, i.e., $\mathbf{vdeg}_{[I,\boldsymbol{x}]}(f)\preccurlyeq \boldsymbol{v}$. Then we have
	$$\deg(f) \le \max_{0 \le j < 2^{\left\lvert I \right\rvert}}\left\{ \min\left\{\boldsymbol{v}[j], n - \left\lvert I \right\rvert \right\} + \wt(j)\right\}.$$ 
\end{corollary}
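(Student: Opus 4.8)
The plan is to derive the bound directly from the exact degree formula $\deg(f) = \max_{0 \le j < 2^{|I|}}\{\mathbf{vdeg}_I(f)[j] + \wt(j)\}$ established immediately before the corollary, by bounding each component $\mathbf{vdeg}_I(f)[j]$ from two sides and feeding the sharper of the two bounds into that formula. No structural manipulation of $f$ is required; everything reduces to a termwise comparison followed by a single application of the monotonicity of $\max$ with respect to componentwise $\le$.

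First I would record the two available upper bounds on a single component $\mathbf{vdeg}_I(f)[j] = \deg(g_{\boldsymbol{u}_j})_{\boldsymbol{x}_{I^c}}$. The hypothesis $\mathbf{vdeg}_{[I,\boldsymbol{x}]}(f)\preccurlyeq \boldsymbol{v}$ gives, by the definition of $\preccurlyeq$, that $\mathbf{vdeg}_I(f)[j] \le \boldsymbol{v}[j]$ for every $j$. Independently, Proposition \ref{vecdegupbound} gives $\mathbf{vdeg}_I(f)[j] \le n - |I|$, the underlying point being that $g_{\boldsymbol{u}_j}$ is a Boolean function in the $n - |I|$ variables $\boldsymbol{x}_{I^c}$, so its degree cannot exceed $n - |I|$. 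Since both inequalities hold simultaneously, the component is bounded by their minimum:
\[ \mathbf{vdeg}_I(f)[j] \le \min\{\boldsymbol{v}[j],\, n - |I|\}. \]

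Next I would substitute this into the exact degree formula. Adding the fixed nonnegative integer $\wt(j)$ preserves the inequality for each $j$, so
\[ \mathbf{vdeg}_I(f)[j] + \wt(j) \le \min\{\boldsymbol{v}[j],\, n - |I|\} + \wt(j), \]
and taking the maximum over $0 \le j < 2^{|I|}$ on both sides yields exactly the claimed inequality for $\deg(f)$.

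There is essentially no obstacle here: the corollary is an immediate consequence of Proposition \ref{vecdegupbound} together with the exact degree formula. Its entire content is the observation that the true degree of each coefficient $g_{\boldsymbol{u}_j}$ can never exceed $n - |I|$, so a possibly loose estimate $\boldsymbol{v}[j]$ may always be clamped to this value without weakening the resulting bound on $\deg(f)$. Note that since $\wt(j)$ is common to both arguments of the minimum, it is immaterial whether one takes the minimum before or after adding $\wt(j)$; the only genuine requirement is that the two component bounds be combined for the same index $j$ before maximizing over $j$.
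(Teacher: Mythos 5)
Your proof is correct and follows exactly the route the paper intends: the paper presents this corollary as an immediate consequence of the exact formula $\deg(f) = \max_{0 \le j < 2^{|I|}}\{\mathbf{vdeg}_I(f)[j] + \wt(j)\}$ combined with the clamp from Proposition \ref{vecdegupbound}, which is precisely your termwise argument. No gaps; the remark that the minimum may be taken before or after adding $\wt(j)$ is a nice touch but not needed.
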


In fact, the algebraic degree of $f$ is the degenerate form of the vector degree of $f$ w.r.t. $I = \emptyset$. Moreover, if $I_1 \subset I_2$, the vector degree of $f$ w.r.t. $I_1$ can be deduced from the vector degree of $f$ w.r.t. $I_2$, that is,
\begin{equation}\label{vdegI1I2}
	\mathbf{vdeg}_{I_1}(f)[j] = \max_{0 \le j'<2^{\lvert I_2 \rvert - \lvert I_1 \rvert}}\left\{\mathbf{vdeg}_{I_2}(f)[j'\cdot 2^{\lvert I_1 \rvert} + j] + \wt(j')\right\}
\end{equation}
for any $0 \le j < 2^{\lvert I_1 \rvert}$.
 
In order to estimate the vector degree of composite functions, we propose the concept of vector numeric mapping. 

\begin{definition}[Vector Numeric Mapping] \label{vectormd}
	Let $d \ge 0$. The vector numeric mapping, denoted by $\mathtt{VDEG}_d$, is defined as
	\begin{align*}
		\mathtt{VDEG}_d:\quad\mathbb{B}_n\times\mathbb{Z}^{n \times 2^d}&\longrightarrow\mathbb{Z}^{2^d}\\
		(f,V)&\longmapsto \boldsymbol{w},
	\end{align*}
where $f=\bigoplus_{\boldsymbol{u}\in \mathbb{F}_2^n }a_{\boldsymbol{u}}\boldsymbol{x}^{\boldsymbol{u}}$ and for any $0\leq j<2^d$,
	\begin{equation*}
		\boldsymbol{w}[j] := \max_{a_{\boldsymbol{u}} \neq 0}\max_{\substack{j_0,\cdots,j_{n - 1} \\ 0 \le j_i \le \boldsymbol{u}[i](2^d-1) \\ j = \bigvee_{i=0}^{n - 1} \boldsymbol{u}[i]j_i}}\left\{\sum_{i=0}^{n - 1} \boldsymbol{u}[i]V[i][j_i]\right\}.
	\end{equation*}

For an $(m,n)$-vectorial Boolean function $\boldsymbol{g}=(g_0,\ldots, g_{n-1})$, we define its vector numeric mapping as $\texttt{VDEG}(\boldsymbol{g},V)=(\texttt{VDEG}(g_0,V),\ldots,\texttt{VDEG}(g_{n-1}, V))$.

\end{definition}

\begin{theorem}
	\label{vdegthm}
	Let $f$ be an $n$-variable Boolean function and $\boldsymbol{g}$ be an $(m,n)$-vectorial Boolean function. Assume $\mathbf{vdeg}_I(g_i) \preccurlyeq \boldsymbol{v}_i$ for all $0 \le i \le n - 1$ w.r.t. an index set $I$. Then each component of the vector degree of $f \circ\boldsymbol{g}$ is less than or equal to the corresponding component of $\mathtt{VDEG}_{I}(f, V)$, where $V=(\boldsymbol{v}_0,\cdots,\boldsymbol{v}_{n - 1})$.
\end{theorem}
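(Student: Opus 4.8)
The plan is to read off the vector degree of $h=f\circ\boldsymbol{g}$ directly from the ANF of $f$ together with the $\boldsymbol{x}_I$-decompositions of the components $g_i$, and then to match the resulting degree bound coordinate-by-coordinate against the definition of $\mathtt{VDEG}_I(f,V)$ (Definition \ref{vectormd}), writing $d=|I|$ throughout. First I would substitute $\boldsymbol{g}$ into the ANF $f=\bigoplus_{a_{\boldsymbol{u}}\neq 0}\boldsymbol{y}^{\boldsymbol{u}}$, which gives
$$h=\bigoplus_{a_{\boldsymbol{u}}\neq 0}\ \prod_{i:\boldsymbol{u}[i]=1}g_i$$
since $g_i^2=g_i$ over $\mathbb{F}_2$. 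Then, for each $i$, I would fix the decomposition $g_i=\bigoplus_{\boldsymbol{t}\in\mathbb{F}_2^d}g_{i,\boldsymbol{t}}(\boldsymbol{x}_{I^c})\boldsymbol{x}_I^{\boldsymbol{t}}$; by Definition \ref{vardegree} we have $\deg(g_{i,\boldsymbol{t}})_{\boldsymbol{x}_{I^c}}=\mathbf{vdeg}_I(g_i)[j_i]\le V[i][j_i]$, where $j_i$ is the integer whose binary representation is $\boldsymbol{t}$.

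Next I would expand each product over these decompositions. The crucial algebraic step is the idempotency identity $\boldsymbol{x}_I^{\boldsymbol{t}_0}\cdots\boldsymbol{x}_I^{\boldsymbol{t}_{k-1}}=\boldsymbol{x}_I^{\boldsymbol{t}_0\vee\cdots\vee\boldsymbol{t}_{k-1}}$ over $\mathbb{F}_2$ (each $x_\ell^2=x_\ell$), which shows that the coefficient $H_{\boldsymbol{s}}$ of $\boldsymbol{x}_I^{\boldsymbol{s}}$ in $h$ equals
$$H_{\boldsymbol{s}}=\bigoplus_{a_{\boldsymbol{u}}\neq 0}\ \bigoplus_{\substack{(\boldsymbol{t}_i)_{i:\boldsymbol{u}[i]=1}\\ \bigvee_i\boldsymbol{t}_i=\boldsymbol{s}}}\ \prod_{i:\boldsymbol{u}[i]=1}g_{i,\boldsymbol{t}_i}(\boldsymbol{x}_{I^c}),$$
where the inner XOR runs over all tuples indexed by those $i$ with $\boldsymbol{u}[i]=1$ whose bitwise OR equals $\boldsymbol{s}$. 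By Definition \ref{vardegree}, $\mathbf{vdeg}_I(h)[j]=\deg(H_{\boldsymbol{s}})_{\boldsymbol{x}_{I^c}}$ for $\boldsymbol{s}$ the binary vector of $j$.

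Finally I would invoke subadditivity of the algebraic degree: the degree of an XOR is at most the maximum of its summand degrees (cancellation can only lower it), and the degree of a product is at most the sum of the factor degrees. Translating each $\boldsymbol{t}_i$ into its integer index $j_i$ converts the constraint $\bigvee_i\boldsymbol{t}_i=\boldsymbol{s}$ into $j=\bigvee_i\boldsymbol{u}[i]j_i$, while for $i$ with $\boldsymbol{u}[i]=0$ the forced choice $j_i=0$ together with the factor $\boldsymbol{u}[i]$ makes those coordinates vanish. This yields
$$\mathbf{vdeg}_I(h)[j]\le\max_{a_{\boldsymbol{u}}\neq 0}\ \max_{\substack{0\le j_i\le\boldsymbol{u}[i](2^d-1)\\ j=\bigvee_i\boldsymbol{u}[i]j_i}}\ \sum_{i=0}^{n-1}\boldsymbol{u}[i]V[i][j_i]=\mathtt{VDEG}_I(f,V)[j],$$
which is exactly the desired inequality coordinate-by-coordinate.

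I expect the main difficulty to lie in the product-expansion and coefficient-extraction step: one must correctly gather, for each target monomial $\boldsymbol{x}_I^{\boldsymbol{s}}$, precisely those tuples $(\boldsymbol{t}_i)$ whose bitwise OR equals $\boldsymbol{s}$, which hinges on the idempotency identity and careful bookkeeping of the OR constraint across the factors selected by $\boldsymbol{u}$. Once $H_{\boldsymbol{s}}$ is expressed in this form the degree bound is routine, and the only remaining care is verifying that the vector/integer dictionary $\boldsymbol{t}_i\leftrightarrow j_i$, $\boldsymbol{s}\leftrightarrow j$ converts the bound exactly into the maximization defining $\mathtt{VDEG}$ in Definition \ref{vectormd}, including the range constraints $0\le j_i\le\boldsymbol{u}[i](2^d-1)$.
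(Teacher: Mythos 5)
Your proposal is correct and follows essentially the same route as the paper's proof in Appendix D: expand each $g_i$ over $\boldsymbol{x}_I$, use idempotency so that a product of monomials $\boldsymbol{x}_I^{\boldsymbol{t}_i}$ contributes to $\boldsymbol{x}_I^{\bigvee_i\boldsymbol{t}_i}$, bound the coefficient degrees by sums of the $V[i][j_i]$, and match the resulting maximization against Definition \ref{vectormd}. The only cosmetic difference is that the paper bounds the vector degree of each $\boldsymbol{g}^{\boldsymbol{u}}$ separately before summing over $\boldsymbol{u}$, whereas you extract the full coefficient $H_{\boldsymbol{s}}$ of $h$ at once; the underlying argument is identical.
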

The proof of Theorem  \ref{vdegthm} is given in Appendix \ref{proof1}. By Theorem  \ref{vdegthm}, we know that the vector numeric mapping $\mathtt{VDEG}(f,V)$ gives an upper bound of the vector degree of the composite function $f\circ \boldsymbol{g}$ when $V$ is the upper bound of the vector degree of the vectorial Boolean function $\boldsymbol{g}$. 

For a Boolean function $f(\boldsymbol{x}) = f_{r-1}\circ \boldsymbol{f}_{r-2}\circ\cdots \circ \boldsymbol{f}_0(\boldsymbol{x})$, let $I$ be the index set. We denoted the upper bound of the vector degree of $f$ w.r.t. $\boldsymbol{x}$ and $I$ by $$\widehat{\mathbf{vdeg}}_{[I,\boldsymbol{x}]}(f)=\mathtt{VDEG}(f_{r-1},V_{r-2}),$$ where $V_i=\mathtt{VDEG}(\boldsymbol{f}_i,V_{i-1})$, $0<i\leq r-2$, and $V_0=\mathbf{vdeg}_{[I,\boldsymbol{x}]}(\boldsymbol{f}_0)$. 

According to Proposition \ref{vecdegupbound} and Corollary \ref{degcor}, the estimation  of  algebraic degree of $f$ w.r.t. $\boldsymbol{x}$ and $I$, denoted by $\widehat{\mathbf{deg}}_{[I,\boldsymbol{x}]}(f)$, can be derived from $\widehat{\mathbf{vdeg}}_{[I,\boldsymbol{x}]}(f)$. To meet different goals in various scenes, we give the following three modes to get $\widehat{\mathbf{deg}}_{[I,\boldsymbol{x}]}(f)$:\\
\textbf{Mode 1.} $\widehat{\mathbf{deg}}_{[I,\boldsymbol{x}]}(f)=\max_{0\leq j<2^{|I|}}\{\min\{\widehat{\mathbf{vdeg}}_{[I,\boldsymbol{x}]}(f)[j],n-|I|\}+\wt(j)\}.$\\
\textbf{Mode 2.} $\widehat{\mathbf{deg}}_{[I,\boldsymbol{x}]}(f)=\widehat{\mathbf{vdeg}}_{[I,\boldsymbol{x}]}(f)[2^{|I|}-1]+|I|.$\\
\textbf{Mode 3.} $\widehat{\mathbf{deg}}_{[I,\boldsymbol{x}]}(f)=\max_{0\leq j<2^{|I|}}\{\widehat{\mathbf{vdeg}}_{[I,\boldsymbol{x}]}(f)[j]+\wt(j)\}.$

Mode 1 gives the estimated degree that can be totally derived from  previous discussions, which is  most precise. Mode 2 focuses on the value of the last coordinate of $\widehat{\mathbf{vdeg}}_{[I,\boldsymbol{x}]}(f)$, which may tell us whether the algebraic degree can reach the maximum value. Mode 3 gives the estimated degree without revision, which will be used when choosing the index set of the vector degree.

Since the index set $I$ is an important parameter when estimating the vector degree of $f$, we learn about how different choices of the index set influence the estimation of the vector degree. Then, we give the relationship between numeric mapping and vector numeric mapping.

\begin{theorem}
	\label{vdegsubset}
	Let $f\in\mathbb{B}_n$ and $I_1$ and $I_2$ be two index sets with
	$|I_1|=k$, $|I_2|=d$ and 
	 $I_1\subset I_2$. If $V_1 \in \mathbb{Z}^{n\times 2^{k}}$ and $V_2 \in \mathbb{Z}^{n \times 2^{d}}$ satisfy
	\begin{equation} \label{vdegI1I2con1}
		V_1[i][j] \ge \max_{0 \le j'<2^{d-k}}\left\{V_2[i][j'\cdot 2^{k} + j] + wt(j')\right\}
	\end{equation}
	for any $0 \le i \le n - 1$ and $0 \le j<2^{k}$,  then we have
\begin{equation}\label{vdegI1I2conclusion1}
		\mathtt{VDEG}_k(f, V_1)[j] \ge \max_{0 \le j'<2^{d-k}}\left\{\mathtt{VDEG}_{d}(f,V_2)[j'\cdot 2^{k} + j] + wt(j')\right\}
	\end{equation}
	for any $0 \le j < 2^{k}.$
\end{theorem}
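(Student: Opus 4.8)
The plan is to prove \eqref{vdegI1I2conclusion1} term-by-term on the right-hand side: it suffices to fix an arbitrary $j'$ with $0 \le j' < 2^{d-k}$ and establish the single inequality $\mathtt{VDEG}_k(f, V_1)[j] \ge \mathtt{VDEG}_d(f, V_2)[j'\cdot 2^k + j] + \wt(j')$, since taking the maximum over $j'$ then recovers the stated conclusion. This mirrors the purely combinatorial identity \eqref{vdegI1I2} for the vector degree itself, which is what makes the statement plausible. Throughout, write $J = j'\cdot 2^k + j$, so that the low $k$ bits of $J$ encode $j$ and the high $d-k$ bits encode $j'$.

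First I would pick a monomial $\boldsymbol{u}^\ast$ (with $a_{\boldsymbol{u}^\ast} \neq 0$) and indices $j_0^\ast,\ldots,j_{n-1}^\ast$ attaining the maximum in the definition of $\mathtt{VDEG}_d(f,V_2)[J]$, so that $J = \bigvee_i \boldsymbol{u}^\ast[i] j_i^\ast$ and $\mathtt{VDEG}_d(f,V_2)[J] = \sum_i \boldsymbol{u}^\ast[i] V_2[i][j_i^\ast]$. The key idea is then to split each $d$-bit value $j_i^\ast$ into its low $k$ bits $\ell_i$ and high $d-k$ bits $j_i''$, i.e. $j_i^\ast = j_i''\cdot 2^k + \ell_i$. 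Because bitwise OR acts independently on each bit position, the relation $J = \bigvee_i \boldsymbol{u}^\ast[i] j_i^\ast$ decomposes as $j = \bigvee_i \boldsymbol{u}^\ast[i]\ell_i$ on the low $k$ bits and $j' = \bigvee_i \boldsymbol{u}^\ast[i] j_i''$ on the high $d-k$ bits. Note also that for $i$ outside the support of $\boldsymbol{u}^\ast$ the constraint forces $j_i^\ast = 0$, hence $\ell_i = 0$, so that the feasibility requirement $0 \le \ell_i \le \boldsymbol{u}^\ast[i](2^k-1)$ is met.

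Consequently, the tuple $(\ell_0,\ldots,\ell_{n-1})$ together with $\boldsymbol{u}^\ast$ is a feasible choice in the definition of $\mathtt{VDEG}_k(f,V_1)[j]$, which gives $\mathtt{VDEG}_k(f,V_1)[j] \ge \sum_i \boldsymbol{u}^\ast[i] V_1[i][\ell_i]$. To finish, I would lower-bound each $V_1[i][\ell_i]$: applying the hypothesis \eqref{vdegI1I2con1} with the specific inner choice $j'' = j_i''$ yields $V_1[i][\ell_i] \ge V_2[i][j_i''\cdot 2^k + \ell_i] + \wt(j_i'') = V_2[i][j_i^\ast] + \wt(j_i'')$. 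Summing over the support of $\boldsymbol{u}^\ast$ then gives
$$\sum_i \boldsymbol{u}^\ast[i]V_1[i][\ell_i] \ge \mathtt{VDEG}_d(f,V_2)[J] + \sum_i \boldsymbol{u}^\ast[i]\wt(j_i'').$$
The last step invokes subadditivity of the Hamming weight under OR: since $j' = \bigvee_i \boldsymbol{u}^\ast[i] j_i''$, we have $\wt(j') \le \sum_i \boldsymbol{u}^\ast[i]\wt(j_i'')$, and chaining the three inequalities closes the argument for the fixed $j'$.

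The routine bookkeeping here is the feasibility of the constructed tuple and the bit-splitting of the OR constraint; this is the part that demands the most care, but it presents no genuine obstacle once the bit-indexing is made precise. The two essential observations are that the single OR constraint $J = \bigvee_i \boldsymbol{u}^\ast[i] j_i^\ast$ decomposes into independent low- and high-bit constraints, and that the surplus weight $\wt(j')$ appearing on the right is exactly absorbed by the $\wt(j_i'')$ terms that the hypothesis supplies. I therefore expect the proof to be short and the only delicate point to be the verification of the OR decomposition.
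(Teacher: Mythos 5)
Your proposal is correct and follows essentially the same route as the paper's proof: both arguments split each $d$-bit index $j_i^\ast$ into its low $k$ bits and high $d-k$ bits, observe that the OR constraint decomposes bitwise, apply the hypothesis \eqref{vdegI1I2con1} componentwise, and absorb $\wt(j')$ via the subadditivity $\wt\bigl(\bigvee_i \boldsymbol{u}[i]j_i''\bigr)\le\sum_i \boldsymbol{u}[i]\wt(j_i'')$. The only difference is presentational — you fix a maximizer and exhibit a feasible witness for $\mathtt{VDEG}_k(f,V_1)[j]$, whereas the paper rearranges the nested maxima directly — which is the same argument in different clothing.
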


The proof of Theorem \ref{vdegsubset} is given in Appendix \ref{proof2}. Let $V_i \succcurlyeq \mathbf{vdeg}_{I_i}(\boldsymbol{g})$ for $i = 1,2$ in Theorem \ref{vdegsubset}, and assume that they satisfy the inequality (\ref{vdegI1I2con1}). Since $\mathtt{VDEG}_d(f,V_2) \succcurlyeq \mathbf{vdeg}_{I_2}(f\circ \boldsymbol{g})$ by Theorem \ref{vdegthm}, we can see that the RHS of (\ref{vdegI1I2conclusion1}) is larger than or equal to $\mathbf{vdeg}_{I_1}(f\circ \boldsymbol{g})[j]$ from (\ref{vdegI1I2}). It implies that the RHS of (\ref{vdegI1I2conclusion1}) gives a tighter upper bound of $\mathbf{vdeg}_{I_1}(f\circ \boldsymbol{g})[j]$ than the LHS of (\ref{vdegI1I2conclusion1}). Moreover, the relation in (\ref{vdegI1I2con1}) would be maintained after iterations of the vector numeric mapping by Theorem \ref{vdegsubset}.

In fact, the numeric mapping is  the degenerate form of the vector numeric mapping in the sense of $d=0$. Therefore, we can assert that  $\deg(\boldsymbol{g}_r \cdots \boldsymbol{g}_1)$ derived from the iterations of the vector numeric mapping $\mathtt{VDEG}(\boldsymbol{g_i},V_i)$ leads to a tighter upper bound than the iterations of the numeric mapping $\mathtt{DEG}(\boldsymbol{g}_i,\boldsymbol{d}_i)$. We gave an example in Appendix \ref{sec.example}.

How to choose a suitable index set of the vector degree? One can consider the index set $I=[m]$, where $m$ is the size of the input of the function $\boldsymbol{g}$. Of course, it is the best set by Theorem \ref{vdegsubset} if we only consider the accuracy of the estimated degree. However, the space and time complexity of the vector numeric mapping is exponential w.r.t. such a set.
Therefore, we should choose the index set of the vector degree carefully. We will put forward some heuristic ideas for the Trivium cipher in Section \ref{apptrivium}. 

\subsection{Algorithm for Searching Good \text{ISoC}s}\label{sec.search_good_ISoCs}
As mentioned in Section \ref{sec.Improvements of Correlation Cube Attacks}, finding a large scale of {special} \textit{ISoC}s is quite important in improving correlation cube attacks. 
Indeed, we observe that if the estimated algebraic degree of $f$ over an \textit{ISoC} exceeds the size of it, the higher the estimated algebraic degree is, the more complex the corresponding superpoly tends to. Therefore, when searching \textit{ISoC}s of a fixed size, imposing the constraint that the estimated algebraic degree of $f$ is below a  threshold may significantly increase the likelihood of obtaining a relatively simple superpoly. Then, we heuristically convert our goal of finding large scale of {special} \textit{ISoC}s to finding large scale of good \textit{ISoC}s whose corresponding estimated algebraic degrees of $f$ are lower than a  threshold $d$.



In the following, we propose an efficient algorithm for searching large scale of such good \textit{ISoC}s. 



\begin{theorem}\label{vecdegunion}
	Let $f(\boldsymbol{x},\boldsymbol{k})=f_{r-1}\circ \boldsymbol{f}_{r-2}\circ \cdots \circ \boldsymbol{f}_0(\boldsymbol{x},\boldsymbol{k})$ be a Boolean function, where $\boldsymbol{x}\in\mathbb{F}_2^n$ represents the initial vector and $\boldsymbol{k}\in\mathbb{F}_2^m$ represents the key. Let $J\subset[n]$ be an index set for vector degree and $I$ and $K$ be two \textit{ISoC}s satisfying $J\subset K \subset I$. Then we have 	$$\widehat{\mathbf{vdeg}}_{[J,\boldsymbol{x}_K]}(f|_{\boldsymbol{x}_{K^c}=0})\preccurlyeq \widehat{\mathbf{vdeg}}_{[J,\boldsymbol{x}_I]}(f|_{\boldsymbol{x}_{I^c}=0}).$$
\end{theorem}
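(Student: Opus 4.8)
The plan is to prove the inequality by induction on the round index, with the driving engine being a monotonicity property of the vector numeric mapping in its second (matrix) argument. Recall that both sides are produced by the \emph{same} iterative scheme of Definition \ref{vectormd}: one seeds with the exact vector degree $V_0 = \mathbf{vdeg}_{[J,\cdot]}$ of the restricted first round, iterates $V_i = \mathtt{VDEG}(\boldsymbol{f}_i, V_{i-1})$ for $0<i\le r-2$, and outputs $\widehat{\mathbf{vdeg}} = \mathtt{VDEG}(f_{r-1},V_{r-2})$. The key observation is that the round functions $\boldsymbol{f}_1,\ldots,\boldsymbol{f}_{r-2},f_{r-1}$ are identical in the two situations; the restrictions $\boldsymbol{x}_{K^c}=0$ and $\boldsymbol{x}_{I^c}=0$ differ only in the effective input fed to $\boldsymbol{f}_0$, hence only in the seeds, which I denote $V_0^{(K)}$ and $V_0^{(I)}$ (both of length $2^{|J|}$ since the cube index is $J$ on both sides). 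Thus it suffices to establish (i) the seed inequality $V_0^{(K)}\preccurlyeq V_0^{(I)}$ and (ii) that $\mathtt{VDEG}$ preserves $\preccurlyeq$, and then iterate.

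For the monotonicity lemma I would read directly off Definition \ref{vectormd}: for a fixed $f=\bigoplus_{\boldsymbol{u}}a_{\boldsymbol{u}}\boldsymbol{x}^{\boldsymbol{u}}$ and a fixed output coordinate $j$, the value $\mathtt{VDEG}_d(f,V)[j]$ is a maximum, taken over a feasible set of $(\boldsymbol{u},j_0,\ldots,j_{n-1})$ that does \emph{not} depend on $V$, of the linear form $\sum_i \boldsymbol{u}[i]V[i][j_i]$. This form is nondecreasing in every entry of $V$, so $V\preccurlyeq V'$ forces $\mathtt{VDEG}_d(f,V)\preccurlyeq \mathtt{VDEG}_d(f,V')$, and the same holds coordinatewise for a vectorial $\boldsymbol{g}$. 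This immediately supplies the induction step: if $V_{i-1}^{(K)}\preccurlyeq V_{i-1}^{(I)}$, then $V_i^{(K)}=\mathtt{VDEG}(\boldsymbol{f}_i,V_{i-1}^{(K)})\preccurlyeq \mathtt{VDEG}(\boldsymbol{f}_i,V_{i-1}^{(I)})=V_i^{(I)}$, and likewise at the final layer $f_{r-1}$.

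The substantive part is the base case, which is essentially a bookkeeping argument on index sets. Since $J\subset K\subset I$, one has $K^c = I^c\sqcup(I\setminus K)$ with $I\setminus K$ disjoint from $J$; thus imposing $\boldsymbol{x}_{K^c}=0$ amounts to taking $\boldsymbol{f}_0|_{\boldsymbol{x}_{I^c}=0}$ and additionally zeroing the non-cube variables $\boldsymbol{x}_{I\setminus K}$. Because these extra variables lie in $\boldsymbol{x}_{J^c}$, they sit inside the coefficient polynomials of the expansion in $\boldsymbol{x}_J$ and never touch the cube monomials $\boldsymbol{x}_J^{\boldsymbol{u}_j}$; hence the $j$-th coefficient in the $K$-case equals the $j$-th coefficient in the $I$-case further restricted, namely $g_{\boldsymbol{u}_j}^{(K)}=g_{\boldsymbol{u}_j}^{(I)}|_{\boldsymbol{x}_{I\setminus K}=0}$. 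Since substituting zeros into a polynomial cannot raise its (total) degree in the surviving variables, each coordinate obeys $V_0^{(K)}[j]=\deg(g_{\boldsymbol{u}_j}^{(I)}|_{\boldsymbol{x}_{I\setminus K}=0})\le \deg(g_{\boldsymbol{u}_j}^{(I)})=V_0^{(I)}[j]$, giving $V_0^{(K)}\preccurlyeq V_0^{(I)}$.

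I expect the main obstacle to be purely notational rather than conceptual: making the identification $g_{\boldsymbol{u}_j}^{(K)}=g_{\boldsymbol{u}_j}^{(I)}|_{\boldsymbol{x}_{I\setminus K}=0}$ fully rigorous requires checking that the zeroed variables are genuinely complementary to the cube index $J$ (so the $\boldsymbol{x}_J$-expansion is preserved and like-indexed coordinates are compared), and that the key variables $\boldsymbol{k}$ are handled uniformly as part of the surviving ``coefficient'' variables on both sides. Once this index-set accounting and the monotonicity lemma are settled, the induction is routine; accordingly I would front-load the monotonicity lemma and concentrate the care on the base-case restriction identity.
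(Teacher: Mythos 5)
Your proposal is correct and follows essentially the same route as the paper's proof: establish the seed inequality $V_0^{(K)}\preccurlyeq V_0^{(I)}$ from the fact that restricting $\boldsymbol{x}_{K^c}=0$ only kills additional monomials of $\boldsymbol{f}_0|_{\boldsymbol{x}_{I^c}=0}$ (so like-indexed coefficient polynomials can only drop in degree), then propagate $\preccurlyeq$ through every round using the monotonicity of $\mathtt{VDEG}$ in its matrix argument, which is immediate from Definition \ref{vectormd}. Your write-up is somewhat more explicit than the paper's (which compresses both steps into two sentences), but there is no substantive difference in the argument.
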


\begin{proof}
	Let $U_0=\mathbf{vdeg}_{[J,\boldsymbol{x}_K]}(\boldsymbol{f}_0|_{\boldsymbol{x}_{K^c}=0})$, $V_0=\mathbf{vdeg}_{[J,\boldsymbol{x}_I]}(\boldsymbol{f}_0|_{\boldsymbol{x}_{I^c}=0})$, and $U_t=\mathtt{VDEG}(\boldsymbol{f}_t,U_{t-1})$, $V_t=\mathtt{VDEG}(\boldsymbol{f}_t,V_{t-1})$ for $1\leq t \leq r-2$. Then $\widehat{\mathbf{vdeg}}_{[J,\boldsymbol{x}_K]}(f|_{\boldsymbol{x}_{K^c}=0})$ $=\mathtt{VDEG}(f,U_{r-2})$, $\widehat{\mathbf{vdeg}}_{[J,\boldsymbol{x}_I]}(f|_{\boldsymbol{x}_{I^c}=0})=\mathtt{VDEG}(f,V_{r-2}).$
	
	It is obvious that the set of monomials in $\boldsymbol{f}_0|_{\boldsymbol{x}_{I^c}=0}$ is a superset of the set of monomials in $\boldsymbol{f}_0|_{\boldsymbol{x}_{K^c}=0}$ since $I^c\subset K^c$. Thus, we can get $U_0\preccurlyeq V_0$ from Definition  \ref{vardegree}. According to Definition  \ref{vectormd}, we can iteratively get  $U_i\preccurlyeq V_i$ for all $1\leq i\leq r-2$, which leads to $\widehat{\mathbf{vdeg}}_{[J,\boldsymbol{x}_K]}(f|_{\boldsymbol{x}_{K^c}=0})\preccurlyeq \widehat{\mathbf{vdeg}}_{[J,\boldsymbol{x}_I]}(f|_{\boldsymbol{x}_{I^c}=0})$.
\end{proof}

\begin{corollary}\label{delall}
	Let $f(\boldsymbol{x},\boldsymbol{k})=f_{r-1}\circ \boldsymbol{f}_{r-2}\circ \cdots \circ \boldsymbol{f}_0(\boldsymbol{x},\boldsymbol{k})$ be a Boolean function. Let $J$ be an  index set of vector degree,  $d>|J|$ be a threshold of algebraic degree,  and  $K$ be an \textit{ISoC} satisfying $J\subset K$. If $\widehat{\mathbf{deg}}_{[J,\boldsymbol{x}_K]}(f|_{\boldsymbol{x}_{K^c}=0})\geq d$, then $\widehat{\mathbf{deg}}_{[J,\boldsymbol{x}_I]}(f|_{\boldsymbol{x}_{I^c}=0})\geq d$ for all \textit{ISoC}s $I$ satisfying $K\subset I$. 
\end{corollary}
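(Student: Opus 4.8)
The plan is to read the corollary off Theorem~\ref{vecdegunion} by noting that the passage from a vector-degree bound to the estimated algebraic degree $\widehat{\mathbf{deg}}$ is monotone in each coordinate. Fix an arbitrary \textit{ISoC} $I$ with $K\subset I$. Since $J\subset K\subset I$, the hypotheses of Theorem~\ref{vecdegunion} hold for the triple $(J,K,I)$, so it supplies the component-wise inequality
\[
\widehat{\mathbf{vdeg}}_{[J,\boldsymbol{x}_K]}(f|_{\boldsymbol{x}_{K^c}=0})\preccurlyeq \widehat{\mathbf{vdeg}}_{[J,\boldsymbol{x}_I]}(f|_{\boldsymbol{x}_{I^c}=0}),
\]
where both sides are vectors of length $2^{|J|}$ indexed by $0\le j<2^{|J|}$. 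Because $I$ is arbitrary, it is enough to prove that $\widehat{\mathbf{deg}}$ is a non-decreasing function of the underlying vector-degree vector and then conclude by the transitivity of $\le$.

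Next I would verify this monotonicity separately for each of the three modes defining $\widehat{\mathbf{deg}}$. Write $\boldsymbol{v}\preccurlyeq\boldsymbol{v}'$ for the two vector-degree bounds above. In Mode~3 the estimate is $\max_{0\le j<2^{|J|}}\{\boldsymbol{v}[j]+\wt(j)\}$; each summand $\boldsymbol{v}[j]+\wt(j)$ is non-decreasing in $\boldsymbol{v}[j]$ and the maximum of non-decreasing quantities is non-decreasing, so the estimate for $\boldsymbol{v}'$ dominates that for $\boldsymbol{v}$. Mode~2 depends only on the coordinate $j=2^{|J|}-1$, and $\boldsymbol{v}[2^{|J|}-1]\le\boldsymbol{v}'[2^{|J|}-1]$ gives the claim after adding the constant $|J|$. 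Mode~1 introduces the inner truncation $\min\{\cdot,\,n-|J|\}$; here I would use that $t\mapsto\min\{t,c\}$ is non-decreasing for any fixed cap $c$, noting that the cap $n-|J|$ is the same constant for $K$ and $I$ (and, under the variant where it is read as the actual number of free variables, it can only grow as the \textit{ISoC} enlarges), so Mode~1 is monotone as well.

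Combining the two ingredients, for the fixed $I$ I obtain
\[
\widehat{\mathbf{deg}}_{[J,\boldsymbol{x}_K]}(f|_{\boldsymbol{x}_{K^c}=0})\le \widehat{\mathbf{deg}}_{[J,\boldsymbol{x}_I]}(f|_{\boldsymbol{x}_{I^c}=0}).
\]
By hypothesis the left-hand side is at least $d$, hence so is the right-hand side; since $I$ was an arbitrary \textit{ISoC} with $K\subset I$, this is exactly the asserted conclusion. The whole argument is short because Theorem~\ref{vecdegunion} does the real work; the only place demanding a little care is the Mode~1 case, where one must check that the $\min$-truncation cap does not move against the inequality. Once this is confirmed, monotonicity, and therefore the corollary, follow immediately.
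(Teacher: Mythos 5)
Your proposal is correct and follows essentially the same route as the paper, which simply states that the corollary is derived directly from Theorem~\ref{vecdegunion}; you additionally spell out the monotonicity of the passage from the vector-degree bound to $\widehat{\mathbf{deg}}$ in each of the three modes, including the correct observation that the Mode~1 truncation cap can only grow when the \textit{ISoC} enlarges. This fills in a detail the paper leaves implicit but does not change the argument.
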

Corollary \ref{delall} can be derived from Theorem \ref{vecdegunion} directly. Theorem \ref{vecdegunion} shows a relationship between the estimated vector degrees of $f$ w.r.t. a fixed index set $J$ for two \textit{ISoC}s containing $J$. According to Corollary \ref{delall}, we can delete all the sets $I$ containing an \textit{ISoC} $K$ from the searching space of \textit{ISoC}s if $K$ satisfies $\widehat{\mathbf{deg}}_{[J,\boldsymbol{x}_K]}(f|_{\boldsymbol{x}_{K^c}=0})\geq d$. 
Therefore, in order to delete more ``bad" \textit{ISoC}s from the searching space, we can try to find such an  \textit{ISoC} $K$ as small as possible. 

For a given \textit{ISoC} $I$ satisfying $\widehat{\mathbf{deg}}_{[J,\boldsymbol{x}_{I}]}(f|_{\boldsymbol{x}_{I^c}=0})\geq d$, we can iteratively choose a series of \textit{ISoC}s $I\supsetneqq I_1 \supsetneqq  \cdots \supsetneqq  I_q\supset J$ such that $\widehat{\mathbf{deg}}_{[J,\boldsymbol{x}_{I_i}]}(f|_{\boldsymbol{x}_{I_i^c}=0})\geq d$ for all $1\leq i \leq q$ and $\widehat{\mathbf{deg}}_{[J,\boldsymbol{x}_{I'}]}(f|_{\boldsymbol{x}_{I'^c}=0})<d$ for any $I'\subsetneqq I_q$. Note that this process can terminate with a smallest \textit{ISoC}  $I_q$ from $I$ since $\widehat{\mathbf{deg}}_{[J,\boldsymbol{x}_{J}]}(f|_{\boldsymbol{x}_{J^c}=0})\le|J|<d$.


Next, we give a new algorithm according to previous discussions for searching a large scale of good \textit{ISoC}s.





\subsubsection{Process of searching good \textit{ISoC}s. }  Let $J$ be a given index set,  $\Omega$ be the set of all subsets of $[n]$ containing  $J$ and with size $k$,  $d$ be a threshold of degree, and $a$ be the number of repeating times. The main steps are: 
\begin{enumerate}
    \item Prepare an empty set $\mathcal{I}$.
    \item Select an  element $I$ from $\Omega$ as an \textit{ISoC}.
    \item Estimate the algebraic degree of $f$ w.r.t. the variable $\boldsymbol{x}_{I}$ and the index set $J$, denoted by $d_I$. If $d_I<d$, then add $I$ to $\mathcal{I}$ and go to Step 5; otherwise, set $count = 0$ and go to Step 4.
    \item Set $count = count+1$. Let $I'=I$, randomly remove an element $i\in I'\setminus J$ from $I'$ and let $x_i=0$. Then, estimate the algebraic degree of $f$ w.r.t. the variable $\boldsymbol{x}_{I'}$. If the degree is less than $d$ and $count<a$, continue to execute Step 4; if the degree is less than $d$ and $count\geq a$, go to step 5; if the degree is greater than or equal to $d$, let $I=I'$ and go to Step 3.
    \item Remove all the sets containing $I$ from $\Omega$. If $\Omega \neq \emptyset$, go to Step 2; otherwise, output $\mathcal{I}$.
\end{enumerate}






The output $\mathcal{I}$ is the set of all  good \textit{ISoC}s we want. In the algorithm, Step 4 shows the process of finding a ``bad" \textit{ISoC} as small as possible. Since the index $i$ we choose to remove from $I'$ is random every time, we use a counter to record the number of repeating times and set the number $a$ as an upper bound of it to ensure that the algorithm can continue to run. 

To implement the  algorithm efficiently, we establish an MILP model and use the automated searching tool Gurobi to solve the model, and then we can get a large scale of good \textit{ISoC}s that are needed.

\subsubsection{MILP Model for searching good \textit{ISoC}s.} In order to evaluate the elements of $\Omega$ more clearly, we use linear inequalities over integers to describe $\Omega$. We use a binary variables $b_i $ to express whether to choose $v_i$ as a cube variable, namely, $b_i=1$ iff  $v_i$ is chosen as a cube variable, $0\leq i \leq n-1$. Then the sub-models are established as follows:
\setcounter{Model}{0}
\begin{Model}\label{NumofCube}
	To describe that the size of each element of $\Omega$ is equal to $k$, we use
	$$\sum_{i=0}^{n-1}b_i=k.$$
\end{Model}

\begin{Model}\label{IncludeJ}
	To describe that each element of $\Omega$ includes the set $J$, we use 
	$$b_j=1~\text{for}~\forall j\in J.$$
\end{Model}

\begin{Model}\label{DelI}
	To describe removing all the sets that contain $I$ from $\Omega$, we use 
	$$\sum_{i\in I}b_i<|I|.$$
\end{Model}

Since some \textit{ISoC}s are deleted in Step 5 during the searching process, we need to adjust the MILP model continuously. Thus we can  use the \textit{Callback} function of Gurobi to implement this process. In fact, using \textit{Callback} function to adjust the model will not repeat the test for excluded nodes that do not meet the conditions, and will continue to search for nodes that have not been traversed, so the whole process of adjusting the model will not cause the repetition of the solving process, and will not result in a waste of time.

According to the above descriptions and the MILP model we have already established, we give an algorithm for searching good \textit{ISoC}s. The algorithm includes two parts which are called the main procedure and the callback function, and the complete algorithm is given in Appendix \ref{AlgSearchGoodCube}.

\section{Application To Trivium}
\label{apptrivium}
In this section, we apply all of our techniques to Trivium, including degree estimation, superpoly recovery and improved correlation cube attack. 
We set $r_m = 200$ in the experiment of recovering superpolys below, and expression of the states after $200$-round initialization of Trivium has been computed and rewritten in  new variables as described in Section \ref{sec.recover superpoly for novel perspective}, where the ANF of new variables in the key $\boldsymbol{k}$ is also determined. For details, please visit the git repository \href{https://github.com/faniw3i2nmsro3nfa94n/Results}{https://github.com/faniw3i2nmsro3nfa94n/Results}.
All experiments are completed on a personal computer due to the promotion of the algorithms.

\subsection{Description of Trivium Stream Cipher}
Trivium \cite{ISC:DeCanniere06} consists of three nonlinear feedback shift registers whose size is 93, 84, 111, denoted by $r_0, r_1, r_2$, respectively. 
Their internal states, denoted by $\boldsymbol{s}$ with a size of  $288$, are initialized by loading  80-bit key $k_i$ into $s_i$ and 80-bit IV $x_i$ into $s_{i+93}$, $0\leq i \leq 79$, and other bits are set to 0 except for the last three bits of the third register. 
During the initialization stage, the algorithm would not output any keystream bit until the internal states are updated for 1152 rounds. 
The linear components of the three update functions are denoted by $\ell_1, \ell_2$ and $\ell_3$, respectively, and the update process can be described as
\begin{equation}\label{eq.simupdate}
\begin{aligned}
	s_{n_i}&=s_{n_i-1} \cdot s_{n_i - 2} \oplus \ell_i(\boldsymbol{s})~ \text{ for }~ i = 1,~ 2,~ 3,\\
 \boldsymbol{s} & \leftarrow \left( s_{287}, s_{0}, s_{1}, \cdots, s_{286} \right),
 \end{aligned}
\end{equation}
where $n_1, n_2, n_3$ are equal to $ 92, 176, 287$, respectively. 
Denote $z$ to be the output bit of Trivium. Then the output function is $z=s_{65}\oplus s_{92}\oplus  s_{161}\oplus s_{176}\oplus s_{242} \oplus s_{287}$.

\subsection{Practical Verification for Known Cube Distinguishers}
\label{sec.PracticalVerfiction}
In~\cite{DBLP:journals/dcc/KesarwaniRSM20}, Kesarwani et al. found three \textit{ISoC}s having \texttt{Zero-Sum} properties till 842 initialization rounds of Trivium by cube tester experiments.
The \textit{ISoC}s are listed in Appendix \ref{sec.ISoc}, namely $I_1, I_2,I_3$.
We apply the superpoly recovery algorithm proposed in Section \ref{sec.recover superpoly for novel perspective} to these \textit{ISoC}s. It turns out that the declared  \texttt{Zero-Sum} properties of these \textit{ISoC}s is incorrect, which is due to the randomness of experiments on a small portion of the keys. The correct results are listed in Table \ref{zero-sumtest}, where ``Y" represents the corresponding \textit{ISoC} has \texttt{Zero-Sum} property, while ``N" represents the opposite.
For more details about the superpolys of these \textit{ISoC}s, please refer to our git repository. 
We also give some values of the key for which the value of non-zero superpolys is equal to 1, listed in Appendix \ref{sec.foundsecretkeys}.
\begin{table}[H]
	\caption{Verification of \texttt{Zero-Sum} properties in \cite{DBLP:journals/dcc/KesarwaniRSM20}}
	\label{zero-sumtest}
 \tabcolsep=.2cm
	\centering  
	\begin{tabular}{ccccccc} 
				\hline
		Rounds    & $\leq$ 835   & 836           & 837 -- 839 & 840    & 841          & 842   \\\hline
		$I_1$     & Y          & N             & N          & N   & Y            & N\\ 
		$I_2$     & Y          & N             & N          & N   & N            & N\\ 
		$I_3$     & Y          & Y             & N          & Y   & Y            & N\\ \hline
		
	\end{tabular}
\end{table}


\subsubsection{Comparison of computational complexity for superpoly recovery.} 
For comparison, we recover superpoly of  the \textit{ISoC} $I_2$ for 838 rounds by nested monomial prediction, nested monomial prediction with NBDP and CMP techniques, and nested monomial prediction with our variable substitution technique, respectively, where the number of middle rounds is set to $r_m = 200$ for the last two techniques. As a result, it takes more than one day for superpoly recovery by nested monomial prediction, about 13 minutes by NBDP and CMP techniques, and 15 minutes by our method. It implies that variable substitution technique plays a role as important  as the NBDP and CMP techniques in improving the complexity of superpoly recovery. Further, by combining our methods with NBDP and CMP techniques to obtain \textit{valuable terms}, it takes about 2 minutes to recover this superpoly. Thus, it is the best choice to combine our variable substitution technique with NBDP and CMP in  superploy recovery.      

\subsection{Estimation of Vector Degree of Trivium}
Recall the algorithm proposed by Liu in \cite{C:Liu17} for estimating the degree of Trivium-like ciphers. We replace the numeric mapping with the vector numeric mapping. 
The reason is that vector numeric mapping  can perform well for the \textit{ISoC}s containing adjacent indices but numeric mapping cannot.

The algorithm for estimation of the vector degree of Trivium is detailed in Algorithm \ref{alg.EstimateDegree} and Algorithm \ref{alg.DegreeMul} in Appendix \ref{AlgEstVecDegTri}.
The main idea is the same as Algorithm 2 in \cite{C:Liu17}, but the numeric mapping is replaced.
For the sake of simplicity, we denote   $\mathtt{VDEG}(\prod_{i=1}^k x[i],(\boldsymbol{v}_1,\cdots, \boldsymbol{v}_k))$ as $\mathtt{VDEGM}(\boldsymbol{v}_1, \cdots, \boldsymbol{v}_k)$  in the algorithms.

\subsubsection{Heuristics method for choosing indices of vector degree.}
As we discussed earlier, the size of the index set of vector degree should not be too large, and we usually set the size  less than 13.
How to choose the indices to obtain a good degree evaluation? 
We  give the following two heuristic strategies.
\begin{enumerate}
	\item Check whether there are adjacent elements in the \textit{ISoC} $I$. 
	If yes, add all the adjacent elements into the index set $J$. 
	When the size of the set $J$ exceeds a preset threshold, randomly remove elements from $J$ until its size  is equal to the threshold.
	Otherwise, set $I = I \setminus J$ and execute Strategy 2.
	\item Run Algorithm \ref{alg.EstimateDegree} 
 with the input $(\boldsymbol{s}^0,I_i,\emptyset,R,3)$ for all $i\in I$, where $I_i=\{i\}$. Remove the index  with the largest degree evaluation of the $R$-round output bit from $I$ every time, and add it to $J$ until the size of $J$ is equal to the preset threshold. 
	If there exist multiple choices that have equal degree evaluation,  randomly pick one of them.
\end{enumerate}

After applying the above two strategies, we will get an index set of vector degree.
Since there are two adjacent states multiplied in the trivium update function, the variables with adjacent indices may be multiplied many times.
So in Strategy 1, we choose  adjacent indices in $I$ and add them to the index set of vector degree. In Strategy 2, we compute the degree evaluation of the $R$-round output bit by setting the degree of $x_j$ to be zero for all $j \in I$ except $i$.
Although the exact degree of the output bit is less than or equal to 1, the evaluation is usually much larger than 1. This is because the variable $x_i$ is multiplied by itself many times and the estimated degree is added repeatedly.
So we choose these variables, whose estimated degrees are too large, as the index of vector degree. Once we fix a threshold of the size of the index set of vector degree,  we can obtain the index set by these two strategies.

\subsubsection{Degree of Trivium on all IV bits.}
We have estimated the upper bound of the degree of the output bit on all IV bits for $R$-round Trivium by  Algorithm \ref{alg.EstimateDegree} 
with $mode = 1$.
Every time we set the threshold to be $8$ to obtain the index set of vector degree and run the procedure of degree estimation with the index set. 
We repeat 200 times and choose the minimum value as the upper bound of the output bit's degree.
The results compared with the numeric mapping technique are illustrated in Figure
 \ref{figuredegree}. In our experiments, the upper bound of the output bit's degree reaches the maximum degree 80 till 805 rounds using vector numeric mapping, while till 794 rounds using numeric mapping.
 Besides, the exact degree \cite{ToSC:CXZZ21} exhibits the behavior of a decrease when the number of rounds increases at certain points. The vector numeric mapping can also capture this phenomenon, whereas numerical mapping cannot. This is because the vector numeric mapping can eliminate the repeated degree estimation of variables whose indices are in the index set of vector degree.

\begin{figure}
	\centering
	\includegraphics[width=0.6\textwidth]{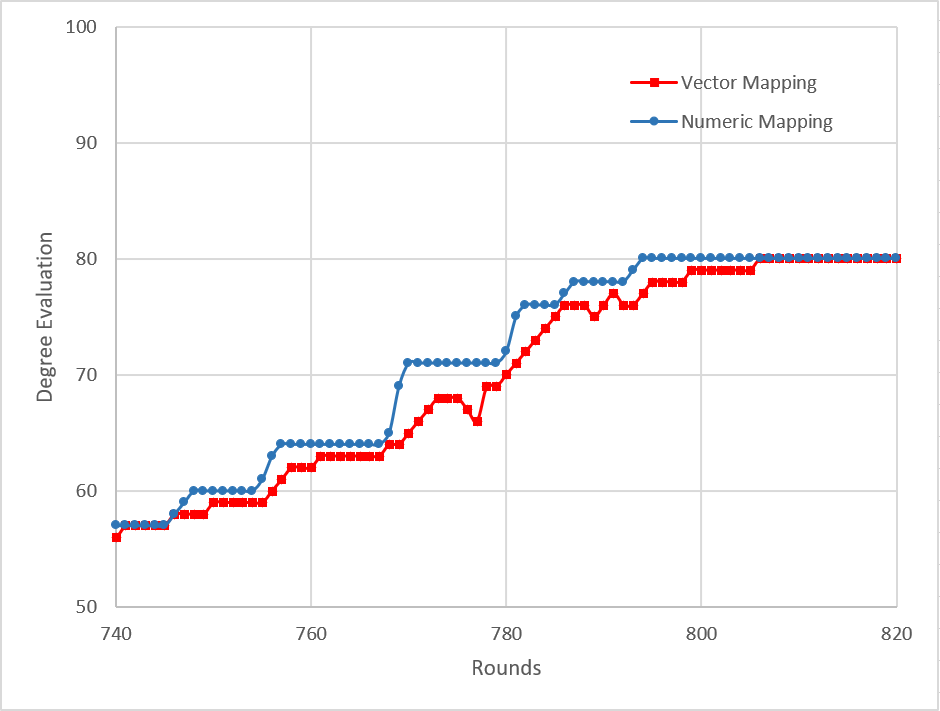}
 \caption{Degree evaluations by vector numeric mapping  and numeric mapping}
	\label{figuredegree}
\end{figure}

\subsubsection{Degree of Trivium on partial IV bits.}
In fact, the degree evaluation algorithm will perform better when there are a few adjacent indices in the \textit{ISoC}. We generate the \textit{ISoC} in the following way. Firstly, randomly generate a set $I_0 \subset [n]$ with size 36 which does not contain  adjacent indices. Next, find a set $I_0\subset I$ with size $36+l$ such that there are exactly $l$ pairs adjacent indices in $I$. Then, one can estimate the degree for the \textit{ISoC} $I$ by numeric mapping technique and vector numeric mapping technique, where the size of the index set of vector degree is set to 8, and calculate the difference of a maximum number of zero-sum rounds between these two techniques. For each $l$, we repeat 200 times and record the average of the differences;  see Table \ref{tab:differenceRounds} for details.

\begin{table}
    \caption{Average improved number of rounds by vector numeric mapping relative to numeric mapping technique}
    \label{tab:differenceRounds}
    \centering
    \footnotesize
    \setlength{\tabcolsep}{7.5pt}
    \renewcommand{\arraystretch}{1.1}
    \begin{tabular}{|c|c|c|c|c|c|c|c|c|c|} 
		\hline
		$l$ &0	&1	&2	&3	&4	&5	&6	&7	&8 \\\hline
        Number &6.8	&27.8	&41.0	&44.7	&45.4	&39.4	&34.6	&31.5	&29.7 \\\hline
    \end{tabular}
\end{table}

It is obvious that when the \textit{ISoC} contains adjacent indices, the vector numeric mapping technique can improve more than 27 rounds compared with the numerical mapping technique on average, even to 45 rounds. 
When there are no adjacent index or few adjacent indices, the difference between degree evaluations by numerical mapping technique and vector mapping technique is small. It implies the reason for the success of degree evaluation for  cubes with no adjacent index by numeric mapping in \cite{C:Liu17}. As $l$ increases, the improved number of rounds first increases and then slowly decreases. This is because the index set of vector degree cannot contain all adjacent indices when $l$ is large. But the vector numeric mapping technique compared with the numeric mapping technique can still improve by about 30 rounds.

\subsubsection{Complexity and precision comparison of degree evaluation.}
In theory, the complexity of degree evaluation using vector numeric mapping technique is no more than $2^{|J|}$ times that of degree evaluation using numeric mapping technique, where $J$ is the index set of vector degree. As evidenced by the experiments conducted above, we have observed that our degree estimation is notably more accurate when the \textit{ISoC} involves only a small adjacent subset. Moreover, since complexity is exponentially related to the size of the  index set of vector degree, we typically limit its size to not exceed 10.

The runtime of our algorithm for 788-round Trivium with various  sizes is detailed in Table \ref{tab:CompexityComparisonofDegEva}. In comparison to degree estimation based on the division property \cite{ToSC:CXZZ21}, the difference between the two methods is not substantial when the \textit{ISoC} consists of only a few adjacent indices. Furthermore, our algorithm significantly outpaces that method, as they require nearly 20 minutes to return degree evaluations for 788 rounds of Trivium.

\subsection{The complexity of fast cube search }
To validate the effectiveness of our pruning technique, we conducted a comparative experiment. As a comparison, we replicated a partial experiment by Liu \cite{C:Liu17}, which involved searching for 837-round distinguishers using cubes of size 37 with non-adjacent indices. As a result, our search algorithm made a total of 9296 calls to the degree estimation algorithm to complete the search of entire space, while exhaustive search required over 38320568 calls to the degree estimation algorithm. This clearly demonstrates the effectiveness of our pruning technique.

\subsection{Practical Key Recovery Attacks }
\label{subsec.keyrecovery}
Benefiting from the new framework of superpoly recovery and the \textit{ISoC} search technique, we could obtain a large scale of {special} \textit{ISoC}s within an acceptable time so that we can mount practical correlation cube attacks against Trivium  with large number of rounds. For correlation cube attacks, we choose the threshold of the conditional probability as $p = 0.77$. We will not elaborate further on these parameters.

\subsubsection{Practical key recovery attacks against 820-round Trivium.}
\paragraph{Parameter settings.}  Set $\Omega$ to be the total space of the \textit{ISoC} with size $k = 38$. Set the index set $J=\{0, 1, 2, i, i+1\}$, the threshold of degree $d$  to be 41 in the \textit{ISoC} search algorithm in Section \ref{sec.search_good_ISoCs}, where $i$ ranges from 3 to 26. We call the search algorithms in parallel for different $i$. 

\paragraph{Attacks.}
We have finally obtained 27428 {special} \textit{ISoC}s with size $38$, whose concrete information can be found in our git repository, including the \textit{ISoC}s, superpolys, factors and balancedness of 
 superpolys, where the balancedness of each superpoly is estimated by randomly testing 10000 keys. Besides, these \textit{ISoC}s are sorted by balancedness of superpolys in descending order. Finally, we choose the first $2^{13}$ \textit{ISoC}s to mount key recovery attacks.
 
For the first $2^{13}$ \textit{ISoC}s, we call
Algorithm \ref{alg.preprocessing phase of improved correlation cube attacks} 
to generate the sets $\mathcal{T}$ and $\mathcal{T}_1$ whose elements are pairs composed of the factor of superpoly and the corresponding special \textit{ISoC}, and  sizes  are 30 and 31, respectively. The results are listed in Appendix \ref{sec.tablefor820Roud}, where the probabilities are estimated by randomly testing 10000 keys. The details about the  \textit{ISoC} corresponding to each factor $h$ are listed in our git repository. 

In the online phase, after computing all the values of the superpolys, one obtain the set of equations $G_0$ and $G_1$. To make full use of the equations, one should recover keys as follows: 
\begin{enumerate}
    \item \label{step.guess} For all $54\le i \le 79$, guess the value of  $k_{i}$ if the equation for $k_{i}$ is not in $G_0 \cup G_1$. 
    \item \label{step.recover} For $i$ from $53$ to $0$, if the equation for $k_{i}+k_{i+25}k_{i+26}+k_{i+27}$ or $k_{i}+k_{i+25}k_{i+26}$ is in $G_0 \cup G_1$, recover the value of $k_i$. Otherwise, guess the value of  $k_{i}$.
    \item Go through over all possible values of $k_i$ guessed in Step 1 and Step 2, and repeat Step 1 until the solution is correct.
    \item If none of the solutions is correct, adjust the equations in $G_0$ according to Step 20 in Algorithm \ref{alg.online phase of improved correlation cube attacks} and go to Step 1.
\end{enumerate}
 Note that the complexity of recovering the value of $k_i$ for $i < 53$ is $\mathcal{O}(1)$, since the values of $k_{i+25}$, $k_{i+26}$ and $k_{i+27}$  are known before. In our experiments, the factors are all chosen in the form $k_{i}+k_{i+25}k_{i+26}+k_{i+27}$ for $0\le i\le 52$ or $k_{53}+k_{78}k_{79}$ or $k_i$ for $54 \le i \le 65$. Thus the
number of key bits obtained by the equations is always equal to the number of equations.

Now we talk about computing the complexity of our improved correlation cube attack. 
Since the set $\mathcal{I}$ of \textit{ISoC}s is fixed, for each fixed key $\boldsymbol{k}$, the corresponding values of the superpolys of all \textit{ISoC}s are determined. 
Therefore, we can calculate the time complexity of recovering this $\boldsymbol{k}$ using the following method. The complexity for computing the values of superpolys remains the same, which is $\mathcal{O}(2^{13}\cdot 2^{38})$. 
For brute force key recovery, the complexity can be determined by combining the values of the superpolys with the guessing strategy, allowing us to obtain the number of equations in $G_0$ and $G_1$, say, $a_{\boldsymbol{k}}$ and $b_{\boldsymbol{k}}$, respectively,  as well as the numbers of incorrect equations in $G_0$, denoted by $e_{\boldsymbol{k}}$. 
It then enables us to determine the complexity of the preprocessing phase to be $2^{80-a_{\boldsymbol{k}}-b_{\boldsymbol{k}}}\cdot\left(\sum_{i=0}^{e_{\boldsymbol{k}}} \binom{a_{\boldsymbol{k}}}{i}\right)$. 
Thus, the complexity for recovering $\boldsymbol{k}$ is $$\mathcal{C}_{\boldsymbol{k}} = \mathcal{O}(2^{13}\cdot 2^{38}) + \mathcal{O}\left(2^{80-a_{\boldsymbol{k}}-b_{\boldsymbol{k}}}\cdot\left(\sum_{i=0}^{e_{\boldsymbol{k}}} \binom{a_{\boldsymbol{k}}}{i}\right)\right).$$
We estimated the proportion of keys with a complexity not larger than $\mathcal{C}$ by randomly selecting 10,000 keys, namely, ${|\{\boldsymbol{k}: \mathcal{C}_{\boldsymbol{k}} \le \mathcal{C}\}|}/{10000}$, and the result is listed in Table \ref{tab:res820}. 
Due to the extensive key space, we have performed a hypothesis testing in  Appendix \ref{sec.hypothesis} to assess whether these proportions can accurately approximate the true proportions. In conclusion, our findings indicate a very strong correlation between them. From Table \ref{tab:res820}, it can be seen that 87.8\% of the keys can be practically recovered by the attack. In particular, 58.0\% of keys can be recovered with a complexity of only $\mathcal{O}(2^{52})$.

\begin{table}
    \caption{The proportion of keys with attack complexities not exceeding  $\mathcal{C}$ for 820 rounds}
    \label{tab:res820}
    \centering
    \footnotesize
    \setlength{\tabcolsep}{6.8pt}
    \renewcommand{\arraystretch}{1.1}
    \begin{tabular}{|c|c|c|c|c|c|} 
		\hline
		$\mathcal{C}$ & $2^{52}$ & $2^{54}$	&$2^{56}$ &$2^{58}$	& $2^{60}$\\\hline
         proportion & 58.0\%	& 69.2\%	&77.0\%	&82.8\%	&87.8\%\\\hline
    \end{tabular}
\end{table}



\subsubsection{Practical key recovery attacks against 825-round Trivium.}
\paragraph{Parameter settings.} Set $\Omega$  to be the total space of \textit{ISoC} with size $41$. Set the index set $J=\{0, 1, \cdots, 10\}\setminus\{j_0, j_1, j_2\}$,  the threshold of degree $d$ to be 44 in the \textit{ISoC} search algorithm in Section \ref{sec.search_good_ISoCs}, where $j_0 > 2$, $j_1 > j_0+1$ and $ j_1 + 1 < j_2 < 11$. We call the search algorithms in parallel for different tuples $(j_0, j_1, j_2)$. 
\paragraph{Attacks.}
We finally obtained 12354 {special} \textit{ISoC}s with size $41$, and we provide their concrete information in our git repository. Besides, these \textit{ISoC}s are sorted by balancedness of superpolys in descending order, where the balancedness is estimated by randomly testing 10000 keys. We choose the first $2^{12}$ \textit{ISoC}s to mount key recovery attacks. 

For the first $2^{12}$ \textit{ISoC}s, we call Algorithm \ref{alg.preprocessing phase of improved correlation cube attacks} 
to generate the sets $\mathcal{T}$ and $\mathcal{T}_1$ whose elements are  pairs composed of the factor of superpoly  and the corresponding special \textit{ISoC}, and the sizes  are 31 and 30, respectively. The results are listed in Appendix \ref{sec.tablefor825Roud}, where the probabilities are estimated by randomly testing 10000 keys. The details about the  \textit{ISoC} corresponding to each factor $h$ are listed in our git repository. 

We estimate the proportion of keys with a complexity not larger than $\mathcal{C}$ by randomly selecting 10,000 keys, and the result is listed in Table \ref{tab:res825}. From Table \ref{tab:res825}, it can be seen that 83\% of the keys can be practically recovered by the attack. In particular, 60.9\% of keys can be recovered with a complexity of only $\mathcal{O}(2^{54})$.

\begin{table}
    \caption{The proportion of keys with attack complexities not exceeding $\mathcal{C}$ for 825 rounds}
    \label{tab:res825}
    \centering
    \footnotesize
    \setlength{\tabcolsep}{6.8pt}
    \renewcommand{\arraystretch}{1.1}
    \begin{tabular}{|c|c|c|c|c|} 
		\hline
		$\mathcal{C}$ &  $2^{54}$	&$2^{56}$ &$2^{58}$	& $2^{60}$\\\hline
         proportion &  60.9\%	&70.7\%	&77.7\%	&83.0\%\\\hline
    \end{tabular}
\end{table}

\subsubsection{Practical key recovery attacks against 830-round Trivium.}
\paragraph{Parameter settings.} The parameter settings are the same as that of 825 rounds, except the threshold of degree $d$ is set to 45 here. We also call the search algorithms in parallel for different tuples $(j_0, j_1, j_2)$. 

\paragraph{Attacks.}
We finally obtained 11099 {special} \textit{ISoC}s with size $41$, whose concrete information can be found in our git repository. Besides these \textit{ISoC}s are sorted by balancedness of superpolys in descending order, where the balancedness is estimated by randomly testing 10000 keys. We choose the first $2^{13}$ \textit{ISoC}s to mount key recovery attacks. 

For the first $2^{13}$ \textit{ISoC}s, we call Algorithm \ref{alg.preprocessing phase of improved correlation cube attacks} 
to generate the sets $\mathcal{T}$ and $\mathcal{T}_1$, with sizes  25 and 41, respectively. The results are listed in Appendix \ref{sec.tablefor830Roud}, where the probabilities are estimated by randomly testing 10000 keys. The details about the  \textit{ISoC} corresponding to each factor $h$ are listed in our git repository. \par

We also estimate the proportion of keys with a complexity not larger than $\mathcal{C}$ by randomly selecting 10000 keys, and the result is listed in Table \ref{tab:res830}. From Table \ref{tab:res830}, it can be seen that 65.7\% of the keys can be practically recovered by the attack. In particular, 46.6\% of keys can be recovered with a complexity of only $\mathcal{O}(2^{55})$.

\begin{table}
    \caption{The proportion of keys with attack complexities not exceeding  $\mathcal{C}$ for 830 rounds}
    \label{tab:res830}
    \centering
    \footnotesize
    \setlength{\tabcolsep}{6.8pt}
    \renewcommand{\arraystretch}{1.1}
    \begin{tabular}{|c|c|c|c|c|c|c|} 
		\hline
		$\mathcal{C}$ &  $2^{55}$	&$2^{56}$ &$2^{57}$ &$2^{58}$ &$2^{59}$	& $2^{60}$\\\hline
         proportion &  46.6\% & 50.6\% &54.2\%	&58.0\%	&61.9\%	&65.7\%\\\hline
    \end{tabular}
\end{table}

Due to limited  computational resources, we were unable to conduct practical validations of key recovery attacks. Instead, we randomly selected some generated superpolys and verified the model's accuracy through cross-validation, utilizing publicly accessible code for superpoly recovery. Furthermore, we have performed practical validations for the non-zero-sum case presented in Table \ref{tab.foundkeys} to corroborate the accuracy of our model. In addition, as mentioned in \cite{DBLP:conf/cisc/CheT22}, attempting to recover keys would take approximately two weeks on a PC equipped with two RTX3090 GPUs when the complexity reaches $\mathcal{O}(2^{53})$. Therefore, for servers with multiple GPUs and nodes, it is feasible to recover a 830-round key within a practical time.

\subsubsection{Discussion about the parameter selections.} Parameter selection is a nuanced process. 
The number of middle rounds $r_m$ is determined by the complexity of computing the expression of $\boldsymbol{g}_{r_m}$. Once $r_m$ exceeds 200, the expression for $\boldsymbol{g}_{r_m}$ becomes intricate and challenging to compute, and overly complex expressions also hinder efficient computation of  $\mathtt{Coe}(\pi_{\boldsymbol{u}_{r_m}}(\boldsymbol{y}_{r_m}), \boldsymbol{x}^{\boldsymbol{u}})$ for MILP solvers. 
For \textit{ISoC}s, we chose their size not exceeding 45 to maintain manageable complexity. We focused on smaller adjacent indices as bases when searching for \textit{ISoC}s. The decision is based on the observation that smaller indices become involved later in the  update process of Trivium. Consequently, this usually results in comparatively simpler superpolys. We directly selected these preset index sets as index of set for vector degree. When determining the threshold for searching good \textit{ISoC}s, we noticed that a higher threshold tended  to result in more complex superpolys. Thus, we typically set the threshold slightly above the size of the \textit{ISoC}s.
In the improved correlation cube attacks, the probability threshold significantly affects the complexity. Too low  threshold will increase the number of incorrect guessed bits $e_{\boldsymbol{k}}$, raising the complexity. Conversely, an excessively high threshold reduces the number of equations in $G_0$, i.e., $a_{\boldsymbol{k}}$, prolonging the brute-force search. One can modify the $p$-value to obtain a relative high success probability.

\subsubsection{Comparison with other attacks.}
From the perspective of key recovery, our correlation cube attack  differs from attacks  in \cite{AC:HSTWW21,DBLP:conf/cisc/CheT22,AC:HHPW22} in how we leverage key information from the superpolys. We obtain equations from the 
superpolys' factors through their correlations with superpolys, whereas \cite{AC:HSTWW21,DBLP:conf/cisc/CheT22,AC:HHPW22} directly utilize the equations of the superpolys. This allows us to extract key information even from high-degree complex superpolys. 
We also expect that this approach will be effective for theoretical attacks and find applications in improving theoretical attacks to more extended rounds.

\section{Conclusions}
\label{sec.conclusion}
In this paper, we propose a variable substitution technique for cube attacks, which makes great improvement to the computational complexity of superpoly recovery and can provide  more concrete superpolys in new variables. To search good cubes, we give a generalized definition of  degree of Boolean function and give out a degree evaluation method with the vector numeric mapping technique. Moreover, we introduce a pruning technique to fast filter the \textit{ISoC}s and describe it into an MILP model to search automatically. It turn out that, these techniques perform well in cube attacks. We also propose  practical verifications for some former work by other authors and perform practical key recovery attacks on $820$-, $825$- and $830$-round Trivium cipher, promoting up to  10 more rounds than previous best practical attacks as we know. In the future study, we will apply our techniques to more ciphers to show their power.

\bibliographystyle{splncs04}
\bibliography{abbrev3,crypto,biblio}

\appendix

\section{MILP Models for Three Basic Operations}
\label{sec.model}
\setcounter{Model}{0}
\begin{Model}[\texttt{Copy}\cite{AC:HSWW20,EC:HLMTW20}]
	Let $\mathrm{a}\stackrel{\texttt{Copy}}{\longrightarrow} (\mathrm{b_0}, \mathrm{b_1}, \cdots, \mathrm{b_{n-1}})$ be a propagation trail of \texttt{Copy}. The following inequalities are sufficient to describe all  trails for \texttt{Copy}.
	\begin{equation*}
		\begin{cases}
			\mathcal{M}.var \leftarrow \mathrm{a}, \mathrm{b_0}, \mathrm{b_1}, \cdots, \mathrm{b_{n-1}}\text{ as binary};\\
			\mathcal{M}.con \leftarrow \mathrm{b_0} + \mathrm{b_1} + \cdots + \mathrm{b_{n-1}} \ge \mathrm{a}; \\
			\mathcal{M}.con \leftarrow \mathrm{a} \ge \mathrm{b_i} \text{ for all } i\in\{0, 1, \cdots, n-1\}. 
		\end{cases}
	\end{equation*}
\end{Model}

\begin{Model}[\texttt{And}\cite{AC:HSWW20,EC:HLMTW20}]
	Let $(\mathrm{b_0}, \mathrm{b_1}, \cdots, \mathrm{b_{n-1}})\stackrel{And}{\longrightarrow} \mathrm{a} $ be a propagation trail of \texttt{And}. The following equations  are sufficient to describe all  trails for \texttt{And}.
	\begin{equation*}
		\begin{cases}
			\mathcal{M}.var \leftarrow \mathrm{a}, \mathrm{b_0}, \mathrm{b_1}, \cdots, \mathrm{b_{n-1}} \text{ as binary};\\
			\mathcal{M}.con \leftarrow \mathrm{a} = \mathrm{b_i}  \text{ for all } i\in\{0, 1, \cdots, n - 1\}.	
		\end{cases}
	\end{equation*}
\end{Model}

\begin{Model}[\texttt{XOR}\cite{AC:HSWW20,EC:HLMTW20}]
	Let $(\mathrm{b_0}, \mathrm{b_1}, \cdots, \mathrm{b_{n-1}})\stackrel{Xor}{\longrightarrow} \mathrm{a} $ be a propagation trail of \texttt{XOR}. The following equations are sufficient to describe all  trails for \texttt{XOR}.
	\begin{equation*}
		\begin{cases}
			\mathcal{M}.var \leftarrow \mathrm{a}, \mathrm{b_0}, \mathrm{b_1}, \cdots, \mathrm{b_{n-1}} \text{ as binary};\\
			\mathcal{M}.con \leftarrow \sum_{i=0}^{n-1}\mathrm{b_i} = \mathrm{a} 	,
		\end{cases}
	\end{equation*}
	where $\sum$ represents the summation over $\mathbb{Z}$.
\end{Model}


\section{Algorithms for Correlation Cube Attacks}
\begin{algorithm}[H]
  \label{alg.preprocessing phase of correlation cube attacks}
\DontPrintSemicolon
  \SetAlgoLined
  Generate a set of \textit{ISoC} $\mathcal{I}$;\\
  \For {each \textit{ISoC} $I$ in $\mathcal{I}$}{
    $Q_I \gets  \mathtt{Decompostion}(I)$, and goto next $I$ if $Q_I$ is empty; \\
    Estimate the conditional probability $\Pr(h_i = b \mid  f_I)$ for each function $h_i$ in the basis $Q_I$ of the superpoly $f_I$, and select $(I, h_i, b)$ that satisfies $\Pr(h_i = b\mid f_I) > p$.
  }
  \caption{Preprocessing Phase of Correlation Cube Attacks~\cite{EC:LYWL18}}
\end{algorithm}
\begin{algorithm}[H]
  \label{alg.online phase of correlation cube attacks}
\DontPrintSemicolon
\SetKwData{Req}{\textbf{Require:}}
  \SetAlgoLined
  \Req a set of \textit{ISoC} $\mathcal{I}$ and $\Omega = \{(I, h, b)|\Pr(h = b|f_I) > p\}$\\
  $G_0 = \emptyset$ and $G_1 = \emptyset$;\\
  \For {each \textit{ISoC} $I$ in $\mathcal{I}$}{
    Randomly generate $\alpha$ values from free non-cube public bits;\\
    Compute the $\alpha$ values of the superpoly $f_I$ over the cube $C_I$;\\
    \eIf {all the values of $f_I$ equal to 0} {
      Set $G_0 = G_0 \cup \{h = 0|(I,h,0) \in \Omega\}$;}{Set $G_1 = G_1 \cup \{h = 1|(I,h,1) \in \Omega\}$}
    Deal with the case that $\{h|h = 0 \in G_0 \text{ and } h = 1 \in G_1\}$ is not empty;\\
    Randomly choose $r_0$ equations from $G_0$ and $r_1$ equations from $G_1$, solve these $r_0 + r_1$ equations and check whether the solutions are correct;\\
    Repeat Step 12 if none of the solutions is correct.
  }
  \caption{Online Phase of Correlation Cube Attacks~\cite{EC:LYWL18}}
\end{algorithm}

\section{Algorithms for Obtaining \texttt{VT}}
\begin{algorithm}[H]
	\label{alg.obtain_vt}
    \DontPrintSemicolon
    \SetAlgoLined
    \KwIn{$\boldsymbol{u}, r_m$}
	Declare an empty MILP model $\mathcal{M}$.\\
	Let $\boldsymbol{s}_0$ be $n+m$ MILP variables of $\mathcal{M}$ corresponding to the $n+m$ components of $\boldsymbol{x}||\boldsymbol{k}$.\\
	$\mathcal{M}.con \leftarrow \boldsymbol{s}_0[j] = u_j$ for all $j\in[n]$.\\
	Update MILP model $\mathcal{M}$ according to the round function $\boldsymbol{f}_j$ and denote $\boldsymbol{s}_{j+1}$ as the output state after $\boldsymbol{f}_j$ for $0 \le j \le r-1$.\\
	$\mathcal{M}.con \leftarrow \boldsymbol{s}_r[0] = 1$.\\
	Prepare an empty set $\mathtt{VT}_{r_m}$.\\
	$\mathcal{M}.update()$.\\
	$cb=\mathtt{VTCallbackFun}(\& \mathtt{VT}_{r_m}, r_m, \boldsymbol{s}_0, \boldsymbol{s}_1, \cdots, \boldsymbol{s}_r)$.\\
	$\mathcal{M}.setCallback(\&cb)$\\
    $\mathcal{M}.optimize()$.\\
	\KwRet{$\mathtt{VT}_{r_m}$}. 
    \caption{Obtain Valuable Terms}
\end{algorithm}

\begin{algorithm}[H]
	\label{alg.callbackfun}
    \DontPrintSemicolon
	\SetAlgoLined
	\KwIn{$\& \mathtt{VT}_{r_m}, r_m, \boldsymbol{s}_0, \boldsymbol{s}_1, \cdots, \boldsymbol{s}_r$}
		\If{$where = MIPSOL$}
		{Let $\boldsymbol{u}_{j}$ be the solution corresponding to $\boldsymbol{s}_j$.
        }
        flag = \texttt{ComputeNumberOfTrails}($\boldsymbol{u}_{r_m}, r_m)\bmod 2\equiv 1$.\\
		\If{flag} 
		{Add $\pi_{\boldsymbol{u}_{r_m}}(\boldsymbol{y}_{r_m})$ to $\mathtt{VT}_{r_m}$. \\
		}
		addLazy($\boldsymbol{s}_{r_m} \neq \boldsymbol{u}_{r_m}$)\\
  \caption{VTCallbackFun}
  \end{algorithm}
\begin{algorithm}[H]
	\label{alg.computetrails}
    \DontPrintSemicolon
	\SetAlgoLined
	\KwIn{$\boldsymbol{u}_{r_m}, r_m$}
		Declare an empty MILP model $\mathcal{M}'$.\\
		Let $\boldsymbol{s}_{r_m}$ be MILP variables of $\mathcal{M}'$ corresponding to the output states of $\boldsymbol{f}_{r_m-1}$.\\
		$\mathcal{M}'.con \leftarrow \boldsymbol{s}_{r_m}[j] = \boldsymbol{u}_{r_m}[j]$ for all $j$.\\
		Update MILP model $\mathcal{M}'$ according to the round function $\boldsymbol{f}_j$ and denote $\boldsymbol{s}_{j+1}$ as the output state after $\boldsymbol{f}_j$ for $r_m \le j \le r-1$.\\
		$\mathcal{M}'.con \leftarrow \boldsymbol{s}_r[0] = 1$.\\
		$\mathcal{M}'.optimize()$.\\
  \KwRet{The number of solutions of $\mathcal{M}'$}. 
    \caption{ComputeNumberOfTrails}
\end{algorithm}

\section{ Proof of Theorem \ref{vdegthm}}
\label{proof1}
\begin{proof}
	Assume that $g[i] = \bigoplus_{\boldsymbol{w}} g_{\boldsymbol{w}}^i \boldsymbol{x_I}^{\boldsymbol{w}}$. Then we have
	\begin{equation*}
		\boldsymbol{g}^{\boldsymbol{u}} = \bigoplus_{\boldsymbol{w}}\left(\bigoplus_{\substack{\boldsymbol{w}_{0},\cdots,\boldsymbol{w}_{n-1}, \boldsymbol{w}_i \in u[i]\mathbb{F}_2^d\\\boldsymbol{w} = \bigvee_{i=0}^{n-1} u[i]\boldsymbol{w}_{i}}} \prod_{i=0}^{n-1} (g_{\boldsymbol{w}_{i}}^{i})^{u[i]} \right) \boldsymbol{x}_I^{\boldsymbol{w}},
	\end{equation*}
	where $u[i]\mathbb{F}_2^d = \mathbb{F}_2^d$ if $u[i] = 1$ and otherwise, it is equal to $\{ \boldsymbol{0} \}$.
	Then the $(j+1)$-th ($0 \le j \le 2^d -1$) component of the vector degree of $\boldsymbol{g}^{\boldsymbol{u}}$ is less than or equal to 
	$$\max_{\substack{j_0,\cdots,j_{n-1}, 0 \le j_i \le u[i](2^d-1) \\ j = \bigvee_{i=0}^{n-1} u[i]j_i}}\left(\sum_{i=0}^{n-1} u[i]v_{i}[j_i]\right),$$
	since $\deg(g_{\boldsymbol{w}_{i}}^{i}) \le v_{i}[j_i]$, where $\boldsymbol{w}_{i}$ is the $d$-bit binary representation of $j_i$.
	Assume $f = \bigoplus_{\boldsymbol{u}}a_{\boldsymbol{u}}\boldsymbol{x}^{\boldsymbol{u}}$. Then we can conclude that each component of the vector degree of $f \circ \boldsymbol{g}=\bigoplus_{\boldsymbol{u}}a_{\boldsymbol{u}}\boldsymbol{g}^{\boldsymbol{u}}$ is less than or equal to the corresponding 
	component of $\mathtt{VDEG}(f, V)_d$ by Definition \ref{vectormd}.
\end{proof}

\section{ Proof of Theorem \ref{vdegsubset}}
\label{proof2}
\begin{proof}
	Assume that $f = \bigoplus_{\boldsymbol{u}}a_{\boldsymbol{u}}\boldsymbol{x}^{\boldsymbol{u}}$. By  Definition \ref{vectormd}, we have
	\begin{small}
		\begin{equation}\label{eqmax1}
			\begin{aligned}
				&\quad\max_{0 \le j'<2^{d-k}}\left(\mathtt{VDEG}_{d}(f,V_2)[j'\cdot 2^k + j] + wt(j')\right)\\
				&= \max_{0 \le j'<2^{d-k}}\left\{\max_{a_{\boldsymbol{u}} \neq 0}\max_{\substack{j_0' \cdot 2^k+j_0,\ldots,j_{n-1}' \cdot 2^k + j_{n-1} \\ 0 \le j_i \le u[i](2^k-1) \\ 0 \le j_i' \le u[i](2^{d-k}-1) \\ j' \cdot 2^k + j = \bigvee_{i=0}^{n-1} u[i](j_i' \cdot 2^k  + j_i )}}\left\{\sum_{i=0}^{n-1} u[i]V_2[i][j_i' \cdot 2^k + j_i]\right\}+ wt(j')\right\}\\
				&= \max_{0 \le j'<2^{d-k}}\max_{a_{\boldsymbol{u}} \neq 0}\max_{\substack{j_0,\cdots,j_{n-1} \\ 0 \le j_i \le u[i](2^k-1) \\ j = \bigvee_{i=0}^{n-1} u[i]j_i}}\max_{\substack{j'_0,\cdots,j'_{n-1} \\ 0 \le j'_i \le u[i](2^{d-k}-1) \\ j' = \bigvee_{i=0}^{n-1} u[i]j'_i}}\left\{\sum_{i=0}^{n-1} u[i]V_2[i][j_i' \cdot 2^k + j_i]+ wt(j')\right\}\\
				&= \max_{a_{\boldsymbol{u}} \neq 0}\max_{\substack{j_0,\cdots,j_{n-1} \\ 0 \le j_i \le u[i](2^k-1) \\ j = \bigvee_{i=0}^{n-1} u[i]j_i}}\max_{0 \le j'<2^{d-k}}\max_{\substack{j'_0,\cdots,j'_{n-1} \\ 0 \le j'_i \le u[i](2^{d-k}-1) \\ j' = \bigvee_{i=0}^{n-1} u[i]j'_i}}\left\{\sum_{i=0}^{n-1} u[i]V_2[i][j_i' \cdot 2^k + j_i]+ wt(j')\right\}.
			\end{aligned}
		\end{equation}
	\end{small}
	Since $j' = \bigvee_{i=0}^{n-1} u[i]j'_i$,  we know that $\wt(j') \le \sum_{i=0}^{n-1} u[i]wt(j'_i)$. 
	Then by  the inequality  (\ref{vdegI1I2con1}), we have
	\begin{equation}\label{eqmax2}
		\begin{aligned}
			&\quad\max_{0 \le j'<2^{d-k}}\max_{\substack{j'_0,\cdots,j'_{n-1} \\ 0 \le j'_i \le u[i](2^{d-k}-1) \\ j' = \bigvee_{i=0}^{n-1} u[i]j'_i}}\left\{\sum_{i=0}^{n-1} u[i]V_2[i][j_i' \cdot 2^k + j_i]+ wt(j')\right\} \\
			&\le\max_{0 \le j'<2^{d-k}}\max_{\substack{j'_0,\cdots,j'_{n-1} \\ 0 \le j'_i \le u[i](2^{d-k}-1) \\ j' = \bigvee_{i=0}^{n-1} u[i]j'_i}}\left\{\sum_{i=0}^{n-1} u[i]\left(V_2[i][j_i' \cdot 2^k + j_i]+ wt(j_i')\right)\right\}\\
			&\le\max_{0 \le j'<2^{d-k}}\max_{\substack{j'_0,\cdots,j'_{n-1} \\ 0 \le j'_i \le u[i](2^{d-k}-1) \\ j' = \bigvee_{i=0}^{n-1} u[i]j'_i}}\left\{\sum_{i=0}^{n-1} u[i]V_1[i][j_i]\right\}\\
			&=\sum_{i=1}^n u[i]V_1[i][j_i].
		\end{aligned}
	\end{equation}
	By   (\ref{eqmax1}), (\ref{eqmax2}), and Definition \ref{vectormd}, we assert that the inequalities given in Equation (\ref{vdegI1I2conclusion1}) hold for all $0 \le j < 2^k$.
\end{proof}

\section{Example for Estimating Algebraic Degree}
\label{sec.example}
\begin{example}
	\label{exam}
	Let $f = y_0y_1$ and $\boldsymbol{g} = \{x_0x_2 + x_1, x_0x_1 + x_3\}$.
	Then the composite function $$f\circ \boldsymbol{g} = x_0x_1x_2 + x_0x_1 + x_0x_2x_3 + x_1x_3.$$
	\renewcommand\labelenumi{(\arabic{enumi})}
	\begin{enumerate}
		\item Denote $\boldsymbol{d}$ as $\deg(\boldsymbol{g}) = (2,2)$. Then the numeric degree of $f\circ \boldsymbol{g}$ is $\mathtt{DEG}(f,\boldsymbol{d}) = 4 > \deg(f\circ \boldsymbol{g}) = 3$.
		\item For the vector numeric mapping, we consider three cases according as  $I = \{0\}, \{1\}$ or $\{0, 1\}$.
		\begin{enumerate}
			\item Let $I_1 = \{0\}$, and assume  the vector degree of $\boldsymbol{g}$ is $\mathbf{vdeg}_{I_1}(\boldsymbol{g}) = \left( (1,1), (1,1) \right)$, denoted by $V$. 
			The estimated vector degree of $f\circ \boldsymbol{g}$ is $\mathtt{VDEG}_1(f,V)= (2,2) = \textbf{vdeg}_{I_1}(f\circ \boldsymbol{g})$.
			Then the estimated degree of $f\circ \boldsymbol{g}$ can be computed by $\mathtt{VDEG}_1(f,V)$, which is equal to $\max\{0+2, 1+2\} = 3 = \deg(f\circ \boldsymbol{g})$.
			\item Let $I_2 = \{1\}$, and assume the vector degree of $\boldsymbol{g}$ is $\mathbf{vdeg}_{I_2}(\boldsymbol{g}) = \left( (2,0), (1,1) \right)$, denoted by $V$. 
			The estimated vector degree of $f\circ \boldsymbol{g}$ is $\mathtt{VDEG}_1(f,V)= (3,3) \succcurlyeq   \textbf{vdeg}_{I_2}(f\circ \boldsymbol{g}) = (3,2)$.
			Then the estimated degree of $f\circ \boldsymbol{g}$ can be computed by $\mathtt{VDEG}_1(f,V)$, which is equal to $\max\{0+3, 1+3\} = 4 > \deg(f\circ \boldsymbol{g})$.
			\item Let $I_3 = \{0, 1\}$, and assume the vector degree of $\boldsymbol{g}$ is $\mathbf{vdeg}_{I_3}(\boldsymbol{g}) = ( (-\infty,1,0,$ $-\infty ),$ $  (1, -\infty, -\infty, 0) )$, denoted by $V$. 
			The estimated vector degree of $f\circ \boldsymbol{g}$ is $\mathtt{VDEG}_2(f,V)= (-\infty, 2, 1, 1) =  \textbf{vdeg}_{I_3}(f\circ \boldsymbol{g}) $.
			Then the estimated degree of $f\circ \boldsymbol{g}$ can be computed by $\mathtt{VDEG}_2(f,V)$, which is equal to $\max\{0-\infty, 1+2, 1+1, 2+1\} = 3 = \deg(f\circ \boldsymbol{g})$.
			Moreover, the estimated vector degree of $f\circ \boldsymbol{g}$ w.r.t. $I_1$ and $I_2$, respectively, derived from $\mathtt{VDEG}_2(f,V)$, is $\left(\max\{0-\infty, 1+1\}, \max\{0+2, 1+1\} \right) = \deg_{I_1}(f\circ \boldsymbol{g})$   and    
   $\left(\left(\max\{0-\infty, 1+2\}, \max\{0+1, 1+1\} \right) = \deg_{I_2}(f\circ \boldsymbol{g})\right)$, respectively.
		\end{enumerate}
		
	\end{enumerate}
\end{example}

From Example \ref{exam}, we see that the vector numeric mapping for estimating the degree of the composite function is more accurate than the numeric mapping if we choose a suitable $I$. This is because \emph{the vector numeric mapping can  eliminate the repeated degree estimation of variables whose indices are in the index set of vector degree.} In Example \ref{exam}, the degree of $x_0$ would be computed twice if the index set of vector degree does not contain $0$.

\section{Algorithm for Searching Good \textit{ISoC}s}
\label{AlgSearchGoodCube}
\begin{algorithm}[H]
	\label{alg.searching good cubes}
  \DontPrintSemicolon
	\SetAlgoLined
	\KwIn{$J,k,d,a$}
	\KwOut{$\mathcal{I}$}
	$\mathcal{I} \gets \emptyset$\\
	Generate an empty MILP model $\mathcal{M}$\\
	Let $b_0,b_1,\cdots,b_{n-1}$ be the $n$ binary variables of model $\mathcal{M}$\\
	$\mathcal{M}.con \gets \sum_{i=0}^{n-1}b_i=k$\\
	$\mathcal{M}.con \gets b_j=1$, for $\forall j\in J$\\
	$\mathcal{M}.update()$\\
	$cb=\text{CallbackFun}(\& \mathcal{I},d,a,b_0,b_1,\cdots,b_{n-1})$\\
	$\mathcal{M}.setCallback(\&cb)$\\
	$\mathcal{M}.optimize()$\\
	\caption{Search Good \textit{ISoC}s}
\end{algorithm}

  \begin{algorithm}[H]
	  \label{alg.CallbackFun}
	  \DontPrintSemicolon
	  \KwIn{$\& \mathcal{I},d,a,b_0,b_1,\cdots,b_{n-1}$}
	  \If {where=\textit{MIPSOL}}{
		  $I \gets \emptyset$\\
		  \For {each $i$ from $0$ to $n-1$}{
			  \If {$b_i=1$}{
				  $I \gets I \cup \{i\}$
			  }
		  }
		  Get $d_I$, the estimate of algebraic degree w.r.t. the variable $\boldsymbol{v}_I$ and the index set $J$ \\
		  \If {$d_I\leq d$}{
			  $\mathcal{I}\gets \mathcal{I}\cup \{I\}$\\
			  \Else{
				  $d_{I'}=d_{I}$\\
				  \While {$d_{I'}>d$}{
					  \For {each $j$ from $0$ to $a-1$}{
						  $I'=I$ \\
						  Randomly choose $i\in I\setminus J$, $I'=I'\setminus \{i\}$\\
						  Get $d_{I'}$, the estimate of algebraic degree w.r.t. the variable $\boldsymbol{v}_{I'}$ and the index set $J$\\
						  \If {$d_{I'}>d$}{
							  $I=I'$, break
						  }
						  }
			  }
			  }
		  }
		  addLazy($\sum_{i\in I}b_i<|I|$)
	  }
	  \caption{CallbackFun}
  \end{algorithm}

\section{The \textit{ISoC}s for Practical Verification}\label{sec.ISoc}
See Table \ref{tab.ISoC}.
\begin{table}[H]
	\caption{The \textit{ISoC}s}
	\centering  \label{tab.ISoC}
	\begin{tabular}{|c|l|c|} 
		\hline
		No.     &Cube Indices   & Size\\\hline\hline
		\multirow{3}{*}{$I_1$}     & 0, 2, 4, 6, 7, 9, 11, 13, 15, 17, 19, 21, 24, 26, 28, 30, & \multirow{3}{*}{39} \\
		& 32, 34, 36, 39, 41, 43, 45, 47, 49, 51, 54, 56, 58, 60, & \\
		& 62, 64, 66, 69, 71, 73, 75, 77, 79  & \\ \hline
		\multirow{3}{*}{$I_2$}     & 0, 2, 4, 6, 8, 9, 11, 13, 15, 17, 19, 21, 24, 26, 28, 30, & \multirow{3}{*}{39} \\
		& 32, 34, 36, 39, 41, 43, 45, 47, 49, 51, 54, 56, 58, 60, & \\
		& 62, 64, 66, 69, 71, 73, 75, 77, 79  & \\\hline 
		\multirow{3}{*}{$I_3$}     & 0, 1, 2, 4, 6, 8, 9, 11, 13, 15, 17, 19, 21, 24, 26, 28,  & \multirow{3}{*}{40} \\
		& 30, 32, 34, 36, 39, 41, 43, 45, 47, 49, 51, 54, 56, 58,  & \\
		& 60, 62, 64, 66, 69, 71, 73, 75, 77, 79   & \\ \hline
	\end{tabular}
\end{table}

		
		

\section{Found Secret Keys}
\label{sec.foundsecretkeys}
\begin{table}[H]
	\caption{Found secret keys}
	\centering  
	\begin{tabular}{|c|c|c|c|c|} 
		\hline
		No.     & Rounds   & $\boldsymbol{key}$ & Rounds   & $\boldsymbol{key}$\\\hline\hline
		\multirow{3}{*}{$I_1$}     & 836 &  \textsf{0xebd75e597e62736ce784} & 837 &  \textsf{0x43f576b9b75b28e4030c}\\ \cline{2-5}
		& 838 &  \textsf{0xf327d6a7b3bdb3d62c36} & 839 &  \textsf{0xe55eeaa86dc1cd764c83}\\\cline{2-5}
		& 840 &  \textsf{0x75cd618a4e6f7ef37c68}
		& 842 &  \textsf{0x1822b5ad2b15206020d3}\\\hline\hline
		\multirow{4}{*}{$I_2$}     & 836 &  \textsf{0x8c128672e9143c6bdc96}
		& 837 &  \textsf{0xe23551dfcf9d08c4aff4}\\\cline{2-5}
		& 838 &  \textsf{0x4caedab34723fd69c667}
		& 839 &  \textsf{0xb3fb4e2e8f8ec6162f97}\\\cline{2-5}
		& 840 &  \textsf{0x75cd618a4e6f7ef37c68}
		& 841 &  \textsf{0x6bee17ab37a8bf9b8e26}\\\cline{2-5}
		& 842 &  \textsf{0x77a9785a05263c44e8f3} &&\\\hline \hline
		\multirow{2}{*}{$I_3$}     & 837 &  \textsf{0x894e029da347a27baa6}
		& 838 &  \textsf{0x3948a5e54a48eee74d75}\\\cline{2-5}
		& 839 &  \textsf{0xcf48f654983cdd34c923}
		& 842 &  \textsf{0x76766e29cee533d4233e}\\\hline
	\end{tabular}\label{tab.foundkeys}
\end{table}
See Table \ref{tab.foundkeys}, where $\boldsymbol{key} = k_7 \parallel k_6 \parallel \cdots \parallel k_0 \parallel k_{15} \parallel k_{14} \parallel \cdots \parallel k_8 \parallel \cdots \parallel k_{79} \parallel k_{78} \parallel \cdots \parallel k_{72}$.

\section{Running time of the algorithm of degree evaluation}
See Table \ref{tab:CompexityComparisonofDegEva}.
\begin{table}[h]
    \caption{Running time of the algorithm of degree evaluation by vector numeric mapping of $788$-round Trivium with different size of $J$.}
    \label{tab:CompexityComparisonofDegEva}
    \centering
    \footnotesize
    \setlength{\tabcolsep}{4.3pt}
    \renewcommand{\arraystretch}{1.1}
    \begin{tabular}{|c|c|c|c|c|c|c|c|c|c|c|c|} 
		\hline
		$|J|$ &0 &1	&2	&3	&4	&5	&6	&7	&8 & 9 & 10\\\hline
        Time(Sec) &1.63	&1.68	&1.78	&2.04	&2.33	&3.06	&4.58	&7.87	&15.66 & 35.72 & 89.97\\\hline
    \end{tabular}
\end{table}


\section{Algorithm for Estimation of Vector Degree of Trivium}
\label{AlgEstVecDegTri}
The algorithm for estimation of the vector degree of Trivium is detailed in Algorithm \ref{alg.EstimateDegree} and Algorithm \ref{alg.DegreeMul}.

\begin{algorithm}[H]
	\label{alg.EstimateDegree}
    \DontPrintSemicolon
	\SetAlgoLined
	\KwIn{$\boldsymbol{s}^0,I,J,R,mode$}
	$V \gets \mathbf{vdeg}_{[J,\boldsymbol{x}_I]}(\boldsymbol{s}^0)$.\\
	Prepare $V_l$,$V_m$,$V_s$ with size of $288\times 2^{|J|}$.\\
    \For {$t$ from 1 to $R$}{
        \For {$i$ from 1 to 3}{
            $V_l[n_i] \gets \mathtt{VDEG}(l_i,V)$.\\
            $V_m[n_i] \gets \text{DegreeMul}(V,V_l,V_m,V_s,i,t)$.\\
            $V[n_i][j] \gets \max(V_l[n_i][j],V_m[n_i][j])$ for all $j$.\\
            $V_s[n_i]=V[n_i-1]$.\\
        }
        $V \gets (V[287],V[0],\cdots,V[286])$.\\
        $V_l \gets (V_l[287],V_l[0],\cdots,V_l[286])$.\\
        $V_m \gets (V_m[287],V_m[0],\cdots,V_m[286])$.\\
        $V_s \gets (V_s[287],V_s[0],\cdots,V_s[286])$.
    }
    \If {$mode=1$}{
        \KwRet{$\max_j\{\max_{i\in\{65,92,161,176,242,287\}}\{\min\{V[i][j],|I|-|J|\}\}+\wt(j)\}$}\\
    \ElseIf {$mode=3$}{
        \KwRet{$\max_j\{\max_{i\in\{65,92,161,176,242,287\}}\{V[i][j]\}+\wt(j)\}$}
    }
    }
	\caption{Estimation of Vector Degree of Trivium}
\end{algorithm}

\begin{algorithm}[H]
	\label{alg.DegreeMul}
  \DontPrintSemicolon
	\SetAlgoLined
	\KwIn{$V,V_l,V_m,V_s,i,t$}
    $t_1 \gets t-r_i$.\\
    \If {$t_1 < 0$}{
        \KwRet{$\mathtt{VDEGM}(V[n_i-1],V[n_i-2])$}.\\
    }
    $\boldsymbol{v}_1 \gets \mathtt{VDEGM}(V_m[n_i-1],V_s[n_i-3])$.\\
    $\boldsymbol{v}_2 \gets \mathtt{VDEGM}(V_s[n_i-1],V_m[n_i-2])$.\\
    $\boldsymbol{v}_3 \gets \mathtt{VDEGM}(V_s[n_i-1],V_s[n_i-2],V_s[n_i-3])$.\\
    $\boldsymbol{v}_4[j] \gets \min\{\boldsymbol{v}_1[j],\boldsymbol{v}_2[j],\boldsymbol{v}_3[j]\}$ for all $j$.\\
    $\boldsymbol{v}_5 \gets \mathtt{VDEGM}(V_m[n_i-1],V_l[n_i-2])$.\\
    $\boldsymbol{v}_6 \gets \mathtt{VDEGM}(V_l[n_i-1],V[n_i-2])$.\\
    $\boldsymbol{v}[j] \gets \max\{\boldsymbol{v}_4[j],\boldsymbol{v}_5[j],\boldsymbol{v}_6[j]\}$ for all j.\\
    \KwRet{$\boldsymbol{v}$}.
	\caption{DegreeMul}
\end{algorithm}

We denote  $\mathtt{VDEG}(\prod_{i=1}^k x[i],(\boldsymbol{v}_1,\cdots, \boldsymbol{v}_k))$ as $\mathtt{VDEGM}(\boldsymbol{v}_1, \cdots, \boldsymbol{v}_k)$. 
The input of Algorithm \ref{alg.EstimateDegree} are initial internal state $\boldsymbol{s}^{0}$ of Trivium, \textit{ISoC} $I$, index set of vector degree $J \subset I$ and end-round $R$.
And the output of Algorithm \ref{alg.EstimateDegree} is the information about the degree of $R$-round output bit.
The notations $V, V_l, V_m$ and $V_s$ represent array with size of $288 \times 2^{\left\lvert J\right\rvert }.$  
$V[j]$ represents the estimated vector degree of $s_j$, where $\boldsymbol{s}$ is the internal state of Trivium.
$V_l[i], V_m[i]$ and $V_s[i]$ represent the estimated vector degree of the linear component, quadratic term, and one factor of quadratic term in $s_i$ respectively by Equation  (\ref{eq.simupdate}); see Algorithm \ref{alg.EstimateDegree} for details.

In Algorithm \ref{alg.DegreeMul}, if $t_1 \ge 0$, by Equation (\ref{eq.simupdate}) $s_{n_i - 1}s_{n_i - 2}$ can be expanded as
$$(q_1q_2 + l_1 )(q_2q_3 + l_2) = q_1q_2q_3 + q_1q_2l_2 + l_1(q_2q_3+l_2),$$
where $q_1, q_2$ ($q_2, q_3$) are the factors of nonlinear term in $s_{n_i - 1}$ ($s_{n_i - 2}$) and $1_2$ is the common factor, and $l_1$ ($l_2$) is the linear term in $s_{n_i - 1}$ ($s_{n_i - 2}$).
The estimated vector degree of $q_1, q_2, q_3$ are $V_s[n_i-1], V_s[n_i - 2], V_s[n_i - 3]$ respectively.
And $V_m[n_i - 1], V_m[n_i - 2], V_l[n_i - 1], V_l[n_i - 2]$ correspond to the estimation of the vector degree of $q_1q_2, q_2q_3, l_1, l_2$.
To estimate the vector degree of $q_1q_2q_3$, we calculate the minimum value of three values $\boldsymbol{v}_1, \boldsymbol{v}_2$ and $\boldsymbol{v}_3$, 
where $\boldsymbol{v}_1$ is the estimation when view $q_1q_2q_3$ as the multiple of $q_1q_2$ and $q_3$, $\boldsymbol{v}_2$ is the estimation when view $q_1q_2q_3$ as the multiple of $q_1$ and $q_2q_3$,
and $\boldsymbol{v}_3$ is the estimation when view $q_1q_2q_3$ as the multiple of $q_1$, $q_2$ and $q_3$.
Then compute the estimation of the vector degree of $q_1q_2l_1$ and $l_1s_{n_i-2}$ denoted by $\boldsymbol{v}_5, \boldsymbol{v}_6$, and we can obtain the estimation of the vector degree of the nonlinear term.
Every round we compute the estimated vector degree of the linear term, nonlinear term, and the sum of them, and save the values into $V_l,V_m,V,V_s$ in Algorithm \ref{alg.EstimateDegree}.
Finally, return different information about the output bit according to different values of $mode$. 
If $mode = 3$, return the unprocessed estimated degree of the output bit, which is used to search for good cubes.
If $mode = 1$, return the more accurate degree evaluation of the output bit by Corollary \ref{degcor}, which is used to evaluate degree.

\section{The Equations and Probabilities in the $820$-Round Attack}
Refer to Table \ref{tab.T} and Table \ref{tab.T1}, where $h$ is the factor, \# of \textit{ISoC}s denotes the number of \textit{ISoC}s in the set $T_h$, $T_h$ is the set containing all the \textit{ISoC}s the superpoly of which factored by $h$,  $\Pr(0|0)$ represents the probability $\Pr(h=0|f_I = 0\text{ for }\forall I\in T_h)$. 
\label{sec.tablefor820Roud}
\begin{table}[H]

	\caption{Set $\mathcal{T}$ for $820$-round attack}
	\centering  
	\begin{tabular}{|c|c|c|c|c|c|} 
		\hline
		No.  &$h$  & \# of \textit{ISoC}s &  $\Pr(0|0)$ & $\Pr(f_I \neq 0|\exists I\in T_h)$ & \# of Rounds \\\hline\hline
1 & $k_{103} = k_{32}+k_{57}k_{58}+k_{59}$ & 1210 & 0.9996 &0.5005& 820 \\\hline
2 & $k_{104} = k_{31}+k_{56}k_{57}+k_{58}$ & 1432 & 0.9986 &0.5015& 820 \\\hline
3 & $k_{54}$ & 826 & 0.9851 &0.4883& 820 \\\hline
4 & $k_{102} = k_{33}+k_{58}k_{59}+k_{60}$ & 413 & 0.9621 &0.4699& 820 \\\hline
5 & $k_{89} = k_{46}+k_{71}k_{72}+k_{73}$ & 702 & 0.9610 &0.4712& 820 \\\hline
6 & $k_{116} = k_{19}+k_{44}k_{45}+k_{46}$ & 727 & 0.9570 &0.4702& 820 \\\hline
7 & $k_{117} = k_{18}+k_{43}k_{44}+k_{45}$ & 442 & 0.9567 &0.4776& 820 \\\hline
8 & $k_{124} = k_{11}+k_{36}k_{37}+k_{38}$ & 638 & 0.9455 &0.4625& 820 \\\hline
9 & $k_{111} = k_{24}+k_{49}k_{50}+k_{51}$ & 303 & 0.9424 &0.4789& 820 \\\hline
10 & $k_{91} = k_{44}+k_{69}k_{70}+k_{71}$ & 678 & 0.9395 &0.4725& 820 \\\hline
11 & $k_{115} = k_{20}+k_{45}k_{46}+k_{47}$ & 365 & 0.9291 &0.4596& 820 \\\hline
12 & $k_{87} = k_{48}+k_{73}k_{74}+k_{75}$ & 451 & 0.9277 &0.4566& 820 \\\hline
13 & $k_{56}$ & 338 & 0.9209 &0.4511& 820 \\\hline
14 & $k_{110} = k_{25}+k_{50}k_{51}+k_{52}$ & 512 & 0.9151 &0.4537& 820 \\\hline
15 & $k_{109} = k_{26}+k_{51}k_{52}+k_{53}$ & 405 & 0.9102 &0.4552& 820 \\\hline
16 & $k_{59}$ & 171 & 0.9028 &0.4549& 820 \\\hline
17 & $k_{123} = k_{12}+k_{37}k_{38}+k_{39}$ & 374 & 0.8825 &0.4308& 820 \\\hline
18 & $k_{106} = k_{29}+k_{54}k_{55}+k_{56}$ & 666 & 0.8692 &0.4198& 820 \\\hline
19 & $k_{90} = k_{45}+k_{70}k_{71}+k_{72}$ & 462 & 0.8659 &0.4214& 820 \\\hline
20 & $k_{88} = k_{47}+k_{72}k_{73}+k_{74}$ & 374 & 0.8609 &0.4235& 820 \\\hline
21 & $k_{58}$ & 231 & 0.8591 &0.4231& 820 \\\hline
22 & $k_{57}$ & 140 & 0.8584 &0.4181& 820 \\\hline
23 & $k_{63}$ & 976 & 0.8535 &0.4196& 820 \\\hline
24 & $k_{92} = k_{43}+k_{68}k_{69}+k_{70}$ & 452 & 0.8470 &0.4046& 820 \\\hline
25 & $k_{105} = k_{30}+k_{55}k_{56}+k_{57}$ & 416 & 0.8462 &0.4186& 820 \\\hline
26 & $k_{114} = k_{21}+k_{46}k_{47}+k_{48}$ & 102 & 0.8415 &0.4055& 820 \\\hline
27 & $k_{121} = k_{14}+k_{39}k_{40}+k_{41}$ & 170 & 0.8369 &0.4144& 820 \\\hline
28 & $k_{82} = k_{53}+k_{78}k_{79}$ & 290 & 0.7963 &0.3806& 820 \\\hline
29 & $k_{122} = k_{13}+k_{38}k_{39}+k_{40}$ & 189 & 0.7913 &0.3674& 820 \\\hline
30 & $k_{108} = k_{27}+k_{52}k_{53}+k_{54}$ & 124 & 0.7870 &0.3658& 820 \\\hline	
	\end{tabular}
 \label{tab.T}
\end{table}

\begin{table}[H]

	\caption{Set $\mathcal{T}_1$ for $820$-round attack}
	\centering  
	\begin{tabular}{|c|c|c|c|c|c|} 
		\hline
  No.  &$h$  & \#of \textit{ISoC}s &  $\Pr(0|0)$ & $\Pr(f_I \neq 0|\exists I\in T_h)$ & \#of Rounds \\\hline\hline
1 & $k_{55}$ & 99 & 0.76 &0.34& 820 \\\hline
2 & $k_{134} = k_{1}+k_{26}k_{27}+k_{28}$ & 115 & 0.7567 &0.337& 820 \\\hline
3 & $k_{93} = k_{42}+k_{67}k_{68}+k_{69}$ & 218 & 0.7326 &0.3167& 820 \\\hline
4 & $k_{125} = k_{10}+k_{35}k_{36}+k_{37}$ & 130 & 0.72563 &0.3137& 820 \\\hline
5 & $k_{84} = k_{51}+k_{76}k_{77}+k_{78}$ & 62 & 0.7183 &0.2922& 820 \\\hline
6 & $k_{83} = k_{52}+k_{77}k_{78}+k_{79}$ & 50 & 0.7111 &0.2865& 820 \\\hline
7 & $k_{61}$ & 54 & 0.6922 &0.2826& 820 \\\hline
8 & $k_{96} = k_{39}+k_{64}k_{65}+k_{66}$ & 648 & 0.6835 &0.2729& 820 \\\hline
9 & $k_{85} = k_{50}+k_{75}k_{76}+k_{77}$ & 27 & 0.6808 &0.2554& 820 \\\hline
10 & $k_{132} = k_{3}+k_{28}k_{29}+k_{30}$ & 33 & 0.6794 &0.2632& 820 \\\hline
11 & $k_{94} = k_{41}+k_{66}k_{67}+k_{68}$ & 68 & 0.6761 &0.2677& 820 \\\hline
12 & $k_{98} = k_{37}+k_{62}k_{63}+k_{64}$ & 334 & 0.6742 &0.2518& 820 \\\hline
13 & $k_{120} = k_{15}+k_{40}k_{41}+k_{42}$ & 32 & 0.6712 &0.256& 820 \\\hline
14 & $k_{60}$ & 62 & 0.6459 &0.2311& 820 \\\hline
15 & $k_{62}$ & 69 & 0.6457 &0.224& 820 \\\hline
16 & $k_{107} = k_{28}+k_{53}k_{54}+k_{55}$ & 29 & 0.6232 &0.1972& 820 \\\hline
17 & $k_{119} = k_{16}+k_{41}k_{42}+k_{43}$ & 92 & 0.6179 &0.1996& 820 \\\hline
18 & $k_{135} = k_{0}+k_{25}k_{26}+k_{27}$ & 13 & 0.6150 &0.1846& 820 \\\hline
19 & $k_{95} = k_{40}+k_{65}k_{66}+k_{67}$ & 14 & 0.6034 &0.1649& 820 \\\hline
20 & $k_{97} = k_{38}+k_{63}k_{64}+k_{65}$ & 23 & 0.6001 &0.1621& 820 \\\hline
21 & $k_{99} = k_{36}+k_{61}k_{62}+k_{63}$ & 54 & 0.5935 &0.1565& 820 \\\hline
22 & $k_{133} = k_{2}+k_{27}k_{28}+k_{29}$ & 18 & 0.5899 &0.1505& 820 \\\hline
23 & $k_{118} = k_{17}+k_{42}k_{43}+k_{44}$ & 8 & 0.5680 &0.1078& 820 \\\hline
24 & $k_{64}$ & 7 & 0.5533 &0.0946& 820 \\\hline
25 & $k_{131} = k_{4}+k_{29}k_{30}+k_{31}$ & 7 & 0.5506 &0.0898& 820 \\\hline
26 & $k_{65}$ & 5 & 0.5444 &0.0979& 820 \\\hline
27 & $k_{100} = k_{35}+k_{60}k_{61}+k_{62}$ & 14 & 0.5421 &0.0805& 820 \\\hline
28 & $k_{113} = k_{22}+k_{47}k_{48}+k_{49}$ & 7 & 0.5376 &0.0744& 820 \\\hline
29 & $k_{112} = k_{23}+k_{48}k_{49}+k_{50}$ & 4 & 0.5254 &0.0428& 820 \\\hline
30 & $k_{130} = k_{5}+k_{30}k_{31}+k_{32}$ & 2 & 0.5200 &0.0445& 820 \\\hline
31 & $k_{86} = k_{49}+k_{74}k_{75}+k_{76}$ & 2 & 0.5192 &0.0514& 820 \\\hline
 \end{tabular}
 \label{tab.T1}
 \end{table} 

\section{The Equations and Probabilities in the $825$-Round Attack}
\label{sec.tablefor825Roud}
Refer to Table \ref{tab.T_825} and Table \ref{tab.T1_825}, where $h$ is the factor, \# of \textit{ISoC}s denotes the number of \textit{ISoC}s in the set $T_h$, $T_h$ is the set containing all the \textit{ISoC}s the superpoly of which factored by $h$,  $\Pr(0|0)$ represents the probability $\Pr(h=0|f_I = 0\text{ for }\forall I\in T_h)$. 
\begin{table}[H]

	\caption{Set $\mathcal{T}$ for $825$-round attack}
	\centering  
	\begin{tabular}{|c|c|c|c|c|c|} 
		\hline
  No.  &$h$  & \#of \textit{ISoC}s &  $\Pr(0|0)$ & $\Pr(f_I \neq 0|\exists I\in T_h)$ & \#of Rounds \\\hline\hline
1 & $k_{88} = k_{47}+k_{72}k_{73}+k_{74}$ & 16 & 1 &0.5037& 825 \\\hline
2 & $k_{86} = k_{49}+k_{74}k_{75}+k_{76}$ & 726 & 0.9998 &0.5074& 825 \\\hline
3 & $k_{63}$ & 151 & 0.9966 &0.5029& 825 \\\hline
4 & $k_{109} = k_{26}+k_{51}k_{52}+k_{53}$ & 630 & 0.989425 &0.4988& 825 \\\hline
5 & $k_{55}$ & 102 & 0.9742 &0.4851& 825 \\\hline
6 & $k_{84} = k_{51}+k_{76}k_{77}+k_{78}$ & 250 & 0.9726 &0.4773& 825 \\\hline
7 & $k_{110} = k_{25}+k_{50}k_{51}+k_{52}$ & 570 & 0.9544 &0.4762& 825 \\\hline
8 & $k_{85} = k_{50}+k_{75}k_{76}+k_{77}$ & 287 & 0.9539 &0.4686& 825 \\\hline
9 & $k_{95} = k_{40}+k_{65}k_{66}+k_{67}$ & 738 & 0.9536 &0.4716& 825 \\\hline
10 & $k_{108} = k_{27}+k_{52}k_{53}+k_{54}$ & 450 & 0.9530 &0.4763& 825 \\\hline
11 & $k_{127} = k_{8}+k_{33}k_{34}+k_{35}$ & 419 & 0.9363 &0.4691& 825 \\\hline
12 & $k_{93} = k_{42}+k_{67}k_{68}+k_{69}$ & 193 & 0.9308 &0.4622& 825 \\\hline
13 & $k_{128} = k_{7}+k_{32}k_{33}+k_{34}$ & 181 & 0.9257 &0.4642& 825 \\\hline
14 & $k_{65}$ & 288 & 0.9176 &0.4648& 825 \\\hline
15 & $k_{64}$ & 142 & 0.9017 &0.4444& 825 \\\hline
16 & $k_{94} = k_{41}+k_{66}k_{67}+k_{68}$ & 165 & 0.8764 &0.4351& 825 \\\hline
17 & $k_{97} = k_{38}+k_{63}k_{64}+k_{65}$ & 265 & 0.8419 &0.4028& 825 \\\hline
18 & $k_{58}$ & 135 & 0.8365 &0.4075& 825 \\\hline
19 & $k_{92} = k_{43}+k_{68}k_{69}+k_{70}$ & 123 & 0.8362 &0.3969& 825 \\\hline
20 & $k_{129} = k_{6}+k_{31}k_{32}+k_{33}$ & 222 & 0.8207 &0.3939& 825 \\\hline
21 & $k_{56}$ & 91 & 0.8196 &0.3832& 825 \\\hline
22 & $k_{99} = k_{36}+k_{61}k_{62}+k_{63}$ & 183 & 0.8161 &0.3866& 825 \\\hline
23 & $k_{113} = k_{22}+k_{47}k_{48}+k_{49}$ & 85 & 0.8160 &0.3902& 825 \\\hline
24 & $k_{107} = k_{28}+k_{53}k_{54}+k_{55}$ & 112 & 0.8132 &0.3848& 825 \\\hline
25 & $k_{112} = k_{23}+k_{48}k_{49}+k_{50}$ & 108 & 0.8066 &0.3765& 825 \\\hline
26 & $k_{116} = k_{19}+k_{44}k_{45}+k_{46}$ & 72 & 0.7894 &0.3577& 825 \\\hline
27 & $k_{126} = k_{9}+k_{34}k_{35}+k_{36}$ & 45 & 0.7816 &0.3521& 825 \\\hline
28 & $k_{62}$ & 57 & 0.7797 &0.3573& 825 \\\hline
29 & $k_{111} = k_{24}+k_{49}k_{50}+k_{51}$ & 131 & 0.7745 &0.3659& 825 \\\hline
30 & $k_{114} = k_{21}+k_{46}k_{47}+k_{48}$ & 125 & 0.7729 &0.3527& 825 \\\hline
31 & $k_{115} = k_{20}+k_{45}k_{46}+k_{47}$ & 101 & 0.7717 &0.3494& 825 \\\hline

\end{tabular}
\label{tab.T_825}
 \end{table} 

\begin{table}[H]

	\caption{Set $\mathcal{T}_1$ for $825$-round attack}
	\centering  
	\begin{tabular}{|c|c|c|c|c|c|} 
		\hline
  No.  &$h$  & \#of \textit{ISoC}s &  $\Pr(0|0)$ & $\Pr(f_I \neq 0|\exists I\in T_h)$ & \#of Rounds \\\hline\hline
1 & $k_{96} = k_{39}+k_{64}k_{65}+k_{66}$ & 159 & 0.767686 &0.3526& 825 \\\hline
2 & $k_{82} = k_{53}+k_{78}k_{79}$ & 35 & 0.73722 &0.331& 825 \\\hline
3 & $k_{122} = k_{13}+k_{38}k_{39}+k_{40}$ & 93 & 0.736827 &0.3206& 825 \\\hline
4 & $k_{83} = k_{52}+k_{77}k_{78}+k_{79}$ & 61 & 0.71214 &0.2875& 825 \\\hline
5 & $k_{57}$ & 38 & 0.699972 &0.2864& 825 \\\hline
6 & $k_{89} = k_{46}+k_{71}k_{72}+k_{73}$ & 24 & 0.697885 &0.2718& 825 \\\hline
7 & $k_{98} = k_{37}+k_{62}k_{63}+k_{64}$ & 72 & 0.661855 &0.2379& 825 \\\hline
8 & $k_{121} = k_{14}+k_{39}k_{40}+k_{41}$ & 28 & 0.658825 &0.2561& 825 \\\hline
9 & $k_{120} = k_{15}+k_{40}k_{41}+k_{42}$ & 16 & 0.642728 &0.223& 825 \\\hline
10 & $k_{87} = k_{48}+k_{73}k_{74}+k_{75}$ & 29 & 0.641839 &0.2146& 825 \\\hline
11 & $k_{59}$ & 38 & 0.627119 &0.2153& 825 \\\hline
12 & $k_{54}$ & 16 & 0.622961 &0.1908& 825 \\\hline
13 & $k_{117} = k_{18}+k_{43}k_{44}+k_{45}$ & 11 & 0.607733 &0.1776& 825 \\\hline
14 & $k_{130} = k_{5}+k_{30}k_{31}+k_{32}$ & 20 & 0.597595 &0.1685& 825 \\\hline
15 & $k_{125} = k_{10}+k_{35}k_{36}+k_{37}$ & 4 & 0.593281 &0.1606& 825 \\\hline
16 & $k_{103} = k_{32}+k_{57}k_{58}+k_{59}$ & 32 & 0.580514 &0.1399& 825 \\\hline
17 & $k_{60}$ & 15 & 0.574038 &0.1349& 825 \\\hline
18 & $k_{90} = k_{45}+k_{70}k_{71}+k_{72}$ & 8 & 0.573292 &0.1261& 825 \\\hline
19 & $k_{101} = k_{34}+k_{59}k_{60}+k_{61}$ & 7 & 0.567661 &0.1317& 825 \\\hline
20 & $k_{118} = k_{17}+k_{42}k_{43}+k_{44}$ & 5 & 0.558273 &0.0922& 825 \\\hline
21 & $k_{102} = k_{33}+k_{58}k_{59}+k_{60}$ & 10 & 0.554529 &0.0803& 825 \\\hline
22 & $k_{61}$ & 4 & 0.545475 &0.0896& 825 \\\hline
23 & $k_{91} = k_{44}+k_{69}k_{70}+k_{71}$ & 5 & 0.540871 &0.0837& 825 \\\hline
24 & $k_{119} = k_{16}+k_{41}k_{42}+k_{43}$ & 2 & 0.539662 &0.0835& 825 \\\hline
25 & $k_{123} = k_{12}+k_{37}k_{38}+k_{39}$ & 4 & 0.537047 &0.0647& 825 \\\hline
26 & $k_{100} = k_{35}+k_{60}k_{61}+k_{62}$ & 4 & 0.526177 &0.0526& 825 \\\hline
27 & $k_{106} = k_{29}+k_{54}k_{55}+k_{56}$ & 1 & 0.520648 &0.0314& 825 \\\hline
28 & $k_{124} = k_{11}+k_{36}k_{37}+k_{38}$ & 1 & 0.520483 &0.0236& 825 \\\hline
29 & $k_{135} = k_{0}+k_{25}k_{26}+k_{27}$ & 2 & 0.515787 &0.0277& 825 \\\hline
30 & $k_{104} = k_{31}+k_{56}k_{57}+k_{58}$ & 1 & 0.505586 &0.0154& 825 \\\hline

\end{tabular}
\label{tab.T1_825}
 \end{table} 

 \section{The Equations and Probabilities in the $830$-Round Attack}
\label{sec.tablefor830Roud}
Refer to Table \ref{tab.T_830} and Table \ref{tab.T1_830}, where $h$ is the factor, \# of \textit{ISoC}s denotes the number of \textit{ISoC}s in the set $T_h$, $T_h$ is the set containing all the \textit{ISoC}s the superpoly of which factored by $h$,  $\Pr(0|0)$ represents the probability $\Pr(h=0|f_I = 0\text{ for }\forall I\in T_h)$. 
\begin{table}[H]
	\caption{Set $\mathcal{T}$ for $830$-round attack}
	\centering  
	\begin{tabular}{|c|c|c|c|c|c|} 
		\hline
  No.  &$h$  & \#of \textit{ISoC}s &  $\Pr(0|0)$ & $\Pr(f_I \neq 0|\exists I\in T_h)$ & \#of Rounds \\\hline\hline
  1 & $k_{114} = k_{21}+k_{46}k_{47}+k_{48}$ & 2058 & 0.9996 &0.4995& 830 \\\hline
2 & $k_{61}$ & 826 & 0.984731 &0.4957& 830 \\\hline
3 & $k_{101} = k_{34}+k_{59}k_{60}+k_{61}$ & 1280 & 0.958204 &0.4856& 830 \\\hline
4 & $k_{59}$ & 493 & 0.947987 &0.4809& 830 \\\hline
5 & $k_{115} = k_{20}+k_{45}k_{46}+k_{47}$ & 1678 & 0.93623 &0.4637& 830 \\\hline
6 & $k_{57}$ & 477 & 0.932773 &0.4645& 830 \\\hline
7 & $k_{100} = k_{35}+k_{60}k_{61}+k_{62}$ & 1193 & 0.910669 &0.4526& 830 \\\hline
8 & $k_{58}$ & 244 & 0.880441 &0.4371& 830 \\\hline
9 & $k_{120} = k_{15}+k_{40}k_{41}+k_{42}$ & 1102 & 0.88031 &0.4327& 830 \\\hline
10 & $k_{99} = k_{36}+k_{61}k_{62}+k_{63}$ & 386 & 0.868343 &0.4235& 830 \\\hline
11 & $k_{89} = k_{46}+k_{71}k_{72}+k_{73}$ & 885 & 0.86694 &0.4138& 830 \\\hline
12 & $k_{88} = k_{47}+k_{72}k_{73}+k_{74}$ & 912 & 0.859543 &0.4226& 830 \\\hline
13 & $k_{132} = k_{3}+k_{28}k_{29}+k_{30}$ & 1117 & 0.838526 &0.403& 830 \\\hline
14 & $k_{98} = k_{37}+k_{62}k_{63}+k_{64}$ & 296 & 0.825532 &0.389& 830 \\\hline
15 & $k_{118} = k-{17}+k_{42}k_{43}+k_{44}$ & 318 & 0.822193 &0.3836& 830 \\\hline
16 & $k_{90} = k_{45}+k_{70}k_{71}+k_{72}$ & 327 & 0.818627 &0.388& 830 \\\hline
17 & $k_{84} = k_{51}+k_{76}k_{77}+k_{78}$ & 109 & 0.816969 &0.3777& 830 \\\hline
18 & $k_{133} = k_{2}+k_{27}k_{28}+k_{29}$ & 496 & 0.81071 &0.3819& 830 \\\hline
19 & $k_{85} = k_{50}+k_{75}k_{76}+k_{77}$ & 179 & 0.80079 &0.367& 830 \\\hline
20 & $k_{60}$ & 468 & 0.798007 &0.3777& 830 \\\hline
21 & $k_{102} = k_{33}+k_{58}k_{59}+k_{60}$ & 757 & 0.786552 &0.3516& 830 \\\hline
22 & $k_{62}$ & 258 & 0.785423 &0.362& 830 \\\hline
23 & $k_{131} = k_{4}+k_{29}k_{30}+k_{31}$ & 254 & 0.783615 &0.3604& 830 \\\hline
24 & $k_{119} = k_{16}+k_{41}k_{42}+k_{43}$ & 239 & 0.781359 &0.367& 830 \\\hline
25 & $k_{116} = k_{19}+k_{44}k_{45}+k_{46}$ & 250 & 0.776892 &0.3474& 830 \\\hline
 \end{tabular}
\label{tab.T_830}
 \end{table} 
 \begin{table}[H]

	\caption{Set $\mathcal{T}_1$ for $830$-round attack}
	\centering  
	\begin{tabular}{|c|c|c|c|c|c|} 
		\hline
  No.  &$h$  & \#of \textit{ISoC}s &  $\Pr(0|0)$ & $\Pr(f_I \neq 0|\exists I\in T_h)$ & \#of Rounds \\\hline\hline
  1 & $ k_{113} = k_{22}+k_{47}k_{48}+k_{49} $ & 176 & 0.768494 &0.3525& 830 \\\hline
2 & $ k_{91}= k_{44}+k_{69}k_{70}+k_{71} $ & 132 & 0.764579 &0.3518& 830 \\\hline
3 & $ k_{87} = k_{48}+k_{73}k_{74}+k_{75} $ & 375 & 0.760676 &0.3373& 830 \\\hline
4 & $ k_{121} = k_{14}+k_{39}k_{40}+k_{41} $ & 273 & 0.760081 &0.3552& 830 \\\hline
5 & $ k_{56} $ & 87 & 0.743492 &0.3201& 830 \\\hline
6 & $ k_{103} = k_{32}+k_{57}k_{58}+k_{59} $  & 517 & 0.737083 &0.3226& 830 \\\hline
7 & $ k_{86} = k_{49}+k_{74}k_{75}+k_{76} $ & 70 & 0.686507 &0.2826& 830 \\\hline
8 & $ k_{127} = k_{8}+k_{33}k_{34}+k_{35} $ & 131 & 0.671031 &0.2592& 830 \\\hline
9 & $ k_{97} = k_{38}+k_{63}k_{64}+k_{65} $ & 149 & 0.668884 &0.2483& 830 \\\hline
10 & $ k_{63} $ & 265 & 0.666039 &0.2562& 830 \\\hline
11 & $ k_{82} = k_{53}+k_{78}k_{79} $  & 181 & 0.665407 &0.2588& 830 \\\hline
12 & $ k_{117} = k_{18}+k_{43}k_{44}+k_{45} $ & 170 & 0.631779 &0.2089& 830 \\\hline
13 & $ k_{93} = k_{42}+k_{67}k_{68}+k_{69} $ & 20 & 0.630082 &0.2055& 830 \\\hline
14 & $ k_{126} = k_{9}+k_{34}k_{35}+k_{36} $ & 119 & 0.626578 &0.1918& 830 \\\hline
15 & $ k_{83} = k_{52}+k_{77}k_{78}+k_{79} $ & 50 & 0.61258 &0.1717& 830 \\\hline
16 & $ k_{108} = k_{27}+k_{52}k_{53}+k_{54} $ & 66 & 0.596439 &0.1632& 830 \\\hline
17 & $ k_{54} $ & 146 & 0.586709 &0.1408& 830 \\\hline
18 & $ k_{129} = k+{6}+k_{31}k_{32}+k_{33} $ & 62 & 0.582027 &0.1454& 830 \\\hline
19 & $ k_{122} = k_{13}+k_{38}k_{39}+k_{40} $ & 25 & 0.578461 &0.1346& 830 \\\hline
20 & $ k_{107} = k_{28}+k_{53}k_{54}+k_{55} $ & 87 & 0.574397 &0.129& 830 \\\hline
21 & $ k_{134} = k_{1}+k_{26}k_{27}+k_{28} $ & 44 & 0.574093 &0.1261& 830 \\\hline
22 & $ k_{92} = k_{43}+k_{68}k_{69}+k_{70} $ & 18 & 0.573785 &0.1211& 830 \\\hline
23 & $ k_{55} $ & 22 & 0.573388 &0.1252& 830 \\\hline
24 & $ k_{130} = k+{5}+k_{30}k_{31}+k_{32} $ & 111 & 0.564082 &0.1191& 830 \\\hline
25 & $ k_{106} = k_{29}+k_{54}k_{55}+k_{56} $ & 42 & 0.559277 &0.0983& 830 \\\hline
26 & $ k_{68} $ & 21 & 0.552496 &0.0885& 830 \\\hline
27 & $ k_{104} = k_{31}+k_{56}k_{57}+k_{58} $ & 17 & 0.541853 &0.0813& 830 \\\hline
28 & $ k_{64} $ & 33 & 0.541563 &0.0749& 830 \\\hline
29 & $ k_{109} = k_{26}+k_{51}k_{52}+k_{53} $ & 38  & 0.528171 &0.0611& 830 \\\hline
30 & $ k_{125} = k_{10}+k_{35}k_{36}+k_{37} $ & 9 & 0.522999 &0.0478& 830 \\\hline
31 & $ k_{110} = k_{25}+k_{50}k_{51}+k_{52} $ & 9 & 0.522362 &0.043& 830 \\\hline
32 & $ k_{112} = k_{23}+k_{48}k_{49}+k_{50} $ & 3 & 0.518507 &0.0301& 830 \\\hline
33 & $ k_{135} = k+{0}+k_{25}k_{26}+k_{27} $ & 3 & 0.512572 &0.0216& 830 \\\hline
34 & $ k_{124} = k_{11}+k_{36}k_{37}+k_{38} $ & 2 & 0.512195 &0.0078& 830 \\\hline
35 & $ k_{95} = k_{40}+k_{65}k_{66}+k_{67} $ & 2 & 0.510278 &0.0125& 830 \\\hline
36 & $ k_{128} = k_{7}+k_{32}k_{33}+k_{34} $ & 5 & 0.509816 &0.0271& 830 \\\hline
37 & $ k_{123} = k_{12}+k_{37}k_{38}+k_{39} $ & 2 & 0.506351 &0.008& 830 \\\hline
38 & $ k_{94} = k_{41}+k_{66}k_{67}+k_{68} $ & 2 & 0.501012 &0.0118& 830 \\\hline
39 & $ k_{96} = k_{39}+k_{64}k_{65}+k_{66} $ & 1 & 0.49995 &0.0059& 830 \\\hline
40 & $ k_{105} = k_{30}+k_{55}k_{56}+k_{57} $ & 5 & 0.499898 &0.0158& 830 \\\hline
41 & $ k_{65} $ & 1 & 0.49772 &0.0133& 830 \\\hline
 \end{tabular}
\label{tab.T1_830}
 \end{table}

\section{Hypothesis Testing for Success Probability}
\label{sec.hypothesis}
Hypothesis testing is a statistical technique used to make decisions or draw conclusions about a population based on sample data. We have treated the proportion denoted as $P$ in the Table \ref{tab:res820}, \ref{tab:res825}, \ref{tab:res830} as the expected success probability. Consequently, we've formulated H0, the null hypothesis, positing that the success rate equals $P$. Our chosen significance level, alpha, is set at 0.01. The Table \ref{tab:hypo820}, \ref{tab:hypo825}, \ref{tab:hypo830} presents the outcomes derived from randomly sampling 10,000 keys and recording the instances of success. Utilizing Python, we conducted a binomial distribution hypothesis test to scrutinize whether the frequency of successful key recoveries significantly deviates from the anticipated success probability, P. The results indicate that we do not have grounds to reject the null hypothesis H0, i.e., $P$ is highly related with the success probability.
\begin{table}[h]
    \caption{The number of successful key recoveries for 820 rounds.}
    \label{tab:hypo820}
    \centering
    \footnotesize
    \setlength{\tabcolsep}{6.8pt}
    \renewcommand{\arraystretch}{1.1}
    \begin{tabular}{|c|c|c|c|c|c|} 
		\hline
		$\mathcal{C}$ &  $2^{52}$	&$2^{54}$ &$2^{56}$ &$2^{58}$ & $2^{60}$\\\hline
         \# &  5732 & 6877 & 7638 & 8255	& 8698	\\\hline
    \end{tabular}
\end{table}

\begin{table}[h]
    \caption{The number of successful key recoveries for 825 rounds.}
    \label{tab:hypo825}
    \centering
    \footnotesize
    \setlength{\tabcolsep}{6.8pt}
    \renewcommand{\arraystretch}{1.1}
    \begin{tabular}{|c|c|c|c|c|} 
		\hline
		$\mathcal{C}$ &  $2^{54}$	&$2^{56}$  &$2^{58}$ & $2^{60}$\\\hline
         \# &  6022  & 7066 & 7763	&8313	\\\hline
    \end{tabular}
\end{table}

\begin{table}[h]
    \caption{The number of successful key recoveries for 830 rounds.}
    \label{tab:hypo830}
    \centering
    \footnotesize
    \setlength{\tabcolsep}{6.8pt}
    \renewcommand{\arraystretch}{1.1}
    \begin{tabular}{|c|c|c|c|c|c|c|} 
		\hline
		$\mathcal{C}$ &  $2^{55}$	&$2^{56}$ &$2^{57}$ &$2^{58}$ &$2^{59}$	& $2^{60}$\\\hline
         \# &  4601 & 5021  &5447	&5824 	&6189	&6575\\\hline
    \end{tabular}
\end{table}

\end{document}